\documentclass[11pt]{article}

\usepackage[margin=1in]{geometry}
\usepackage{amsmath,amssymb,amsthm}
\usepackage{mathtools}
\usepackage{microtype}
\usepackage{booktabs}
\usepackage{array}
\usepackage{etoolbox}
\usepackage[dvipsnames]{xcolor}
\usepackage{comment}
\usepackage{enumitem}

\definecolor{linkblue}{rgb}{0.0,0.0,0.55}
\definecolor{citegreen}{rgb}{0.0,0.35,0.0}
\usepackage[colorlinks=true,linkcolor=linkblue,citecolor=citegreen,urlcolor=linkblue]{hyperref}

\newtheorem{theorem}{Theorem}[section]
\newtheorem{lemma}[theorem]{Lemma}
\newtheorem{proposition}[theorem]{Proposition}
\newtheorem{corollary}[theorem]{Corollary}

\theoremstyle{definition}
\newtheorem{definition}[theorem]{Definition}
\newtheorem{remark}[theorem]{Remark}

\newcommand{\E}{\mathbb{E}}
\newcommand{\F}{\mathbb{F}}

\newcommand{\N}{\mathbb{N}}

\newcommand{\eps}{\varepsilon}
\newcommand{\dis}{\operatorname{dis}}
\newcommand{\wt}{\operatorname{wt}}
\newcommand{\supp}{\operatorname{supp}}

\newcommand{\CSS}{\mathrm{CSS}}
\newcommand{\Enc}{\mathrm{Enc}}

\newcommand{\QLR}{\textnormal{-QLR}}

\newcommand{\C}{\mathbb{C}}

\newcommand{\FRS}{\mathsf{FRS}}
\newcommand{\FQRS}{\mathsf{FQRS}}
\newcommand{\ev}{\operatorname{ev}}

\newcommand{\Tr}{\operatorname{Tr}}

\newcommand{\ket}[1]{\lvert #1\rangle}

\newtoggle{showcomments}
\newtoggle{showedits}

\toggletrue{showcomments} % Controls \xmc and \fc
\toggletrue{showedits}    % Controls \xmt and \ft

\title{Quantum List Recovery and Decoding: Achievability and Limitations}
\author{
  Fernando Granha Jeronimo\thanks{Siebel School of Computing and Data Science,
    University of Illinois Urbana-Champaign.
    \texttt{\{granha, xm20\}@illinois.edu}} \and Xiaojuan Ma\footnotemark[1]
}
\date{}

\providecommand{\GRS}{\mathrm{GRS}}
\providecommand{\RS}{\mathrm{RS}}

\begin{document}
\maketitle

\begin{abstract}
Quantum list recovery (QLR) and quantum list decoding (QLD) seek
short lists of logically distinct Pauli corrections consistent with a
syndrome and prescribed error constraints.  For CSS codes, two issues
arise relative to the classical setting: treating the \(X\)- and
\(Z\)-sectors separately can multiply their output list sizes, while
distinct classical candidates can collapse after quotienting by
stabilizers.

In this paper, we study combinatorial upper and lower bounds for
balanced folded quantum Reed--Solomon (FQRS) codes and balanced random
CSS codes, both studied by Bergamaschi, Golowich, and
Gunn (STOC 24).  For a fixed quantum rate \(R\in(0,1)\), let
\(R_1=(1+R)/2\) be the common component rate.  As the radius
\(\rho=(1-R)/2-\gamma\) approaches the quantum Singleton bound, the optimal worst-case list sizes for these code families satisfy
\[
  L^\star_{\rm QLR}
  =
  \ell^{\Theta(R_1/\gamma)},
  \qquad
  L^\star_{\rm QLD}
  =
  \Theta\!\left(\frac{1-R}{\gamma}\right)
  \qquad (\gamma\downarrow0).
\]
For every fixed list size \(L\ge1\), we also obtain the asymptotically
exact QLD radius tradeoff
\[
  \rho_L^\star
  =
  \frac{L}{L+1}\frac{1-R}{2}.
\]
These conclusions hold deterministically for FQRS codes and with high
probability for balanced random CSS codes, and extend to the
average-radius setting under the corresponding parameter conditions.

For achievability, building on the entropy and remainder inequalities
of Brakensiek, Chen, Dhar, and Zhang (STOC 2026), we establish a
pairing lemma for joint \(X/Z\) candidate lists that preserves the
one-sector coefficient and thereby avoids a product loss in list size.
For the QLD converse, we prove a quantum generalized Singleton bound
based on the classical projection-and-patching argument
with distinctness measured modulo the stabilizer. For the QLR lower bounds, we adapt the 
folded Reed--Solomon bad-list construction of Chen and Zhang (STOC 2025) so that the candidates
remain distinct modulo the stabilizer, and show separately that
classical bad lists survive the stabilizer quotient with high
probability for random CSS codes.
\end{abstract}

\clearpage
\begingroup
\small
\tableofcontents
\endgroup
\clearpage

\section{Introduction}
A central problem in quantum error correction is to understand how much
adversarial noise can be tolerated at a given coding rate.  For a quantum code of rate \(R\), the quantum Singleton bound limits
the fraction of correctable erasures to \((1-R)/2\).  For exact quantum error correction, the Knill--Laflamme
conditions imply that the adversarial decoding radius is at most half the
distance, and therefore at most \((1-R)/4\)~\cite{KL97,BGG24}.

Approximate quantum error correction can behave differently.
\cite{CGS05} constructed approximate quantum codes that allow
exponentially small error in the recovered state and whose decoding
radius essentially matches the quantum Singleton bound.  Their
construction, however, has asymptotically vanishing rate and requires an
exponentially large alphabet.  Later, Bergamaschi, Golowich, and
Gunn~\cite{BGG24} constructed efficiently decodable approximate quantum
codes over constant-size alphabets with decoding radius approaching
\((1-R)/2\) for every \(R\in(0,1)\).

\cite{BGG24} developed a stabilizer-code formulation of quantum list
decoding by counting the number of logically distinct low-weight Pauli
errors consistent with a given syndrome.  They then introduced folded
quantum Reed--Solomon codes, which are efficiently list decodable with
decoding radius approaching the quantum Singleton bound.  They further
developed a quantum version of the expander-based distance amplification
and alphabet reduction techniques of~\cite{AEL95} to obtain
constant-alphabet quantum list-decodable codes.  They also
introduced a quantum notion of \emph{list recovery}, where one seeks a
small number of logically distinct operators consistent with short lists
of candidate single-qudit Pauli operators on most coordinates.  This
primitive is used in their concatenated and distance-amplified
constructions: inner list decoding produces coordinate-wise candidate
logical errors, which are then list recovered by the outer stabilizer
code.  
Together, quantum list decoding and list recovery enable error
correction beyond the unique-decoding radius and play a role in
constructions of approximate quantum error-correcting codes.  Motivated
by these roles, we study quantitative combinatorial bounds for both
quantum list decoding and quantum list recovery.

Classically, list decoding~\cite{Eli57,Woz58} and list recovery have
seen long lines of combinatorial and algorithmic results.

\noindent\textbf{List decoding.}
In their foundational works, Guruswami and
Sudan~\cite{Sud97,GS98} gave the first list-decoding algorithms for
Reed--Solomon codes up to the Johnson bound.  The later breakthrough
work of Guruswami and Rudra~\cite{GR06Capacity,GR08} showed that folded
Reed--Solomon codes can approach list-decoding capacity.  Since then,
a sequence of works has substantially improved the output list size for
folded Reed--Solomon codes, from
\((1/\varepsilon)^{O(1/\varepsilon)}\)
~\cite{KRSW18,KRSW23,Tam24}, to \(O(1/\varepsilon^2)\)
~\cite{Sri25}, and finally to the asymptotically optimal
\(O(1/\varepsilon)\) bound of Chen and Zhang~\cite{CZ25}.
Guruswami and Narayanan~\cite{GN14} also considered a stronger notion of \emph{average-radius list
decoding}, which
controls the average distance of a candidate list from the received
word.

Shangguan and Tamo~\cite{ST20}, and independently
Roth~\cite{Rot22}, considered a generalized Singleton bound for
combinatorial list decoding.  In its asymptotic form, if a
rate-\(R_\mathrm{c}\) code has output list size at most \(L\) at decoding radius
\(\rho\), then
\(
    \rho\le \frac{L}{L+1}(1-R_\mathrm{c}).
\)
For \(L=1\), this recovers the usual Singleton limitation on the
unique-decoding radius.  More generally, it gives a fine-grained
tradeoff between decoding radius and list size; in particular, at
radius \(1-R_\mathrm{c}-\eps\), the bound requires
\(
    L\ge \frac{1-R_\mathrm{c}-\eps}{\eps}.
\)

Considerable work has gone into approaching this tradeoff.
Shangguan and Tamo~\cite{ST20} gave an explicit Reed--Solomon
construction attaining the generalized Singleton bound for output list
size \(2\).  Subsequent works showed that random puncturings of
Reed--Solomon and related codes approach an \(\eps\)-relaxed
average-radius version of the generalized Singleton bound
~\cite{BGM23,GZ23,AGL24}, with the required alphabet size reduced from
exponential in the block length to polynomial and eventually linear.
\cite{BDG24} showed that the exact attainment of the generalized Singleton bound for Reed--Solomon
codes can require exponentially large alphabets. Therefore, an
\(\eps\)-relaxation is necessary for smaller alphabet
sizes. The work \cite{CZ25} later showed that
folded Reed--Solomon codes explicitly and algorithmically approach the
relaxed average-radius generalized Singleton bound.

\noindent\textbf{List recovery.}
For list recovery, both structured and random linear codes have been
studied extensively.  For folded Reed--Solomon codes~\cite{Kra03,GR06Capacity},
at radius \(1-R_\mathrm{c}-\eps\), a sequence of works reduced the output list
size from \((n/\eps^2)^{O(\log\ell/\eps^2)}\) in~\cite{GR06Capacity}, to
\((n\ell/\eps)^{O(\ell/\eps)}\) in~\cite{GW13}, and then to
\((\ell/\eps)^{O((1+\log\ell)/\eps)}\)
in~\cite{KRSW18,KRSW23,Tam24}.  Most recently, Brakensiek,
Chen, Dhar, and Zhang~\cite{BCDZ26BL} obtained the bound
\((\ell/(R_\mathrm{c}+\eps))^{O(R_\mathrm{c}/\eps)}\).

For random linear codes, the Zyablov--Pinsker argument~\cite{ZP81}
gives output list size \(q^{O(\ell/\eps)}\) over alphabets of size
\(q=\ell^{O(1/\eps)}\) at radius \(1-R_\mathrm{c}-\eps\).
Rudra and Wootters~\cite{RW18} improved this to
\(q^{O(\log^2\ell/\eps)}\), and Li and Shagrithaya~\cite{LS25}
obtained the alphabet-independent bound
\((\ell/\eps)^{O(\ell/\eps)}\).  \cite{BCDZ26BL} further obtained
\((\ell/(R_\mathrm{c}+\eps))^{O(R_\mathrm{c}/\eps)}\).

The work of~\cite{BCDZ26BL}
provides a unified route to these near-capacity bounds, combining
discrete Brascamp--Lieb inequalities and subspace-design structure
with reductions to random linear and random Reed--Solomon codes.
Their results apply to explicit folded Reed--Solomon and univariate
multiplicity codes, as well as random linear and random Reed--Solomon
codes.

% \noindent\textbf{List-recovery lower bounds.}
On the converse side, list-recovery lower bounds developed through a
sequence of increasingly general settings.  In the zero-error regime,
Guruswami and Rudra~\cite{GR06Limits} showed that many full-length
Reed--Solomon codes require \(R_\mathrm{c}\le 1/\ell\) in order to have
polynomial output list size.  A series of later works showed that
approaching list-recovery capacity can require output list size
\(\ell^{\Omega(1/\eps)}\) for several important code families:
random linear codes in the high-rate zero-error regime
~\cite{GLMRSW22}, random linear codes in general parameter settings
~\cite{LMS25}, and Reed--Solomon, folded Reed--Solomon, and
multiplicity codes in general parameter settings~\cite{CZ25}.
Li and Shagrithaya~\cite{LS25} subsequently established a general
lower bound for arbitrary linear codes: at radius \(1-R_\mathrm{c}-\eps\),
every sufficiently long rate-\(R_\mathrm{c}\) linear code requires output list
size
\(
    L=\ell^{\Omega(R_\mathrm{c}/\eps)}.
\)
Their proof constructs a product-type bad list from
\(\Theta(R_\mathrm{c}/\eps)\) independent directions, with \(\ell\) choices
along each direction, yielding \(\ell^{\Theta(R_\mathrm{c}/\eps)}\) global
candidates while using at most \(\ell\) symbols on each coordinate of
a sufficiently large agreement set.  

The problem of determining the optimal rate--radius tradeoff for list
recovery, or equivalently a tight generalized Singleton bound, has
remained open, with partial progress over recent years
~\cite{GST24,RYZ24,CZ25,LMS25,LS25}.  Very recently, Brakensiek, Chen,
Putterman, and Zhang~\cite{BCPZ26} established a tight generalized
Singleton bound for list recovery.  They showed that, in the large-alphabet regime, the asymptotically
optimal rate threshold for \((\rho,\ell,L)\)-list recovery is
\(R_\mathrm{c}=(L+1-\ell)/L-(L+1)\rho/L\).
For \(\ell=1\), this recovers the generalized Singleton bound for list
decoding.
% Thus the Brascamp--Lieb upper bound above matches the known
% linear-code lower bound at the exponent scale.

\subsection{Quantum CSS code: Two questions}
These classical results naturally raise the question of whether similarly
combinatorial bounds hold for quantum list decoding and list recovery.
The translation is not immediate, however, because the objects being counted
in the quantum setting are different.  Classical list decoding returns a
short list of candidate codewords, whereas for a stabilizer code the encoded
quantum state itself is not revealed.  Instead, measuring the stabilizer
generators produces an error \emph{syndrome}, and the decoder outputs a list
of possible logically distinct Pauli errors consistent with that syndrome.
Here ``logically distinct'' is essential.  Pauli errors that differ by a
stabilizer act identically on the code space and therefore correspond to
the same logical correction, whereas errors that differ by a nontrivial
logical Pauli must remain distinct.  Thus quantum list decoding counts
stabilizer-distinct error classes rather than ordinary distinct Pauli
operators.

Calderbank--Shor--Steane (CSS) codes~\cite{CS96,Ste96} make this
stabilizer-decoding problem especially transparent in classical coding
terms.  A CSS code is built from two classical linear codes
\(C_1,C_2\subseteq\F_q^n\) satisfying \(C_2^\perp\subseteq C_1\).
The code \(C_1\) governs the \(X\)-component of a Pauli error, while
\(C_2\) governs the \(Z\)-component.  For two errors with the same
syndrome, their difference lies in \(C_1\oplus C_2\), while two such
errors represent the same logical correction precisely when their
difference lies in \(C_2^\perp\oplus C_1^\perp\).  Hence the logically
distinct corrections are naturally indexed by
\[
    (C_1\oplus C_2)/(C_2^\perp\oplus C_1^\perp).
\]
Thus CSS codes provide a bridge from quantum list decoding and
recovery to classical linear coding.

However, although CSS codes are built from classical codes, the classical
results above do not automatically transfer to quantum CSS codes because
of two additional features: the two-sector structure of Pauli errors and
stabilizer equivalence.

\begin{enumerate}
    \item \textbf{Two-sector structure:}
\emph{Can the \(X\)- and \(Z\)-sectors be analyzed jointly and avoiding the product of the two
marginal list sizes?}

A natural way to use classical list-recovery results is to treat the two
Pauli sectors separately.  Each register \(i\) comes with a list
\[
    \mathcal E_i\subseteq \F_q^s\times\F_q^s,
    \qquad |\mathcal E_i|\le \ell,
\]
of candidate Pauli symbols \((x,z)\).  Projecting \(\mathcal E_i\) onto
its two coordinates gives an \(X\)-list and a \(Z\)-list, each still of
size at most \(\ell\).  If the two corresponding marginal quotient problems each have
output list size at most \(L\), one may then combine the two outputs
by taking their Cartesian product.  This gives the following
general bound from~\cite{BGG24}:

\begin{proposition}[Sectorwise product bound;
  cf.~{\cite[Claim~B.3]{BGG24}}]%
\footnote{Numbering of results and sections cited
from~\cite{BGG24,BCDZ26BL,BCDZ26Matroid} follows the full
arXiv versions throughout.}
Let \(C_1,C_2\subseteq(\F_q^s)^n\) be \(\F_q\)-linear with
\(C_2^\perp\subseteq C_1\).
If \(C_1/C_2^\perp\) and \(C_2/C_1^\perp\) are each
\((\rho,\ell,L)\)-list recoverable, then
\(\CSS(C_1,C_2)\) is \((\rho,\ell,L^2)\)\QLR.
\end{proposition}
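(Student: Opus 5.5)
The plan is to reduce the joint quantum list-recovery problem to the two marginal quotient problems and then count. The key observation is that the CSS logical classes split as a product: the quotient \((C_1\oplus C_2)/(C_2^\perp\oplus C_1^\perp)\) is canonically isomorphic to \((C_1/C_2^\perp)\times(C_2/C_1^\perp)\). So a pair \((e_X,e_Z)\) is logically trivial exactly when \(e_X\in C_2^\perp\) and \(e_Z\in C_1^\perp\).

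Fix the input: a syndrome, equivalently a coset \((a,b)+(C_1\oplus C_2)\) of consistent Paulis, together with lists \(\mathcal E_i\subseteq\F_q^s\times\F_q^s\) with \(|\mathcal E_i|\le\ell\). Consider the set \(\mathcal L\) of logical classes containing a Pauli \((x,z)\) with the right syndrome whose symbols satisfy \((x_i,z_i)\in\mathcal E_i\) for at least \((1-\rho)n\) registers. The goal is \(|\mathcal L|\le L^2\).

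Define the projected lists \(\mathcal E_i^X=\{x:(x,z)\in\mathcal E_i\}\) and \(\mathcal E_i^Z=\{z:(x,z)\in\mathcal E_i\}\). Each has size at most \(\ell\).

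1. Take any valid Pauli \((x,z)\). If \((x_i,z_i)\in\mathcal E_i\), then \(x_i\in\mathcal E_i^X\) and \(z_i\in\mathcal E_i^Z\). Hence \(x\) lies in the coset \(a+C_1\) and agrees with the \(X\)-lists on at least \((1-\rho)n\) coordinates. The same holds for \(z\) in \(b+C_2\) with the \(Z\)-lists.
2. The \(X\)-agreement set may differ from the one counted by the joint constraint, but it contains it. So the marginal condition holds for \(x\) and \(z\) separately.
3. By \((\rho,\ell,L)\)-list recoverability of \(C_1/C_2^\perp\), applied to the coset \(a+C_1\), the \(x\)'s arising this way occupy at most \(L\) classes modulo \(C_2^\perp\). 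Symmetrically, by recoverability of \(C_2/C_1^\perp\), the \(z\)'s occupy at most \(L\) classes modulo \(C_1^\perp\).
4. The map from \(\mathcal L\) to the pair of marginal classes, sending \([(x,z)]\) to \(([x],[z])\), is well defined and injective by the product decomposition above. Therefore \(|\mathcal L|\le L\cdot L=L^2\).

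The main point needing care is matching definitions rather than any hard estimate. The marginal list-recovery notion must be phrased for cosets of \(C_1\) (syndrome-shifted problems) and must count classes modulo \(C_2^\perp\). If the definition in \cite{BGG24} is instead stated for the code itself, I would translate a coset problem into a code problem by subtracting a fixed representative \(a\) from each list; this preserves list sizes and agreement counts.

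I would also check that the \(\F_q^s\)-folded symbol structure is respected. This is immediate, since the projections act symbolwise. Likewise, the symplectic-dual conventions must match so that \(C_2^\perp\) corresponds to the \(X\)-stabilizers; this follows from the CSS definition.
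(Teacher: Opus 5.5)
Your proposal is correct and follows the route the paper has in mind: the paper does not reprove the claim but cites \cite[Claim~B.3]{BGG24} and sketches exactly this argument. That argument projects each paired list $\mathcal E_i$ onto its $X$- and $Z$-components (each of size at most $\ell$), bounds the two marginal quotient problems separately, and combines them through the splitting $\mathcal N_Q/\mathcal S_Q\cong(C_1/C_2^\perp)\oplus(C_2/C_1^\perp)$, which embeds the joint classes injectively into a product of at most $L\cdot L$ pairs. Your treatment of the syndrome shift, translating by a fixed representative so that list sizes, disagreement counts, and quotient classes are preserved, matches the paper's reduction to the CSS quotient in Section~\ref{sec:stab}.
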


However, projection replaces the original paired constraint by a potentially strictly larger Cartesian product, and this possible loss is already visible in a single coordinate. Suppose
\[
    \mathcal E_i=\{(0,0),(1,1)\}.
\]
The original quantum input permits only these two paired symbols.
After projection, however, both marginal lists are \(\{0,1\}\), and
treating them independently allows the four pairs
\[
    (0,0),\ (0,1),\ (1,0),\ (1,1).
\]

The issue is therefore not that the classical component bounds are weak:
treating the two sectors separately discards the pairing between the
\(X\)- and \(Z\)-components of each candidate and may enlarge the
allowed paired list to a Cartesian product.

\item \textbf{Stabilizer equivalence:} 
\emph{What lower bounds and converses hold once stabilizer equivalence is
taken into account?}
Classical list bounds count distinct words, whereas quantum decoding
counts distinct stabilizer cosets.  In the CSS description, two errors
whose difference lies in
\(C_2^\perp\oplus C_1^\perp\) are stabilizer-equivalent and hence
represent the same logical correction.

This distinction can be substantial even within a single sector, as the
following simple example shows.  Let \(f\in C_1\), let
\(0\ne w\in C_2^\perp\), and choose distinct
\(\lambda_1,\ldots,\lambda_\ell\in\F_q\).  Then
\[
    f+\lambda_1 w,\quad
    f+\lambda_2 w,\quad
    \ldots,\quad
    f+\lambda_\ell w
\]
are \(\ell\) distinct classical words.  For each coordinate \(i\), define
\[
    S_i:=\{\,f_i+\lambda_j w_i:j\in[\ell]\,\}.
\]
Then \(|S_i|\le\ell\), and every word \(f+\lambda_jw\) satisfies
\[
    (f+\lambda_jw)_i\in S_i
    \qquad\text{for every }i.
\]
Thus all \(\ell\) words belong to the same zero-error list-recovery
instance \((S_1,\ldots,S_n)\).

On the other hand, for any distinct \(j,k\in[\ell]\),
\[
    (f+\lambda_j w)-(f+\lambda_k w)
    =
    (\lambda_j-\lambda_k)w
    \in C_2^\perp.
\]
Hence all \(\ell\) words represent the same coset of
\(C_1/C_2^\perp\).  Thus an entire classical bad list can collapse to a
single candidate after quotienting by the stabilizer.

\end{enumerate}

\subsection{Our contributions}
We answer both of the above questions for balanced folded quantum
Reed--Solomon (FQRS) codes and balanced random CSS codes.  On the
achievability side, we establish a pairing lemma showing that the
two-sector \(X/Z\) structure incurs no product loss: the corresponding
one-sector classical list-size bounds carry over to the joint quantum
list.  On the converse side, we show that stabilizer equivalence does
not collapse the corresponding classical obstructions.  For quantum
list decoding, we establish a quantum generalized Singleton bound with
the same fine-grained radius--list-size tradeoff as in the classical
setting.  For quantum list recovery, we construct stabilizer-distinct
bad lists showing that the classical lower-bound exponent survives the
stabilizer quotient.

Throughout, we use \(R_{\rm c}\) for the rate of a standalone classical
code, reserving \(R\) for the quantum rate and \(R_1\) for the common
component rate of a balanced CSS code, where balanced means that the two
component codes have the same dimension, and hence the same rate.  We likewise use
\(\varepsilon\) for classical slack parameters and \(\gamma\) for the
slack from the quantum Singleton radius unless otherwise stated.

Let \(Q=\CSS(C_1,C_2)\) be a CSS code on \(n\) registers over
\(\F_q^s\), with \(C_2^\perp\subseteq C_1\), and write
\(k_t:=\dim_{\F_q}C_t\).  We call the CSS code \emph{balanced} when
\(k_1=k_2\).  Its quantum rate and common component rate are
\(
  R:=\frac{k_1+k_2-sn}{sn},
  ~
  R_1:=\frac{k_1}{sn}
      =\frac{k_2}{sn}
      =\frac{1+R}{2}.
\)
Thus the quantum Singleton radius satisfies
\(
  \frac{1-R}{2}=1-R_1.
\)
Accordingly, when a classical result at rate \(R_{\rm c}\) and radius
\(1-R_{\rm c}-\varepsilon\) is applied to either component of a
balanced CSS code, we make the identification
\(
  R_{\rm c}=R_1,
  ~
  \varepsilon=\gamma,
\)
so that
\(
  1-R_{\rm c}-\varepsilon
  =(1-R)/2-\gamma.
\)

For FQRS codes, as in~\cite{BGG24}, \(C_1\) is a folded
Reed--Solomon code and \(C_2\) is a folded generalized
Reed--Solomon multiplier twist.  For random CSS codes, \(C_1\) and \(C_2\) are each marginally uniform
random linear codes, subject to the CSS nesting condition.

\subsubsection*{\textbf{1. Joint achievability without a sector-product
loss (to Question 1)}}

We identify entropy and remainder certificates that allow the two CSS
sectors to be combined without a sector-product loss.  We establish the required certificates on the relevant candidate
sets for FQRS codes and, with high probability, for random CSS codes.  

\begin{definition}[Coordinate certificates]
\label{def:certificate}
Let \(C\subseteq\Sigma^n\) be nonempty and let \(0<\delta\le1\).
For a random word \(Y\) supported on \(C\), write \(Y_i\) for its
\(i\)-th coordinate random variable and
\(r(Y):=1-\max_y\Pr[Y=y]\) for the remainder.
We say that \(C\) satisfies the entropy coordinate certificate
\(\mathsf{Cert}^{H}(\delta)\), respectively the remainder coordinate
certificate \(\mathsf{Cert}^{r}(\delta)\), if every such \(Y\) satisfies
\[
  \sum\nolimits_{i=1}^n H(Y_i)\ge \delta n H(Y),
  \qquad\text{respectively}\qquad
  \sum\nolimits_{i=1}^n r(Y_i)\ge \delta n r(Y).
\]
\end{definition}

\begin{lemma}[Informal version of Lemma~\ref{lem:pairing}]
\label{lem:pairing-intro}
Let \(\mathcal A\subseteq\Sigma_X^n\times\Sigma_Z^n\) be a paired
family, let \(\mathcal A_X\) be its \(X\)-projection, and let
\(\mathcal A_Z(x):=\{z:(x,z)\in\mathcal A\}\) be the \(Z\)-fiber over
\(x\).  For either the entropy or remainder certificate, if
\(\mathcal A_X\) satisfies the certificate with coefficient
\(\delta_X\) and every \(Z\)-fiber satisfies it with coefficient
\(\delta_Z\), then \(\mathcal A\) satisfies the corresponding joint
certificate with coefficient \(\min\{\delta_X,\delta_Z\}\).  In
particular,
\[
  \sum\nolimits_{i=1}^n H(X_i,Z_i)
  \ge \min\{\delta_X,\delta_Z\}\,nH(X,Z),
  \qquad
  \sum\nolimits_{i=1}^n r(X_i,Z_i)
  \ge \min\{\delta_X,\delta_Z\}\,nr(X,Z).
\]
\end{lemma}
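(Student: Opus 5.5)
The approach is to prove a chain-rule decomposition on both sides of the certificate inequality, first for entropy and then for the remainder. Let \((X,Z)\) be any random pair supported on \(\mathcal A\). Then \(X\) is supported on \(\mathcal A_X\), and for each \(x\) in the support the conditional law of \(Z\) given \(X=x\) is supported on the fiber \(\mathcal A_Z(x)\). The two hypotheses can therefore be applied to \(X\) and, separately, to each conditional distribution \(Z\mid X=x\). Summing these against the weights \(\Pr[X=x]\) should give the joint bound.

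\textbf{Entropy.} The global side splits exactly as \(H(X,Z)=H(X)+H(Z\mid X)\). Coordinatewise, \(H(X_i,Z_i)=H(X_i)+H(Z_i\mid X_i)\). Since \(X_i\) is a function of \(X\), conditioning on more can only lower entropy, so \(H(Z_i\mid X_i)\ge H(Z_i\mid X)=\sum_x\Pr[X=x]\,H(Z_i\mid X=x)\). Summing over \(i\) and applying the \(X\)-certificate to the first term gives \(\sum_i H(X_i)\ge\delta_X nH(X)\). Applying the fiber certificate for each fixed \(x\) gives \(\sum_i H(Z_i\mid X=x)\ge\delta_Z nH(Z\mid X=x)\). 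Averaging over \(x\) yields
\[
  \sum\nolimits_i H(X_i,Z_i)\ge\delta_X nH(X)+\delta_Z nH(Z\mid X)\ge\min\{\delta_X,\delta_Z\}\,nH(X,Z).
\]

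\textbf{Remainder: global side.} I would mimic the entropy argument with a suitable conditional remainder. On the global side, let \(x^\ast\) maximize \(\Pr[X=x]\) and write \(m_x:=\max_z\Pr[X=x,Z=z]\). Since \(m_x\le\Pr[X=x]\), we have \(\sum_x m_x\le m_{x^\ast}+1-\Pr[X=x^\ast]\). This gives the subadditivity
\[
  r(X,Z)\le r(X)+\Pr[X=x^\ast]\,r(Z\mid X=x^\ast),
\]
and also the weaker \(r(X,Z)\le r(X)+\sum_x\Pr[X=x]\,r(Z\mid X=x)\). The first form is the more useful one, because it involves only one fiber, to which the \(\delta_Z\) certificate applies directly.

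\textbf{Remainder: coordinate side (main obstacle).} The obstacle is a coordinatewise lower bound \(r(X_i,Z_i)\ge r(X_i)+(\text{a }Z\text{-fiber term})\) that matches the global bound. The naive analogue \(r(X_i,Z_i)\ge r(X_i)+\sum_x\Pr[x]\,r(Z_i\mid X=x)\) is false. For example, take \(X=X_i\) and \(Z_i\) independent uniform bits: the left side is \(3/4\) while the right side is \(1\). So the global term must be the single-fiber one, \(\Pr[X=x^\ast]\,r(Z\mid X=x^\ast)\), and the target is
\[
  r(X_i,Z_i)\ge r(X_i)+\Pr[X=x^\ast]\,r(Z_i\mid X=x^\ast),
\]
or at least this inequality after summing over \(i\).

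To prove it, write \(r(X_i,Z_i)=1-\max_{a,b}\Pr[X_i=a,Z_i=b]\) and split according to whether the maximizing symbol \(a\) equals \(x^\ast_i\).
\begin{itemize}[leftmargin=*]
  \item If \(a=x^\ast_i\), bound \(\Pr[X_i=a,Z_i=b]\le\Pr[X_i=a]-\Pr[X=x^\ast,Z_i\ne b]\). This produces exactly the fiber remainder term.
  \item If \(a\ne x^\ast_i\), then \(\Pr[X_i=a]\le 1-\Pr[X=x^\ast]\). The slack must then be absorbed into the \(r(X_i)\) term.
\end{itemize}
If this case analysis does not close coordinate by coordinate, my fallback is to choose a single reference point \((x^\ast,z^\ast)\) globally and compare both sides against that fixed point. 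The goal would be to show that \(\sum_i r(X_i,Z_i)\) dominates \(\sum_i\Pr[(X_i,Z_i)\ne(x^\ast_i,z^\ast_i)]\) up to the certificate-controlled terms. The combination with the two certificates then proceeds exactly as in the entropy case.
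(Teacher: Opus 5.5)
Your entropy half is correct and matches the paper's argument exactly. The remainder half has a genuine gap. You aim for \(\sum_i r(X_i,Z_i)\ge\sum_i r(X_i)+\Pr[X=x^\ast]\sum_i r(Z_i\mid X=x^\ast)\), with \(x^\ast\) the mode of \(X\). This is false, both per coordinate and after summing, even when both certificates hold with coefficient \(1\). The root problem is your global bound \(r(X,Z)\le r(X)+\Pr[X=x^\ast]\,r(Z\mid X=x^\ast)=1-\max_z\Pr[X=x^\ast,Z=z]\). It is strict whenever the joint mode lies outside the fiber of the \(X\)-mode, and then it overshoots what \(\sum_i r(X_i,Z_i)/n\) can reach. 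Here is an example. Let \(x^\ast,x',x''\) pairwise differ in every coordinate, with probabilities \(0.4,0.35,0.25\). Let \(Z\mid X=x^\ast\) be uniform on two words that differ in every coordinate, and let \(Z\) be deterministic on the other two fibers. Both certificates hold with \(\delta_X=\delta_Z=1\), since each relevant set consists of words that pairwise differ everywhere. Every coordinate has \(r(X_i,Z_i)=0.65=r(X,Z)\), whereas \(r(X_i)+\Pr[X=x^\ast]\,r(Z_i\mid X=x^\ast)=0.6+0.2=0.8\). So your case 2 (maximizing symbol \(a\ne x^\ast_i\)) cannot be absorbed into \(r(X_i)\), and no summation over \(i\) rescues it. Your fallback also points the wrong way: \(\sum_i r(X_i,Z_i)\le\sum_i\Pr[(X_i,Z_i)\ne(x^\ast_i,z^\ast_i)]\) always holds, because remainder is a minimum over centers. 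So the left side cannot be shown to dominate the right.

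The paper supplies the missing ideas as follows. It writes \(\sum_i r(X_i,Z_i)=\min_{(u,v)}\E[d_{\rm pair}((X,Z),(u,v))]\) and lower-bounds the expected paired distance to an \emph{arbitrary} center. It decomposes the joint remainder as \(r(X,Z)=(A-a_\star)+(1-A)\). Here \(a_g\) is the largest atom in \(X\)-fiber \(g\), \(A=\sum_g a_g\), and \(a_\star=\max_g a_g\) is the joint mode. The \(X\)-certificate is applied not to the marginal of \(X\) but to the reweighted word \(\widetilde X\) with \(\Pr[\widetilde X=x^{(g)}]=a_g/A\); its mode is the \(X\)-word of the joint mode. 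This gives \(\sum_g a_g t_g\ge\delta_X n(A-a_\star)\), where \(t_g=d_H(x^{(g)},u)\). The \(Z\)-certificate is applied in each fiber against a patched center that copies the fiber's heaviest \(z\) on \(E_g=\{i:x^{(g)}_i\ne u_i\}\) and equals \(v\) elsewhere. This yields a per-fiber contribution of at least \(a_g t_g+\delta_Z n(P_g-a_g)\), without double-counting coordinates already charged to the \(X\)-sector. Summing over fibers and taking \(\min\{\delta_X,\delta_Z\}\) gives the claim.
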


For our list applications, it suffices to require these inequalities
for distributions supported on at most \(B\) words; we denote the
corresponding bounded-support certificates by
\(\mathsf{Cert}^{H}_B(\delta)\) and
\(\mathsf{Cert}^{r}_B(\delta)\).  The same pairing statement holds for
these bounded-support versions.

Brakensiek, Chen, Dhar, and Zhang~\cite{BCDZ26BL} use a discrete
entropic Brascamp--Lieb inequality and develop a remainder analogue
to derive such coordinate inequalities from subspace-design
hypotheses and convert them into list-size bounds.  Here we instead use the inequalities
themselves as an \emph{interface}: once suitable one-sector
certificates are available, the pairing lemma preserves their
coefficient across the two sectors.  The entropy- and
remainder-to-list conversions of~\cite{BCDZ26BL} can then be applied
directly to the resulting joint certificates, yielding QLR and QLD
bounds without multiplying two separate sectorwise list-size bounds.

How the one-sector certificates are established is code-family
specific.  For FQRS codes, subspace-design structure together with the
entropic and remainder Brascamp--Lieb inequalities provides the
required certificates.  For random CSS codes, we establish the bounded-support
certificates directly: a certificate violation forces a structured
local collision profile, and for fixed \(B\), the probability that any
such bad profile occurs in a random linear component code is
exponentially small.  Once these one-sector inputs are established,
the same pairing and list-conversion framework applies to both
families and removes the sector-product loss.

\subsubsection*{\textbf{2. Quotient-aware generalized Singleton bounds
for QLD (to Question 2)}}

To understand how stabilizer equivalence affects QLD, we prove a
\textbf{quotient-aware generalized Singleton bound} for scalar codes
that works directly with distinct quotient classes.  For lower-bound
purposes, it is sufficient to work in one sector: a quotient-distinct
bad list in either marginal quotient can be lifted to a
stabilizer-distinct bad list for the full quantum code by fixing the
other sector.

Our converse adapts the classical projection-and-patching proof of the
generalized Singleton bound~\cite{ST20} with distinctness measured
modulo the stabilizer subcode.

\begin{enumerate}[
  label=(\roman*),
  leftmargin=*,
  itemsep=2pt,
  topsep=2pt
]

\item \textbf{\textit{A one-sector quotient-aware generalized Singleton
bound.}}
For any nested scalar linear codes \(W\subseteq C\subseteq\F_q^n\)
, let
\(r_C:=n-\dim C\).  
For every \(L\) with \(L+1\le |C/W|\), we directly
construct representatives of \(L+1\) distinct cosets of \(W\) inside a
common Hamming ball of radius
\[
  \frac1n
  \left\lceil
    \frac{L}{L+1}
    \left(
      r_C+\left\lceil\log_q(L+1)\right\rceil
    \right)
  \right\rceil .
\]
For a balanced rate-\(R\) scalar CSS code, the marginal redundancy is
\(r_X=r_Z=(1-R)n/2\), so applying the bound to either marginal quotient
gives
\[
  \rho_L
  =
  \frac{L}{L+1}\frac{1-R}{2}+O(1/n),
\]
and hence \(L=\Omega((1-R)/\gamma)\) at radius
\((1-R)/2-\gamma\).

\item \textbf{\textit{Exponential coset-list growth and a balance
barrier.}}
The same argument shows that once the decoding radius exceeds the
one-sector redundancy \(r_C/n\), the largest coset list grows
exponentially:
\[
  \max_g |\mathcal L_{C/W}(g,\rho)|
  \ge
  \min\!\left\{
    q^{\lfloor\rho n\rfloor-r_C},
    |C/W|
  \right\}.
\]
Applying this to the two marginal quotients of a CSS code gives a
structural consequence.  If a rate-\(R\) scalar CSS family has
subexponential QLD list size at radius
\(\rho=(1-R)/2-\gamma\), then
\[
  |r_X-r_Z|
  \le
  2\gamma n+o(n).
\]
Thus any scalar CSS family approaching the quantum Singleton radius with
subexponential list size must be approximately balanced.

\end{enumerate}
The scalar projection-and-patching argument chooses a coordinate set
\(T\) containing an information set for the stabilizer subcode \(W\).
After folding, however, coordinates must be selected in whole
registers, and a scalar information set may be scattered across too
many registers for the same argument to go through. 
For FQRS codes, the MDS structure of the underlying Reed--Solomon
stabilizer subcode allows the same argument to be carried out over
whole folded registers.

\subsubsection*{\textbf{3. Quotient-aware lower bounds for QLR
(to Question 2)}}

For QLR, we construct one-sector bad lists that preserve the
\(\ell^{\Theta(R_1/\gamma)}\) scale of classical linear-code
list-recovery lower bounds while accounting for the stabilizer quotient.

\begin{itemize}[leftmargin=3em,noitemsep,topsep=2pt]

\item \textbf{FQRS.}
We construct a \emph{diagonal product grid} inspired by the
hypercube-like list-recovery construction of ~\cite{CZ25, LS25}, but modified to account for the stabilizer quotient.
In the FQRS realization,
\(C_2^\perp\subseteq C_1\) corresponds to evaluations of polynomials
of degree strictly less than a cutoff \(d_0\).  Hence two candidates
are stabilizer equivalent precisely when their polynomial difference
has degree below \(d_0\).

We choose the candidate generators to have distinct degrees in
\([d_0,k_1)\).  Consequently, the highest-degree term in every nonzero
candidate difference does not cancel.  Hence every nonzero pairwise
difference also has degree in \([d_0,k_1)\), and its evaluation lies in
\(C_1\setminus C_2^\perp\). This gives
\(\ell^{\lceil R_1/\gamma\rceil}\) stabilizer-distinct candidates at
radius \((1-R)/2-\gamma\).

\item \textbf{Random CSS.}
Here quotient survival follows from a direct dimension count.
Let \(M:=\ell^{\lfloor R_1/\gamma\rfloor}\) and start from a classical
rate-\(R_1\) bad list of size \(M\).  Conditioning on the component
code \(C_1\), which has dimension \(R_1n\), the stabilizer subcode
\(C_2^\perp\) is a uniformly random \((1-R_1)n\)-dimensional subspace
of \(C_1\).  Hence any fixed nonzero pairwise difference belongs to
\(C_2^\perp\) with probability at most \(2q^{-Rn}\).  A union bound
over fewer than \(M^2\) pairwise differences shows that, with probability
at least \(1-2M^2q^{-Rn}\), all \(M\) candidates remain distinct modulo
the random stabilizer subcode.

\end{itemize}

\subsubsection*{\textbf{4. Matching achievability and lower bounds}}

Combining the joint achievability of Contribution~1 with the
quotient-aware converses and lower bounds of Contributions~2--3 gives
matching asymptotic list-size scales at radius
\((1-R)/2-\gamma\).  Write \(L^\star_{\rm QLR}\) and \(L^\star_{\rm QLD}\) for the
smallest valid worst-case output-list bounds at the stated parameters. For FQRS codes, and with high probability for
random CSS codes,
\[
  \log_\ell L^\star_{\mathrm{QLR}}
  =
  \Theta\!\left(\frac{R_1}{\gamma}\right)
  =
  \Theta\!\left(\frac{1+R}{\gamma}\right),
\]
for fixed \(R>0\) and \(\ell\ge2\), subject to the stated
family-specific parameter conditions.
The sector-product loss does not change the asymptotic
\(\Theta(R_1/\gamma)\) exponent scaling.  It instead squares the
one-sector list-size bound, which doubles the constant in the exponent.
The pairing lemma removes this square and preserves the one-sector
list-size bound for the joint CSS problem.

For QLD, the effect is stronger at the level of list-size scaling.
The fixed-list remainder conversion underlying~\cite{BCDZ26BL} gives
a one-sector list size of order \(1/\gamma\) near the Singleton radius,
whereas a sector-product argument would give order \(1/\gamma^2\).
The pairing lemma preserves the one-sector certificate coefficient and
therefore the \(1/\gamma\) list-size order.

Together with our quotient-aware generalized Singleton converse, this
gives an asymptotically tight fixed-list radius tradeoff:
\[
  \rho_L^\star
  =
  \frac{L}{L+1}\frac{1-R}{2}.
\]
Equivalently, at radius
\((1-R)/2-\gamma\), both families satisfy
\[
  L^\star_{\mathrm{QLD}}
  =
  \Theta\!\left(\frac{1-R}{\gamma}\right),
\]
with the random-CSS achievability statement holding with high
probability.  Table~\ref{tab:results} summarizes these bounds and
compares them with the prior bounds of~\cite{BGG24}.

\begin{table}[h]
\centering
\footnotesize
\setlength{\tabcolsep}{10pt}
\renewcommand{\arraystretch}{1.25}
\begin{tabular}{@{}lllll@{}}
\toprule
& \multicolumn{2}{c}{BGG24}
& \multicolumn{2}{c}{This work} \\
\cmidrule(lr){2-3}\cmidrule(l){4-5}
Task
& Upper & Lower & Upper & Lower \\
\midrule
FQRS QLR
& \(\bigl(O(n/\gamma^2)\bigr)^{O(\log\ell/\gamma)}\)
& ---
& \(\left(\dfrac{\ell}{R_1+\gamma/2}\right)^{O(R_1/\gamma)}\)
& \(\ell^{\Omega(R_1/\gamma)}\) \\

FQRS QLD
& \(n^{O(1/\gamma)}\)
& ---
& \(O((1-R)/\gamma)\)
& \(\Omega((1-R)/\gamma)\) \\

Random CSS QLR
& ---
& ---
& \(\left(\dfrac{\ell}{R_1+\gamma/2}\right)^{O(R_1/\gamma)}\)
  (w.h.p.)
& \(\ell^{\Omega(R_1/\gamma)}\) (w.h.p.) \\

Random CSS QLD
& \(2^{O(1/\gamma^2)}\) (w.h.p.)
& ---
& \(O((1-R)/\gamma)\) (w.h.p.)
& \(\Omega((1-R)/\gamma)\) \\
\bottomrule
\end{tabular}

\caption{List-size bounds at the near-Singleton radius
\((1-R)/2-\gamma\), for fixed rate \(R>0\), with input-list size
\(\ell\ge2\) for QLR, as \(\gamma\downarrow0\).
The BGG24 entries~\cite{BGG24} are prior achievability bounds; a dash
indicates that no corresponding bound of that type is given there.
Our joint certificate framework avoids the sector-product loss that
would arise from separately bounding and multiplying the two marginal
lists.  For both FQRS and random CSS, the QLR upper and lower bounds
match at the exponent scale, while the QLD upper and lower bounds match
at the list-size order.
Entries marked ``w.h.p.'' hold with probability \(1-o_n(1)\) over the
balanced random-CSS ensemble; unmarked entries are deterministic. All entries are subject to the corresponding alphabet, folding,
and block-length conditions; these conditions may differ between
the prior results and this work.}
\label{tab:results}
\end{table}

\subsection{Technical overview}

We highlight three ideas behind the results above.  The first explains how
the two Pauli sectors can be combined without a product loss.  The second
builds the stabilizer quotient directly into the generalized-Singleton
converse for QLD.  The third is the quotient-aware construction underlying
the FQRS QLR lower bound.

\begin{enumerate}
\item \textbf{Pairing the two sectors before counting lists.}
Let \((X,Z)\) be a random element of a paired candidate family
\(\mathcal A\), let \(\mathcal A_X\) be its \(X\)-projection, and let
\(\mathcal A_Z(x):=\{z:(x,z)\in\mathcal A\}\) be the \(Z\)-fiber over
\(x\).

\begin{itemize}[leftmargin=1.5em,itemsep=2pt,topsep=2pt]
% label=\textendash,
\item
For entropy, the mechanism is the chain rule:
\[
  \sum\nolimits_i H(X_i,Z_i)
  =
  \sum\nolimits_i H(X_i)+\sum\nolimits_i H(Z_i\mid X_i)
  \ge
  \sum\nolimits_i H(X_i)+\sum\nolimits_i H(Z_i\mid X).
\]
The \(X\)-certificate controls the first term.  Conditioned on \(X=x\),
the random variable \(Z\) is supported on 
\(\mathcal A_Z(x)\), so the \(Z\)-fiber certificate controls the second.
This gives
\[
  \sum\nolimits_i H(X_i,Z_i)
  \ge
  \min\{\delta_X,\delta_Z\}\,nH(X,Z).
\]

\item
For remainder \(r(Y):=1-\max_y\Pr[Y=y]\), there is no analogous
additive chain rule.  Instead, remainder admits a distance
interpretation:
\[
  \sum\nolimits_i r(Y_i)
  =
  \min\nolimits_y \mathbb E[d_H(Y,y)].
\]
Thus, for the paired word \((X,Z)\),
\(
  \sum\nolimits_i r(X_i,Z_i)
  =
  \min_{(u,v)}
  \mathbb E\!\left[d_H\bigl((X,Z),(u,v)\bigr)\right].
\)
This lets us reason directly in terms of paired Hamming disagreement.

Fix a paired center \((u,v)\) and partition the candidate distribution
into fibers according to its \(X\)-word.  For a fiber with \(X\)-word
\(x\), let \(E_x:=\{i:x_i\ne u_i\}\).  Coordinates in \(E_x\) have
already been counted as paired errors through the \(X\)-sector, so the
\(Z\)-sector only needs to account for disagreements on \(E_x^c:=[n]\setminus E_x\).
This prevents the two sectors from counting the disagreement of the same register
twice.

The probability mass has a corresponding fiber decomposition.  Choose a
maximum-mass candidate in each \(X\)-fiber, let \(a_x\) denote its mass,
set \(A:=\sum_x a_x\), and let \(a_\star:=\max_x a_x\).  Then
\[
  r(X,Z)
  =
  1-a_\star
  =
  \underbrace{A-a_\star}_{\text{among fibers}}
  +
  \underbrace{1-A}_{\text{within fibers}}.
\]
The \(X\)-certificate controls the first term through the chosen fiber
representatives, while the \(Z\)-fiber certificates control the second
through the residual mass inside each fiber.  Combining the two gives
\[
  \sum\nolimits_i r(X_i,Z_i)
  \ge
  \min\{\delta_X,\delta_Z\}\,n r(X,Z).
\]

\end{itemize}

\item \textbf{A quotient-aware generalized Singleton bound for QLD.}

The classical generalized-Singleton argument projects onto a suitably chosen set of coordinates so that, by pigeonhole, \(L+1\) codewords have
the same projection; it then patches the remaining coordinates to place
these codewords in a common Hamming ball.  In the quantum setting,
however, \(L+1\) distinct codewords are not enough: they must represent
\(L+1\) distinct logical classes.

Let \(W\subseteq C\subseteq\F_q^n\), where \(W\) is the one-sector
stabilizer subcode and \(C/W\) indexes the corresponding logical classes.
Choose one representative from each coset, forming a transversal
\(\mathcal U\subseteq C\), so that every codeword can be written uniquely
as \(u+w\) with \(u\in\mathcal U\) and \(w\in W\).

For a target list size \(L\), set
\(
  h:=\lceil\log_q(L+1)\rceil
\)
and choose
\(
  |T|=\dim C-h.
\)
Projection onto \(T\) has at most \(q^{|T|}\) possible values, whereas
\(|C|=q^{\dim C}\).  Hence some projection fiber contains at least
\(q^h\ge L+1\) codewords.

To prevent these distinct codewords from collapsing after stabilizer
quotienting, we additionally choose \(T\) to contain an information set
of \(W\).  Then \(\pi_T\) is injective on \(W\): for any fixed
\(u\in\mathcal U\), at most one word \(u+w\) from the coset \(u+W\)
can lie in a given projection fiber.  Therefore the \(L+1\) codewords
found by pigeonhole necessarily represent \(L+1\) distinct quotient
classes.  In the scalar case, \(L+1\le |C/W|\) implies
\(
  h\le \dim C-\dim W,
\)
and hence \(|T|=\dim C-h\ge\dim W\). Therefore, \(T\) can indeed be chosen to
contain an information set of \(W\).

The remaining patching step is the same as in the classical argument.
The \(L+1\) representatives already agree on \(T\); partitioning the
remaining coordinates among them produces a common center at radius
\[
  \frac1n
  \left\lceil
    \frac{L}{L+1}
    \left(
      n-\dim C+\left\lceil\log_q(L+1)\right\rceil
    \right)
  \right\rceil .
\]
Thus the generalized-Singleton construction survives passage from
distinct codewords to distinct stabilizer cosets.

For folded codes, a scalar information set may be scattered across
too many registers.  In the FQRS application, however,
\(W=C_2^\perp\) is obtained by folding an underlying Reed--Solomon
code.  By the MDS property, any \(\dim_{\F_q}W\) scalar evaluation
coordinates determine \(W\), and hence any
\(\lceil\dim_{\F_q}W/s\rceil\) whole registers suffice for
injectivity.  Thus the projection-and-patching argument extends to
the FQRS quotient.  We discuss this folded-register issue in detail in
Section~\ref{subsec:folded-fqrs-quotients}.

\item \textbf{The diagonal product grid for the FQRS QLR lower bound.}

The challenge is to construct a large bad list that simultaneously
satisfies the list-recovery condition and stabilizer distinctness.
We use the balanced FQRS realization following
\cite[Definition~3.8]{BGG24}.  Both \(C_1\) and \(C_2^\perp\) are FRS
codes on the same evaluation points, while
\(C_2:=(C_2^\perp)^\perp\) is a folded GRS code by duality.
Recall \(R\) is the quantum rate and \(R_1:=(1+R)/2\) is the common component rate of a balanced CSS code.  Let
\(k_1:=\dim_{\F_q}C_1=R_1sn\) and
\(d_0:=\dim_{\F_q}C_2^\perp=(1-R_1)sn\).
Writing \(\ev\) for the folded evaluation map, we have
\[
  C_2^\perp
  =
  \ev(\F_q[x]_{<d_0})
  \subseteq
  C_1
  =
  \ev(\F_q[x]_{<k_1}).
\]
Thus \(C_2^\perp\) consists of evaluations of the low-degree
polynomials, while the quotient \(C_1/C_2^\perp\) is represented by
the coefficient band of degrees \([d_0,k_1)\).  Accordingly, if
\(f_1,f_2\in\F_q[x]_{<k_1}\) are the message polynomials of
\(y_1=\ev(f_1),y_2=\ev(f_2)\in C_1\), then
\[
  y_1\not\equiv y_2 \pmod{C_2^\perp}
  \quad\Longleftrightarrow\quad
  \ev(f_1-f_2)\notin C_2^\perp
  \quad\Longleftrightarrow\quad
  \deg(f_1-f_2)\ge d_0 .
\]
Stabilizer distinctness can therefore be enforced through polynomial
degree.

For the classical FRS list-recovery lower bound, Chen and
Zhang~\cite{CZ25} introduced a hypercube-like construction that
produces \(\ell^b\) distinct classical candidates while each designated
folded coordinate sees at most \(\ell\) different symbols. We rephrase their construction in our notation and adapt it to our setting. At a high
level, one partitions a collection of registers into \(b\) pairwise
disjoint groups \(G_1,\ldots,G_b\), each of size \(r\), and defines
\(
  v_j(x)
  :=
  \prod_{h\ne j}
  \prod_{i\in G_h}
  \prod_{m=1}^s
  (x-\alpha_{i,m}),
\)
where \(\alpha_{i,m}\) are the scalar evaluation points in register
\(i\).  Hence \(v_j\) vanishes on every designated group except
\(G_j\).  If \(T\subseteq\F_q\) has size \(\ell\) and
\(f=\sum_{h=1}^b t_hv_h\) with \(t_h\in T\), then for
\(i\in G_j\) we have
\(\pi_i(f)=t_j\pi_i(v_j)\).  Thus the \(b\) independent coefficient
choices produce \(\ell^b\) global candidates, while each designated
register sees at most \(\ell\) folded symbols.

The quantum setting further requires these classically distinct
candidates to remain distinct modulo \(C_2^\perp\).  The generators
\(v_j\) above all have the same degree \((b-1)rs\), so their leading
terms may cancel in a difference of two candidates.  Consequently,
even if the individual generators have degree at least \(d_0\), the
degree of a nonzero candidate difference may drop below \(d_0\).

We resolve this by replacing \(v_j\) with
\[
  u_j(x)
  :=
  x^{j-1}v_j(x)
  =
  x^{j-1}
  \prod_{h\ne j}
  \prod_{i\in G_h}
  \prod_{m=1}^s
  (x-\alpha_{i,m}),
  \qquad j\in[b].
\]
Since all scalar evaluation points are nonzero, the factor
\(x^{j-1}\) does not change the diagonal vanishing pattern:
\(\pi_i(u_j)=0\) for \(i\in G_h\), \(h\ne j\), while
\(\pi_i(u_j)\ne0\) for \(i\in G_j\).  Its role is instead to separate
the generator degrees, giving
\(\deg u_1<\deg u_2<\cdots<\deg u_b\).  For our parameters we take
\(b=\lceil R_1/\gamma\rceil\) and choose \(r\) so that all these
degrees lie in the window \([d_0,k_1)\), while satisfying the
radius \(\rho=(1-R)/2-\gamma\) list-recovery agreement condition.   The highest-degree
generator in any nonzero candidate difference then cannot cancel. Therefore, every nonzero candidate difference has degree in
\([d_0,k_1)\), representing a valid codeword in \(C_1\) while
remaining outside \(C_2^\perp\).

The construction thus preserves the classical diagonal product
mechanism of~\cite{CZ25} while adding the degree separation needed for
survival under the stabilizer quotient, yielding
\(
  L^\star_{\rm QLR}
  \ge
  \ell^{\lceil R_1/\gamma\rceil}.
\)
The complete construction and parameter verification appear in
Section~\ref{subsec:fqrs-soft-grid}.

\end{enumerate}

\paragraph{Roadmap.}
Section~\ref{sec:prelim} develops the preliminaries, including the
syndrome--quotient formulation of QLR and QLD, the FQRS and random-CSS
families, and the classical subspace-design and Brascamp--Lieb inputs.
Section~\ref{sec:qlr-upper} introduces the entropy and remainder
certificates and proves the pairing lemma and its bounded-support
form.
Section~\ref{sec:quotient-tools} develops the quotient-aware generalized
Singleton converse, first for scalar quotients and then for folded
registers.
Sections~\ref{sec:fqrs-results} and~\ref{sec:random-css-results}
instantiate these tools for FQRS and random CSS codes, respectively,
combining achievability with quotient-aware lower bounds to obtain the
matching list-size scales.

\section{Preliminaries}
\label{sec:prelim}

Throughout, \(q=p^m\) is a prime power.  Unless stated otherwise, all
vector spaces and linear codes are over \(\F_q\), all dimensions are
\(\F_q\)-dimensions, and all duals are taken over \(\F_q\). For convenience, we occasionally restate prior results in notation
adapted to ours, or isolate consequences of their proofs or remarks
that are useful for our arguments.

\subsection{CSS codes, quantum list problems, and quotient reduction}
\label{sec:stab}
\label{sec:qlr}

\paragraph{Generalized Paulis.}
A \(q\)-ary qudit has Hilbert space \(\C^q\) with computational basis
\(\{\ket{x}\}_{x\in\F_q}\).  Let
\(\Tr_{\F_q/\F_p}(a):=a+a^p+\cdots+a^{p^{m-1}}\) and
\(\omega:=e^{2\pi i/p}\).
For \(a,b\in\F_q\), define
\(X(a)\ket{x}:=\ket{x+a}\) and
\(Z(b)\ket{x}:=\omega^{\Tr_{\F_q/\F_p}(bx)}\ket{x}\).
For \(a,b\in\F_q^N\), write
\(X(a):=\bigotimes_jX(a_j)\),
\(Z(b):=\bigotimes_jZ(b_j)\), and
\(E_{a,b}:=X(a)Z(b)\).
We call \((a,b)\in\F_q^N\oplus\F_q^N\) the \emph{Pauli label} of
\(E_{a,b}\).  The generalized Pauli group consists of the operators
\(\omega^cE_{a,b}\), with \(c\in\F_p\).

A direct computation gives
\(E_{a,b}E_{a',b'}
=\omega^{\Tr_{\F_q/\F_p}(a'\cdot b)}
E_{a+a',b+b'}\)~\cite{AK01,BGG24}.
Hence
\(E_{a,b}E_{a',b'}
=\omega^{\Tr_{\F_q/\F_p}(a'\cdot b-a\cdot b')}
E_{a',b'}E_{a,b}\).
Thus Pauli labels add under multiplication up to phase, and two Pauli
operators commute precisely when
\(\Tr_{\F_q/\F_p}(a'\cdot b-a\cdot b')=0\).

\paragraph{Stabilizer codes, normalizers, and syndromes.}
A \emph{stabilizer code} \(Q\) is the common \(+1\)-eigenspace of an
abelian Pauli subgroup \(\mathcal S(Q)\) containing no nontrivial
scalar operator~\cite{AK01,BGG24}.
Its \emph{normalizer} \(\mathcal N(Q)\) consists of the Pauli
operators commuting with every element of \(\mathcal S(Q)\), and
\(\mathcal N(Q)/\mathcal S(Q)\) describes the logical Pauli operators.

Two Pauli errors \(E,E'\) are \emph{stabilizer-equivalent} if
\(E^\dagger E'\in\mathcal S(Q)\), up to phase.
Stabilizer-equivalent errors have the same action on the code space up
to a global phase.

Fix stabilizer generators \(S_1,\ldots,S_r\).  The \emph{syndrome} of
a Pauli error \(E\) is
\(\operatorname{syn}_Q(E):=(\sigma_1,\ldots,\sigma_r)\in\F_p^r\),
where \(S_jE=\omega^{\sigma_j}ES_j\).
Syndromes are additive, and \(\operatorname{syn}_Q(E)=0\) exactly
when \(E\in\mathcal N(Q)\).  Consequently, two errors \(E,E'\) have
the same syndrome exactly when \(E^\dagger E'\in\mathcal N(Q)\), up
to phase.

\paragraph{CSS codes.}
Let \(C_1,C_2\subseteq(\F_q^s)^n\cong\F_q^{sn}\) be
\(\F_q\)-linear codes of dimensions \(k_1,k_2\), with duals taken
with respect to the standard \(\F_q\)-bilinear dot product on
\(\F_q^{sn}\), and suppose
\(C_2^\perp\subseteq C_1\) (equivalently
\(C_1^\perp\subseteq C_2\)).
For \(Q=\CSS(C_1,C_2)\), ignoring phases, the stabilizer and normalizer
label spaces are
\(\mathcal S_Q=C_2^\perp\oplus C_1^\perp\) and
\(\mathcal N_Q=C_1\oplus C_2\), where the two components are the
\(X\)- and \(Z\)-Pauli labels.  Hence
\[
  \mathcal N_Q/\mathcal S_Q
  \cong
  (C_1/C_2^\perp)\oplus(C_2/C_1^\perp).
\]
The code encodes \(k_1+k_2-sn\) \(q\)-dimensional logical qudits and
has quantum rate \(R:=(k_1+k_2-sn)/(sn)\).
When \(k_1=k_2\), we call the CSS code \emph{balanced}; then
\(R_1:=k_1/(sn)=k_2/(sn)=(1+R)/2\).

Writing \(a=(a_1,\ldots,a_n)\) and \(b=(b_1,\ldots,b_n)\), with
\(a_i,b_i\in\F_q^s\), define the block weight of \(F=E_{a,b}\) by
\(\wt(F):=|\{i\in[n]:(a_i,b_i)\ne(0,0)\}|\).
Quantum list recovery allows a short list of possible Pauli labels at
each register.  For \(i\in[n]\), let
\(\mathcal E_i\subseteq\F_q^s\oplus\F_q^s\) satisfy
\(|\mathcal E_i|\le\ell\).
For a paired label
\(c=(a,b)\in(\F_q^s\oplus\F_q^s)^n\), define
\(\dis(c,\mathcal E):=
|\{i\in[n]:(a_i,b_i)\notin\mathcal E_i\}|\).
For \(F=E_{a,b}\), we also write
\(\dis(F,\mathcal E):=\dis((a,b),\mathcal E)\).
We use the following quantum list-decoding and list-recovery
definitions.

\begin{definition}[Quantum list decoding;
{\cite[Definition~3.2]{BGG24}}]
\label{def:qld-intro}
A stabilizer code \(Q\) of block length \(n\) is
\((\rho,L)\)-\emph{quantum list decodable} (QLD) if, for every
possible syndrome \(\sigma\), there are at most \(L\) pairwise
stabilizer-distinct Pauli errors of weight at most \(\rho n\)
consistent with \(\sigma\).
\end{definition}

\begin{definition}[Quantum list recovery;
cf.~{\cite[Definition~B.2 and Claim~B.2]{BGG24}}]
\label{def:qlr-intro}
A stabilizer code \(Q\) of block length \(n\) is
\((\rho,\ell,L)\)-\emph{quantum list recoverable} (QLR) if, for every
syndrome \(\sigma\) and every collection of local Pauli lists
\(\mathcal E_1,\ldots,\mathcal E_n\) with
\(|\mathcal E_i|\le\ell\), there are at most \(L\) pairwise
stabilizer-distinct Pauli errors \(F\) with syndrome \(\sigma\)
satisfying \(\dis(F,\mathcal E)\le\rho n\).
\end{definition}

\begin{definition}[Average-radius QLD and QLR]
\label{def:average-radius-quantum}
A stabilizer code \(Q\) of block length \(n\) is
\emph{average-radius} \((\rho,L)\)-QLD if, for every syndrome
\(\sigma\) and every collection
\(F^{(1)},\ldots,F^{(L+1)}\) of pairwise stabilizer-distinct Pauli
errors with syndrome \(\sigma\),
\(
  \sum_{j=1}^{L+1}\wt(F^{(j)})>\rho(L+1)n.
\)

Likewise, \(Q\) is \emph{average-radius}
\((\rho,\ell,L)\)-QLR if, for every syndrome \(\sigma\), every
collection of local Pauli lists
\(\mathcal E_1,\ldots,\mathcal E_n\) with
\(|\mathcal E_i|\le\ell\), and every collection
\(F^{(1)},\ldots,F^{(L+1)}\) of pairwise stabilizer-distinct Pauli
errors with syndrome \(\sigma\),
\(
  \sum_{j=1}^{L+1}\dis(F^{(j)},\mathcal E)
  >\rho(L+1)n.
\)
\end{definition}

Each average-radius notion implies its corresponding standard-radius
notion.

\paragraph{Reduction to the CSS quotient.}
The definitions above are stated in terms of Pauli errors.  In the CSS
setting, the relevant information is carried entirely by their labels.

Two Pauli errors \(E_{a,b}\) and \(E_{a',b'}\) have the same syndrome
if and only if
\((a-a',b-b')\in C_1\oplus C_2\), while they are
stabilizer-equivalent if and only if
\((a-a',b-b')\in C_2^\perp\oplus C_1^\perp\).
Thus, after translating a common-syndrome family by any one reference
error, its logically distinct candidates correspond to distinct cosets
in
\[
  \mathcal N_Q/\mathcal S_Q
  =
  (C_1\oplus C_2)/(C_2^\perp\oplus C_1^\perp).
\]

More explicitly, let
\(F^{(j)}=E_{a^{(j)},b^{(j)}}\), \(j\in[L+1]\), be pairwise
stabilizer-distinct errors with a common syndrome, and take
\(F^{(1)}\) as a reference.  Define
\(c_j:=(a^{(j)}-a^{(1)},b^{(j)}-b^{(1)})\).
Then \(c_j\in\mathcal N_Q\), and the cosets
\(c_j+\mathcal S_Q\) are pairwise distinct.

The local constraints are preserved by the same translation.
Given local Pauli lists \(\mathcal E_i\), set
\(T_i:=\mathcal E_i-(a_i^{(1)},b_i^{(1)})\).
Then \(|T_i|=|\mathcal E_i|\) and
\(\dis(c_j,T)=\dis(F^{(j)},\mathcal E)\) for every \(j\).
Hence both individual and aggregate disagreement are preserved under
the translation, so the standard- and average-radius list conditions
are preserved as well.
Conversely, representatives of distinct cosets in
\(\mathcal N_Q/\mathcal S_Q\) yield pairwise stabilizer-distinct
syndrome-zero Pauli errors.

Accordingly, QLR can be viewed as a classical list-recovery problem
over distinct cosets of the logical quotient.  QLD is the special case
\(\ell=1\), in which disagreement is simply block-Hamming distance
from the prescribed center.

\subsection{Balanced FQRS and random CSS codes}
\label{sec:code-families}

\subsubsection{Balanced FQRS codes}

Let \(N<q\), let \(\alpha\) generate \(\F_q^\times\), and use the
evaluation points \(1,\alpha,\ldots,\alpha^{N-1}\).

\begin{definition}[Generalized Reed--Solomon codes;
reformulation of {\cite[Definition~3.4]{BGG24}}]
For \(k<N\) and
\(U=(u_0,\ldots,u_{N-1})\in(\F_q^\times)^N\), define
\[
  \GRS_{N,k,q}(\alpha,U)
  :=
  \left\{
    \bigl(
      u_0f(1),
      u_1f(\alpha),
      \ldots,
      u_{N-1}f(\alpha^{N-1})
    \bigr):
    f\in\F_q[x]_{<k}
  \right\}.
\]
\end{definition}

When \(U=(1,\ldots,1)\), this is the ordinary Reed--Solomon code,
which we denote by \(\RS_{N,k,q}(\alpha)\).
The dual of a GRS code is again a GRS code of complementary dimension
on the same evaluation points, with a suitable nonzero multiplier
vector~\cite[Fact~3.2]{BGG24}.

Let \(N=sn\).  The \(s\)-folding of a length-\(N\) scalar code groups
every \(s\) consecutive scalar coordinates into one symbol of
\(\F_q^s\).  We write
\(\FRS^{(s)}_{N,k,q}(\alpha)\) for the \(s\)-folding of
\(\RS_{N,k,q}(\alpha)\); folding a GRS code is defined analogously.

\begin{definition}[Balanced folded quantum RS codes;
equivalent CSS form of
{\cite[Definition~3.8 and Claim~3.7]{BGG24}}]
\label{def:fqrs}
Fix \(R\in(0,1)\), positive integers \(n,s\), and a prime power \(q\)
with \(N:=sn<q\).
Set \(R_1:=(1+R)/2\), \(k_1:=R_1N\), and
\(d_0:=(1-R_1)N\), and assume \(k_1,d_0\in\N\).
Define
\[
  C_1:=\FRS^{(s)}_{N,k_1,q}(\alpha),
  \qquad
  C_2^\perp:=\FRS^{(s)}_{N,d_0,q}(\alpha)\subseteq C_1,
  \qquad
  C_2:=(C_2^\perp)^\perp.
\]
The balanced folded quantum Reed--Solomon code of rate \(R\) is
\(\FQRS_R^{(s)}:=\CSS(C_1,C_2)\).
It is a length-\(n\) CSS code of local dimension \(q^s\), with
\(\dim_{\F_q}C_1=\dim_{\F_q}C_2=R_1sn\).
\end{definition}

In the polynomial description of~\cite[Definition~3.8]{BGG24},
the stabilizer superposition ranges over coefficients of degrees below
\(d_0\), while the logical message occupies the degrees
\([d_0,k_1)\).  Equivalently,
\(C_2^\perp=\ev(\F_q[x]_{<d_0})\subseteq
C_1=\ev(\F_q[x]_{<k_1})\), where \(\ev\) denotes folded evaluation
at the chosen points.  Thus \(C_1/C_2^\perp\) is represented by the
coefficient band \([d_0,k_1)\).

\begin{remark}[Multiplier invariance]
\label{rem:mult}
A folded GRS code differs from the corresponding folded RS code by
fixed nonzero coordinatewise multipliers.  On each folded register,
these multipliers define an invertible \(\F_q\)-linear map and hence
preserve coordinate kernels, affine dimension, Hamming disagreement,
and distinctness.  After transporting local lists by the same maps,
they also preserve list sizes and disagreement counts.  Entropy and
remainder are invariant under these bijective relabelings.
Consequently, the subspace-design, affine-containment, and
entropy/remainder statements used below transfer unchanged to the
folded GRS multiplier twist appearing in \(C_2\).
\end{remark}

\subsubsection{Balanced random CSS codes}

Following the random-CSS construction of~\cite[Appendix~A]{BGG24},
we use its full-rank formulation.

\begin{definition}[Balanced random CSS ensemble]
\label{def:random-css}
Fix \(R\in(0,1)\), set \(R_1:=(1+R)/2\), and assume
\(R_1n,(1-R_1)n\in\N\).
Sample uniformly an ordered linearly independent family
\(g_1,\ldots,g_{R_1n}\in\F_q^n\), and define
\(C_1:=\operatorname{span}\{g_1,\ldots,g_{R_1n}\}\) and
\(C_2^\perp:=
\operatorname{span}\{g_1,\ldots,g_{(1-R_1)n}\}\).
Set \(C_2:=(C_2^\perp)^\perp\).
Then \(C_2^\perp\subseteq C_1\), and
\(Q:=\CSS(C_1,C_2)\) is a balanced \([[n,Rn]]_q\) CSS code with
\(\dim C_1=\dim C_2=R_1n\).
We call this the balanced random CSS ensemble of rate \(R\).
\end{definition}

By symmetry, each of \(C_1\) and \(C_2\) is marginally a uniformly
random \(R_1n\)-dimensional subspace of \(\F_q^n\).  The two
components are not independent, since \(C_2^\perp\subseteq C_1\) by
construction.

\subsection{Subspace designs and Brascamp--Lieb inequalities}
\label{sec:designs}

For a classical code \(C\subseteq\Sigma^n\) and local lists
\(S_1,\ldots,S_n\subseteq\Sigma\), write
\(\dis(c,S):=|\{i:c_i\notin S_i\}|\).
We say that \(C\) is \((\rho,\ell,L)\)-list recoverable if every
choice of \(|S_i|\le\ell\) admits at most \(L\) codewords \(c\) with
\(\dis(c,S)\le\rho n\).  It is \emph{average-radius}
\((\rho,\ell,L)\)-list recoverable if, for every such choice of local
lists and every \(L+1\) distinct codewords
\(c^{(1)},\ldots,c^{(L+1)}\),
\(
  \sum_{j=1}^{L+1}\dis(c^{(j)},S)>\rho(L+1)n.
\)
The case \(\ell=1\) gives standard and average-radius list decoding.

\paragraph{Subspace designability.}
Let \(C\subseteq(\F_q^s)^n\) be an \(\F_q\)-linear classical code of
message dimension \(k\) and rate \(R_{\rm c}:=k/(sn)\).
Fix an injective encoder
\(\Enc_C:\F_q^k\to(\F_q^s)^n\).
For \(i\in[n]\), let \(\pi_i:\F_q^k\to\F_q^s\) be the \(i\)-th
coordinate map and set \(H_i:=\ker\pi_i\).

For an integer \(d\ge0\) and \(A\ge0\), we say that
\(H_1,\ldots,H_n\) form a \((d,A)\)-\emph{subspace design} if every
\(d\)-dimensional subspace \(U\le\F_q^k\) satisfies
\(
  \sum_i\dim(U\cap H_i)\le A.
\)

\begin{definition}[Subspace designability;
cf.~{\cite[Definition~B.2]{CZ25};
\cite[Definition~2.3 and Section~4]{BCDZ26BL}}]
\label{def:folded-subspace-designability}
For \(D\ge1\), we say that \(C\) is
\(D\)-\emph{subspace designable} if, for every
\(1\le d\le D\) and every \(d\)-dimensional subspace
\(U\le\F_q^k\),
\(
  \sum_{i=1}^n\dim(U\cap H_i)\le R_{\rm c}dn+1.
\)

For \(\mu>0\), we say that \(C\) is
\(\mu\)-\emph{slacked \(D\)-subspace designable} if, for every
\(1\le d\le D\) and every \(d\)-dimensional subspace
\(U\le\F_q^k\),
\(
  \sum_{i=1}^n\dim(U\cap H_i)\le(R_{\rm c}+\mu)dn.
\)
\end{definition}

Every \(D\)-subspace designable code is
\(\mu\)-slacked \(D\)-subspace designable whenever \(n\ge1/\mu\).

For folded Reed--Solomon codes, the following GW theorem confines the
messages in a near-capacity recovered list to a low-dimensional affine
subspace, while the GK theorem controls the corresponding coordinate
kernels.  By Remark~\ref{rem:mult}, the same statements apply to the
folded GRS multiplier twists appearing in FQRS codes.

\begin{theorem}[Guruswami--Wang affine containment;
{\cite[Theorem~7]{GW13};
see also \cite[Theorem~2.7]{BCDZ26BL}}]
\label{thm:gw}
Fix constants \(R_{\rm c}\in(0,1)\), \(0<\varepsilon<1-R_{\rm c}\), and
\(\ell\ge2\).
Let \(C\subseteq(\F_q^s)^n\) be an \(s\)-folded Reed--Solomon code
of rate \(R_{\rm c}=k/(sn)\) with appropriate evaluation points, and
suppose \(s\ge16\ell/\varepsilon^2\).
For every collection \(S_1,\ldots,S_n\subseteq\F_q^s\) with
\(|S_i|\le\ell\), the set of messages \(f\in\F_q^k\) satisfying
\(
  \dis(\Enc_C(f),S)\le(1-R_{\rm c}-\varepsilon)n
\)
is contained in an affine subspace of \(\F_q^k\) of dimension at most
\(4\ell/\varepsilon\).
\end{theorem}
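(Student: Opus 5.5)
The plan is to run the Guruswami--Wang \emph{linear-algebraic} list-recovery argument for folded Reed--Solomon codes. Fix an auxiliary integer \(m\le s\), to be chosen of order \(\ell/\varepsilon\). We look for a nonzero polynomial that is linear in the \(Y\)-variables,
\[
  Q(X,Y_1,\ldots,Y_m)=A_0(X)+A_1(X)Y_1+\cdots+A_m(X)Y_m,
\]
with \(\deg A_j\le D\) for \(j\ge1\) and \(\deg A_0\le D+k-1\).

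\textbf{Interpolation.} Write register \(i\) as the scalar points \(\alpha^{si},\ldots,\alpha^{si+s-1}\). For every \(i\in[n]\), every candidate \(y\in S_i\), and every window offset \(0\le j\le s-m\), impose the constraint
\[
  Q\bigl(\alpha^{si+j},y_{j+1},\ldots,y_{j+m}\bigr)=0 .
\]
This is at most \(n\ell(s-m+1)\) homogeneous linear constraints on \((m+1)(D+1)+k-1\) unknown coefficients. Choosing
\[
  D\approx\frac{n\ell(s-m+1)-k+1}{m+1}
\]
guarantees that a nonzero \(Q\) exists.

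\textbf{Vanishing.} Let \(f\) satisfy \(\dis(\Enc_C(f),S)\le(1-R_{\rm c}-\varepsilon)n\), so it agrees with the lists on \(t\ge(R_{\rm c}+\varepsilon)n\) registers. Set
\[
  P(X):=Q\bigl(X,f(X),f(\alpha X),\ldots,f(\alpha^{m-1}X)\bigr).
\]
Then \(\deg P\le D+k-1\), and \(P\) vanishes at the \(t(s-m+1)\) distinct points \(\alpha^{si+j}\) coming from the agreeing registers. We need
\[
  t(s-m+1)>D+k-1 ,
\]
which forces \(P\equiv0\). Substituting \(k=R_{\rm c}sn\), this reduces roughly to
\[
  \frac{t}{n}>\frac{\ell}{m+1}+\frac{m}{m+1}\cdot\frac{R_{\rm c}s}{s-m+1}.
\]
The right-hand side is at most \(R_{\rm c}+\varepsilon\) once \(m+1\ge 2\ell/\varepsilon\) and \(s\gtrsim m/\varepsilon\). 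Taking \(m=\lceil 4\ell/\varepsilon\rceil\) is then covered by \(s\ge16\ell/\varepsilon^2\), with some slack for the ceilings. I expect this parameter bookkeeping to be tedious but routine.

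\textbf{Affine structure and dimension (main step).} Every list member \(f\) satisfies
\[
  A_0(X)+\sum_{j=1}^m A_j(X)\,f(\alpha^{j-1}X)=0 .
\]
This is an \(\F_q\)-affine condition on the coefficient vector of \(f\), so the list lies in a coset of the kernel \(K\) of the linear map
\[
  f\mapsto\sum_j A_j(X)f(\alpha^{j-1}X).
\]
It remains to show \(\dim K\le m-1\), which I expect to be the main obstacle.

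First divide out the largest power of \(X\) dividing all of \(A_1,\ldots,A_m\); we may then assume some \(A_j(0)\ne0\). Let \(B(T):=\sum_j A_j(0)T^{j-1}\), a nonzero polynomial of degree at most \(m-1\). Compare coefficients of \(X^r\) in \(\sum_j A_j(X)f(\alpha^{j-1}X)=0\). This gives
\[
  B(\alpha^r)\,f_r=\text{(a linear combination of } f_0,\ldots,f_{r-1}).
\]
Since \(\alpha\) generates \(\F_q^\times\) and \(k\le N<q\), the values \(\alpha^0,\ldots,\alpha^{k-1}\) are distinct. Hence \(B(\alpha^r)=0\) for at most \(m-1\) indices \(r<k\). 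For every other \(r\), the coefficient \(f_r\) is determined by the earlier ones, so \(\dim K\le m-1\le 4\ell/\varepsilon\).

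By Remark~\ref{rem:mult}, the same conclusion transfers to the folded GRS multiplier twists.
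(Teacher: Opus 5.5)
Your proposal is correct. It is essentially the Guruswami--Wang linear-algebraic argument: interpolation with a \(Y\)-linear polynomial, vanishing at \(t(s-m+1)\) distinct points, and a triangular coefficient recursion with diagonal entries \(B(\alpha^r)\) giving \(\dim K\le m-1\). The paper does not reprove this theorem; it imports it by citation. One detail you should make explicit: when the list is nonempty, \(A_1,\ldots,A_m\) cannot all vanish, since otherwise \(P=A_0\) would be a nonzero polynomial of degree at most \(D+k-1\) with more than \(D+k-1\) roots, and this is what justifies dividing out the common power of \(X\) to obtain \(B\ne0\).
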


\begin{theorem}[Folded-RS subspace-design bound;
{\cite[Theorem~7]{GK16};
see also \cite[Theorem~2.4]{BCDZ26BL}}]
\label{thm:gk}
Suppose \(q>sn\), \(\alpha\) generates \(\F_q^\times\), and \(C\) is
the \(s\)-folding of the Reed--Solomon code evaluated at
\(1,\alpha,\ldots,\alpha^{sn-1}\).
Let \(k\) be its message dimension and \(H_i\) its coordinate kernels.
Then, for every \(0\le d\le s\) and every \(d\)-dimensional subspace
\(U\le\F_q^k\),
\(
  \sum_{i=1}^n\dim(U\cap H_i)
  \le d(k-1)/(s-d+1).
\)
\end{theorem}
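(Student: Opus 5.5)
The plan is the Guruswami--Kopparty \emph{folded Wronskian} argument. Fix \(1\le d\le s\) (the case \(d=0\) is trivial) and a \(d\)-dimensional \(U\le\F_q^k\), identified with a space of polynomials in \(\F_q[x]_{<k}\). Under this identification, \(H_i\) is the set of \(f\) vanishing at the \(s\) scalar points \(\alpha^{s(i-1)},\ldots,\alpha^{s(i-1)+s-1}\) of register \(i\). For a basis \(f_1,\ldots,f_d\) of \(U\), define
\[
  W(X):=\det\bigl[f_j(\alpha^{r}X)\bigr]_{0\le r\le d-1,\;1\le j\le d}.
\]
Changing basis multiplies \(W\) by a nonzero constant, so its roots with multiplicity depend only on \(U\). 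Each entry has degree at most \(k-1\), so \(\deg W\le d(k-1)\). The goal is to show \(W\ne0\) and that \(W\) has at least \((s-d+1)\sum_i\dim(U\cap H_i)\) roots counted with multiplicity. Comparing the two counts gives the claimed bound \(d(k-1)/(s-d+1)\).

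\textbf{Divisibility.} Fix \(i\) and set \(m_i:=\dim(U\cap H_i)\). Choose a basis of \(U\) whose first \(m_i\) elements lie in \(H_i\). Each such \(g_j\) factors as \(g_j=P_i h_j\) with \(P_i(X)=\prod_{t=0}^{s-1}(X-\alpha^{s(i-1)+t})\). The roots of \(P_i(\alpha^rX)\) are \(X=\alpha^{s(i-1)+t-r}\) for \(t\in[0,s-1]\). Hence the common roots over all \(r\in[0,d-1]\) include \(X=\alpha^{s(i-1)+u}\) for every \(u\in[0,s-d]\). Therefore every entry of each of the first \(m_i\) columns is divisible by
\[
  Q_i(X):=\prod_{u=0}^{s-d}\bigl(X-\alpha^{s(i-1)+u}\bigr),
\]
and so \(Q_i^{m_i}\mid W\). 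The exponents \(s(i-1)+u\) are distinct integers below \(sn<q-1+1\), and \(\alpha\) has order \(q-1\ge sn\). So the roots of different \(Q_i\) are distinct, the \(Q_i\) are pairwise coprime, and
\[
  \prod_i Q_i^{m_i}\mid W,
\]
a polynomial of degree \((s-d+1)\sum_i m_i\).

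\textbf{Main obstacle: \(W\not\equiv0\).} This is the folded-Wronskian criterion: if \(f_1,\ldots,f_d\in\F_q[x]_{<k}\) are linearly independent and \(\alpha\) has multiplicative order at least \(k\), then \(\det[f_j(\alpha^rX)]\ne0\). Here the order is \(q-1\ge sn\ge k\). I would prove it by induction on \(d\), or by a change of basis putting the \(f_j\) in echelon form with distinct lowest-degree monomials \(x^{e_1},\ldots,x^{e_d}\), \(e_j<k\). The lowest-order part of \(W\) is then a Vandermonde-type determinant \(\det[\alpha^{re_j}]\,X^{\sum e_j}\). This is nonzero because the \(\alpha^{e_j}\) are distinct, which uses \(e_j<k\le\operatorname{ord}(\alpha)\). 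Some care is needed to confirm that this term genuinely lowest and cannot cancel; in characteristic \(p\), the shift-based Wronskian avoids the usual derivative issue.

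Once \(W\ne0\) is established, the divisibility step gives \((s-d+1)\sum_i m_i\le\deg W\le d(k-1)\), which is the statement.
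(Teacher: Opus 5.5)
Your proposal is correct and is essentially the Guruswami--Kopparty folded-Wronskian argument of \cite[Theorem~7]{GK16}, which the paper cites without reproving. The non-cancellation point you flag is immediate: every term in the Leibniz expansion of \(W\) has degree at least \(\sum_j e_j\). Hence the coefficient of \(X^{\sum_j e_j}\) is exactly the Vandermonde determinant \(\det[\alpha^{re_j}]\), which is nonzero because \(k\le sn\le q-1\) makes the \(\alpha^{e_j}\) distinct.
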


In particular, if \(C\) has classical rate
\(R_{\rm c}=k/(sn)\), \(D\le s\), and
\(s\ge(D-1)(1+R_{\rm c}/\mu)\), then \(C\) is
\(\mu\)-slacked \(D\)-subspace designable.  Indeed, for
\(1\le d\le D\),
\(
  \frac{d(k-1)}{s-d+1}
  \le
  \frac{R_{\rm c}sdn}{s-D+1}
  \le
  (R_{\rm c}+\mu)dn.
\)

\paragraph{Entropy and remainder Brascamp--Lieb inequalities.}
The two Brascamp--Lieb applications use the preceding inputs slightly
differently.  For list recovery, the GW theorem first confines a
hypothetical bad list to a low-dimensional affine message space, and
the GK theorem then supplies the subspace-design bounds needed on its
translated span.  For fixed-list decoding, no affine-containment step
is needed: after translating \(L+1\) candidates so that one is zero,
their span has dimension at most \(L\), and the GK bound applies
directly.

In either case, the subspace-design bounds verify the dimension
condition of a discrete Brascamp--Lieb inequality, yielding entropy or
remainder inequalities for every distribution on the candidate span.
For a uniform distribution, entropy is the logarithm of the number of
candidates, while the local lists control the entropies of the
coordinate projections.  Comparing these quantities gives
list-recovery bounds.  The remainder inequality plays the analogous
role for fixed-list decoding~\cite{BCDZ26BL}.

For a discrete random variable \(Y\), let \(H(Y)\) denote its Shannon
entropy and define its \emph{remainder} by
\(r(Y):=1-\max_y\Pr[Y=y]\).

% \begin{theorem}[Discrete Brascamp--Lieb inequalities ;
% equivalent form of {\cite[Theorems~1.6 and~1.7]{BCDZ26BL}}]

\begin{theorem}[Discrete entropic Brascamp--Lieb and its remainder form~\cite{CDKSY13,CDKSY24,CCE09,BCDZ26BL}]
\label{thm:functional-bl}
Let \(V\) be a finite-dimensional \(\F_q\)-vector space, and let
\(\pi_i:V\to V_i\), \(i\in[n]\), be linear maps.
Suppose there are coefficients \(s_i\ge0\) such that every subspace
\(W\le V\) satisfies
\(
  \dim W\le\sum_i s_i\dim\pi_i(W).
\)
Then every random variable \(Y\) supported on \(V\) satisfies
\[
  H(Y)\le\sum_{i=1}^n s_iH(\pi_i(Y)),
  \qquad
  r(Y)\le\sum_{i=1}^n s_i r(\pi_i(Y)).
\]
\end{theorem}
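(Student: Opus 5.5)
The plan is to prove both inequalities by induction on \(\dim V\), following the critical-subspace method used for continuous and discrete Brascamp--Lieb inequalities~\cite{CCE09,CDKSY13}. For the remainder form, I would run the same induction with a substitute for the chain rule. The base case \(\dim V=0\) is trivial, since both sides vanish.

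\textbf{Reduction to a critical subspace.} Call a subspace \(0\ne W\lneq V\) \emph{critical} if \(\dim W=\sum_i s_i\dim\pi_i(W)\). If no critical subspace exists, I would decrease the coefficients \(s_i\) continuously while keeping the hypothesis valid. Both target inequalities only become stronger as the \(s_i\) decrease, since \(H,r\ge0\). Because the hypothesis involves finitely many subspaces, one of two events must occur.
\begin{enumerate}
\item Some proper nonzero \(W\) becomes critical. We are then in the critical case below.
\item Some \(s_i\) hits \(0\), or \(V\) itself is tight. If \(V\) is tight with no proper critical subspace, then a one-dimensional \(W=\langle v\rangle\) gives \(1\le\sum_{i:\pi_i v\ne0}s_i\). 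I would then finish directly by a Shearer/Han-type averaging over a basis. I expect this to need care; in any case this degenerate sub-case is where I would adapt the argument of~\cite{CDKSY13} verbatim.
\end{enumerate}

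\textbf{Critical case.} Given a critical \(W\), the sub-datum \((W,\pi_i|_W)\) inherits the hypothesis. The quotient datum \(\bar\pi_i:V/W\to V_i/\pi_i(W)\) also satisfies it: for \(W\le W'\),
\[
\dim W'-\dim W\le\sum_i s_i\bigl(\dim\pi_i(W')-\dim\pi_i(W)\bigr),
\]
using criticality of \(W\) to subtract. For entropy, the chain rule gives
\[
H(Y)=H(Y+W)+H(Y\mid Y+W).
\]
On the other side, \(H(\pi_iY)\ge H(\pi_iY+\pi_iW)+H(\pi_iY\mid Y+W)\), because \(\pi_iY+\pi_iW\) is a function of \(Y+W\) and conditioning reduces entropy. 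Conditioned on \(Y+W=v+W\), the variable \(Y-v\) is supported on \(W\), so induction applies to each conditional law. Averaging these bounds and applying induction to the quotient term yields the entropy inequality.

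\textbf{Remainder form, the main obstacle.} Remainder has no exact chain rule, so I would use the fiber decomposition already used in the technical overview. Let \(\bar Y:=Y+W\). Let \(a_v\) be the largest point mass in the fiber over \(v+W\), let \(A=\sum_va_v\), and let \(a_\star=\max_va_v\). Then
\[
r(Y)=(A-a_\star)+(1-A).
\]
The term \(1-A=\sum_v\Pr[\bar Y=v]\,r(Y\mid\bar Y=v)\) is bounded by the induction on \(W\), applied fiberwise and summed. The term \(A-a_\star\) is the remainder of the subnormalized measure \(v\mapsto a_v\) on \(V/W\). Remainder is positively homogeneous, so induction on the quotient applies to it after normalization. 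It remains to check that
\[
r(\pi_iY)\ge\Bigl[\text{quotient remainder of }(a_v)\text{ under }\bar\pi_i\Bigr]+\sum_v\Pr[\bar Y=v]\,r\bigl(\pi_iY\mid\bar Y=v\bigr).
\]
I expect this coordinate-side superadditivity to be the delicate step. It may fail for an arbitrary choice of fiber maxima, and I would try to repair it by choosing representatives so that \(\pi_i\) of the heaviest elements are aligned. Failing that, I would prove the remainder inequality directly from the entropy one by passing through Rényi/min-entropy monotonicity, as in~\cite{BCDZ26BL}.
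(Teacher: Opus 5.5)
Your critical-subspace factorization is sound for both forms. The case you set aside, however, is a genuine gap and is not degenerate: \(V\) tight with no proper nonzero critical subspace. In that case no \(s_i\) can be lowered. A Shearer/Han argument over a single basis cannot work either, because the \(\pi_i\) need not be coordinate projections in any common basis. For instance, on \(V=\F_q^2\) take \(\pi_1(x,y)=x\), \(\pi_2(x,y)=y\), \(\pi_3(x,y)=x+y\), and \(s_i=2/3\). Then \(V\) is tight, and every line \(\langle v\rangle\) lies in at most one kernel, so \(\sum_{i:\,\pi_iv\ne0}s_i\ge4/3>1\) and no line is critical. The target inequality is exactly the average of the three inequalities \(H(Y)\le H(\pi_iY)+H(\pi_jY)\) over the jointly injective pairs.

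The missing idea is that both target inequalities are linear in \(s\). Capping any \(s_j>1\) at \(1\) preserves the hypothesis: apply it to \(W\cap\ker\pi_j\) and add \(\dim\pi_j(W)\). So the admissible coefficients form a polytope in \([0,1]^n\), and it suffices to prove both inequalities at its vertices. A vertex with a proper nonzero critical subspace is handled by your factorization and induction on \(\dim V\). At a vertex without one, the only possible non-box active constraint is that of \(W=V\), so at most one coefficient, say \(s_k\), lies strictly between \(0\) and \(1\). The hypothesis applied to \(K:=\bigcap_{i:\,s_i=1}\ker\pi_i\) then gives \(\dim K\le s_k\dim K\) (or \(\dim K\le0\) if there is no such \(k\)), so \(K=0\). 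Thus \(Y\) is determined by \((\pi_iY)_{s_i=1}\). Subadditivity of \(H\) finishes the entropy case, and \(r(Y',Y'')\le r(Y')+r(Y'')\), by a union bound, finishes the remainder case.

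Separately, the coordinate-side superadditivity you flagged holds for all data, with no alignment of representatives needed; the \(a_v\) are just numbers. Put \(P_v:=\Pr[\bar Y=v]\) and \(b_{i,v}:=\max_y\Pr[\pi_iY=y,\ \bar Y=v]\). Then \(b_{i,v}\ge a_v\) and \(\sum_vP_v\,r(\pi_iY\mid\bar Y=v)=1-\sum_vb_{i,v}\). The quotient term equals \(A-\max_z\sum_{v:\,\bar\pi_i(v)=z}a_v\), with \(z\) ranging over \(V_i/\pi_i(W)\). The claim therefore reduces to \(\Pr[\pi_iY=y]\le\sum_v(b_{i,v}-a_v)+\max_z\sum_{v:\,\bar\pi_i(v)=z}a_v\) for every \(y\). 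This follows from \(\Pr[\pi_iY=y]\le\sum_{v:\,\bar\pi_i(v)=y+\pi_i(W)}b_{i,v}\) together with \(b_{i,v}-a_v\ge0\).

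Your min-entropy fallback would not have been available. Let \(H_\infty(Y):=-\log\max_y\Pr[Y=y]\), take the two coordinate projections of \(\F_2^2\) with \(s=(1,1)\), and let \(Y\) be uniform on \(\{(0,0),(0,1),(1,0)\}\). Then \(H_\infty(Y)=\log3>2\log(3/2)=H_\infty(\pi_1Y)+H_\infty(\pi_2Y)\), so no min-entropy Brascamp--Lieb inequality holds.
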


We will use the following numerical folded-RS list-recovery and
list-decoding consequences.

\begin{corollary}[Folded-RS BL inputs and list recovery;
{\cite[Corollary~4.5 and Remark~4.7]{BCDZ26BL}}]
\label{cor:frs-list-recovery}
Fix \(R_{\rm c}\in(0,1)\),
\(0<\varepsilon<1-R_{\rm c}\), and \(\ell\ge2\), and set
\(
  L:=
  \left\lfloor
    \left(
      \frac{\ell}{R_{\rm c}+\varepsilon/2}
    \right)^{3+2R_{\rm c}/\varepsilon}
  \right\rfloor.
\)
Let \(C\subseteq(\F_q^s)^n\) be an \(s\)-folded Reed--Solomon code
of rate \(R_{\rm c}\) with appropriate evaluation points.

If \(s\ge16\ell/\varepsilon^2\), then \(C\) is
\(\varepsilon/3\)-slacked
\((4\ell/\varepsilon)\)-subspace designable and is
\((1-R_{\rm c}-\varepsilon,\ell,L)\)-list recoverable.

If instead
\(
  s\ge L(1+3R_{\rm c}/\varepsilon),
\)
then \(C\) is
\(\varepsilon/3\)-slacked
\(L\)-subspace designable and is average-radius
\((1-R_{\rm c}-\varepsilon,\ell,L)\)-list recoverable.
\end{corollary}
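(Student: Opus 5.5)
The plan is to derive both parts from the Guruswami--Kopparty bound (Theorem~\ref{thm:gk}), used to get slacked subspace designability. This is then fed into the entropic Brascamp--Lieb inequality (Theorem~\ref{thm:functional-bl}) on the span of a translated candidate list. For the standard-radius part, Guruswami--Wang (Theorem~\ref{thm:gw}) is inserted first to keep that span low-dimensional.

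\textbf{Step 1: designability.} Set \(\mu=\varepsilon/3\) and \(D=4\ell/\varepsilon\). By the remark after Theorem~\ref{thm:gk}, it suffices that \(D\le s\) and \(s\ge (D-1)(1+3R_{\rm c}/\varepsilon)\). Since \(\varepsilon<1\) and \(R_{\rm c}<1\),
\[
  \frac{4\ell}{\varepsilon}\Bigl(1+\frac{3R_{\rm c}}{\varepsilon}\Bigr)\le \frac{4\ell}{\varepsilon^2}+\frac{12\ell}{\varepsilon^2}=\frac{16\ell}{\varepsilon^2}.
\]
So \(s\ge16\ell/\varepsilon^2\) gives \(\varepsilon/3\)-slacked \(D\)-designability. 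In the second regime we take \(D=L\), and \(s\ge L(1+3R_{\rm c}/\varepsilon)\) gives \(\varepsilon/3\)-slacked \(L\)-designability directly.

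\textbf{Step 2: reduction to a BL inequality.} Suppose \(M\) messages have encodings within radius \(\rho=1-R_{\rm c}-\varepsilon\) of \(S\). In the first regime we take \(M=L+1\) messages among those within radius \(\rho\); in the second, \(M=L+1\) messages with average disagreement at most \(\rho n\).

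In the first regime, Theorem~\ref{thm:gw} places all such messages in an affine subspace of dimension at most \(D\). In the second regime, after translating by one candidate, the span of the \(L+1\) candidates has dimension at most \(L\). In both cases, translate by a candidate \(f_0\) (and the lists \(S_i\) by \(\pi_i(f_0)\)) and let \(V\) be the resulting span, with \(\dim V\le D\).

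For any \(W\le V\), \(\dim\pi_i(W)=\dim W-\dim(W\cap H_i)\). Designability then gives
\[
  \sum_i\dim\pi_i(W)\ge (1-R_{\rm c}-\mu)\,n\dim W .
\]
Hence Theorem~\ref{thm:functional-bl} applies with uniform coefficients \(s_i=1/((1-R_{\rm c}-\mu)n)\).

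\textbf{Step 3: the entropy comparison.} Let \(Y\) be uniform on \(M\) candidates, so \(H(Y)=\log M\). Let \(p_i\) be the probability that \(Y_i\notin S_i\); the average-disagreement hypothesis gives \(\sum_ip_i\le\rho n\). Conditioning on the event \(Y_i\in S_i\) yields
\[
  H(Y_i)\le h(p_i)+(1-p_i)\log\ell+p_i\log M .
\]
Summing over \(i\) and using concavity of \(h\),
\[
  (1-R_{\rm c}-\mu)\,n\log M\le n\,h(\rho)+(1-\rho)\,n\log\ell+\rho\, n\log M .
\]
Since \(1-R_{\rm c}-\mu-\rho=2\varepsilon/3\), this rearranges to
\[
  \log M\le \frac{3}{2\varepsilon}\bigl(h(\rho)+(R_{\rm c}+\varepsilon)\log\ell\bigr).
\]
Having \(M=L+1\) then contradicts the stated value of \(L\), which gives both the standard and the average-radius claims. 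The standard-radius claim is the special case where every candidate individually has disagreement at most \(\rho n\).

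\textbf{Main obstacle.} The hard part is the bookkeeping that turns this inequality into exactly \((\ell/(R_{\rm c}+\varepsilon/2))^{3+2R_{\rm c}/\varepsilon}\). The binary-entropy term must be absorbed into the base \(\ell/(R_{\rm c}+\varepsilon/2)\) rather than left as an additive \(O(1/\varepsilon)\). I would first try a sharper per-coordinate bound that optimizes \(H(Y_i)\) jointly over \(p_i\) (a Lagrangian or concavity argument, as in BCDZ). I would also keep \(\mu\) as a free parameter to balance the resulting constants.
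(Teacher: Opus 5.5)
Your Steps 1 and 2 are sound. They follow the route the paper relies on: the paper does not reprove this corollary and cites BCDZ26BL, but the consequence noted after Theorem~\ref{thm:gk} and Remark~\ref{rem:certificate-to-list-conversion} use the same chain. That chain is GK $\Rightarrow$ $\varepsilon/3$-slacked designability, GW in the standard-radius regime, entropic BL $\Rightarrow$ $\sum_iH(Y_i)\ge(1-R_{\rm c}-\varepsilon/3)nH(Y)$, and finally an entropy count.

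The gap is the last sentence of Step~3. The inequality you derive does not imply $M\le L$, so ``having $M=L+1$ then contradicts the stated value of $L$'' does not follow. The loss comes from $H(Y_i\mid Y_i\notin S_i)\le\log M$. It leaves an additive $\frac{3}{2\varepsilon}h(\rho)=\Theta(1/\varepsilon)$ in $\log M$, which the base $\ell/(R_{\rm c}+\varepsilon/2)$ cannot absorb when $\ell$ is small. For example, take $\ell=2$, $R_{\rm c}=0.45$, $\varepsilon=0.05$, so $\rho=1/2$. Your bound allows $\log M\le 30\cdot\frac32\log2\approx31.2$, whereas $\log(L+1)\approx21\log(2/0.475)\approx30.2$. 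Keeping $\mu$ free does not close this in the average-radius regime. At $s=L(1+3R_{\rm c}/\varepsilon)$, the slack that Theorem~\ref{thm:gk} provides is $R_{\rm c}(L-1)/(s-L+1)=\frac{\varepsilon}{3}\bigl(1-O(1/L)\bigr)$. Two smaller points: $\sum_ih(p_i)\le nh(\rho)$ needs $\rho\le1/2$, and merging the $\log\ell$ and $\log M$ terms needs $M\ge\ell$.

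The missing idea is to use the actual size of the off-list part. Conditioned on $Y_i\notin S_i$, $Y$ is uniform on $p_iM$ candidates, so $H(Y_i\mid Y_i\notin S_i)\le\log(p_iM)$. Since $h(p)+p\log p=-(1-p)\log(1-p)$, the binary-entropy term cancels and
\[
  H(Y_i)\le\varphi(p_i),\qquad
  \varphi(p):=(1-p)\log\frac{\ell}{1-p}+p\log M .
\]
Here $\varphi$ is concave, with $\varphi'(p)=\log\bigl(eM(1-p)/\ell\bigr)$. This is positive on $[0,\rho]$ once $M\ge\ell/(R_{\rm c}+\varepsilon)$, which holds for $M=L+1$. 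Jensen with $\frac1n\sum_ip_i\le\rho$ then gives $\sum_iH(Y_i)\le n\varphi(\rho)$. Combining this with your Step~2 inequality $(1-R_{\rm c}-\varepsilon/3)n\log M\le\sum_iH(Y_i)$ yields
\[
  \frac{2\varepsilon}{3}\log M\le(R_{\rm c}+\varepsilon)\log\frac{\ell}{R_{\rm c}+\varepsilon},
  \quad\text{so}\quad
  M\le\Bigl(\frac{\ell}{R_{\rm c}+\varepsilon}\Bigr)^{\frac32+\frac{3R_{\rm c}}{2\varepsilon}}
  <\Bigl(\frac{\ell}{R_{\rm c}+\varepsilon/2}\Bigr)^{3+\frac{2R_{\rm c}}{\varepsilon}} .
\]
Hence $M\le L$, contradicting $M=L+1$ in both regimes. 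With this replacement your argument is complete, and it in fact gives a smaller list size than the one stated.
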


% \begin{remark}[Entropy-to-list-recovery conversion]
% \label{rem:entropy-to-list-recovery}
% The proof of~\cite[Theorem~4.2]{BCDZ26BL} uses subspace
% designability to obtain the required entropy inequality.  The
% remaining calculation depends only on this inequality.  In
% particular, in the setting of Corollary~\ref{cor:frs-list-recovery},
% if every distribution supported on at most \(L+1\) candidates
% satisfies
% \(
%   \sum_{i=1}^n H(Y_i)
%   \ge
%   \left(1-R_{\rm c}-\frac{\varepsilon}{3}\right)nH(Y),
% \)
% then the entropy calculation in the proof of
% \cite[Theorem~4.2]{BCDZ26BL} gives average-radius
% \((1-R_{\rm c}-\varepsilon,\ell,L)\)-list recovery, and hence
% standard-radius list recovery.
% \end{remark}

\begin{theorem}[Fixed-list remainder bound;
average-radius form of {\cite[Theorem~5.2]{BCDZ26BL}}]
\label{thm:bcdz-rem}
For \(i\in[n]\), let
\(\pi_i:\F_q^k\to\F_q^s\) be linear with kernel \(H_i\), and suppose
that, for every \(d\in\{0,1,\ldots,s\}\),
\(H_1,\ldots,H_n\) form a
\(\bigl(d,d(k-1)/(s-d+1)\bigr)\) subspace design.
Let
\(C:=\{(\pi_1(x),\ldots,\pi_n(x)):x\in\F_q^k\}\).
Then, for every integer \(1\le L\le s\), \(C\) is average-radius
\((\rho,L)\)-list decodable for every
\(
  \rho<
  \frac{L}{L+1}
  \left(
    1-\frac{k-1}{n(s-L+1)}
  \right),
\)
and hence is also standard-radius \((\rho,L)\)-list decodable.
\end{theorem}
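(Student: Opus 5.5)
The plan is to reduce the statement to the remainder form of Theorem~\ref{thm:functional-bl}, applied to the uniform distribution on a hypothetical bad list. Fix a center \(y\in(\F_q^s)^n\) and \(L+1\) distinct codewords \(c^{(j)}=(\pi_i(x_j))_i\), where \(x_0,\ldots,x_L\in\F_q^k\). These messages are pairwise distinct, since distinct codewords have distinct preimages. The goal is
\[
  \sum_{j} d_H(c^{(j)},y)>\rho(L+1)n .
\]
Set \(\delta:=1-\frac{k-1}{n(s-L+1)}\). If \(\delta\le0\), the admissible range of \(\rho\) contains only negative values and there is nothing to prove, so assume \(\delta>0\).

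First I translate the list. Let \(V:=\operatorname{span}\{x_j-x_0\}\), so \(\dim V\le L\le s\). Let \(Y\) be uniform on the translated messages \(\{x_j-x_0\}\subseteq V\). Translating by \(x_0\) on the message side, and by \(\pi_i(x_0)\) on each coordinate, is a bijective relabeling. Hence it changes neither remainders nor Hamming distances, once the center is shifted to \(y-\Enc(x_0)\).

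Next I verify the Brascamp--Lieb dimension hypothesis on \(V\) with constant weights \(s_i:=1/(n\delta)\). Take any subspace \(W\le V\) and put \(d:=\dim W\le L\). The subspace-design hypothesis at dimension \(d\) gives
\[
  \sum_i\dim(W\cap H_i)\le \frac{d(k-1)}{s-d+1}\le \frac{d(k-1)}{s-L+1}.
\]
By rank--nullity, \(\dim\pi_i(W)=d-\dim(W\cap H_i)\). Summing over \(i\) yields \(\sum_i\dim\pi_i(W)\ge dn\delta\), which is exactly \(\dim W\le\sum_i s_i\dim\pi_i(W)\). This is the step I expect to need the most care. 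Theorem~\ref{thm:functional-bl} must be applied with \(V\) as the ambient space rather than \(\F_q^k\), because the design bound is only available for \(d\le s\), and restricting to \(V\) is what guarantees \(d\le L\le s\). The monotonicity \(\frac{1}{s-d+1}\le\frac{1}{s-L+1}\) for \(d\le L\) is the other point to check.

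Then I apply the remainder inequality, which gives \(r(Y)\le\frac{1}{n\delta}\sum_i r(\pi_i(Y))\). Since \(Y\) is uniform on \(L+1\) distinct points, \(r(Y)=L/(L+1)\). By the distance interpretation of remainder,
\[
  \sum_i r(\pi_i(Y))=\min_{y'}\E\bigl[d_H(\Enc(Y),y')\bigr]\le \E\bigl[d_H(\Enc(Y),y)\bigr]=\frac{1}{L+1}\sum_j d_H(c^{(j)},y),
\]
where \(\Enc\) and the center are the translated ones. Combining the two displays gives \(\sum_j d_H(c^{(j)},y)\ge n\delta L\). Finally, \(\rho<\frac{L}{L+1}\delta\) yields \(n\delta L>\rho(L+1)n\), which is average-radius \((\rho,L)\)-list decodability. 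Standard-radius list decodability follows because the average-radius notion implies it.
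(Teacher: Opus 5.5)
Your proposal is correct and follows essentially the same route the paper relies on from \cite{BCDZ26BL}: restricting to the at most $L$-dimensional translated span $V$ lets the subspace-design bound verify the Brascamp--Lieb condition with weights $1/(n\delta)$, which yields exactly the intermediate remainder inequality of Remark~\ref{rem:bcdz-rem-intermediate}. Your final step, taking the uniform distribution on the $L+1$ candidates and using the distance interpretation of remainder, is the remainder-to-list conversion described in Remark~\ref{rem:certificate-to-list-conversion}.
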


\begin{remark}[Intermediate remainder inequality]
\label{rem:bcdz-rem-intermediate}
Under the hypotheses of Theorem~\ref{thm:bcdz-rem}, fix
\(1\le L\le s\).
The proof of~\cite[Theorem~5.2]{BCDZ26BL} shows the following
intermediate statement.  If \(V\le\F_q^k\) has
\(\dim V\le L\), then every probability distribution \(X\)
supported on \(V\) satisfies
\(
  \sum_{i=1}^n r(\pi_i(X))
  \ge
  \left(
    1-\frac{k-1}{n(s-L+1)}
  \right)n\,r(X).
\)
By translation invariance of remainder, the same inequality holds
for distributions supported on any affine subspace \(a+V\) of
dimension at most \(L\).
\end{remark}

\begin{remark}[Entropy and remainder certificate conversions]
\label{rem:certificate-to-list-conversion}
The final conversion steps in the proofs of
\cite[Theorems~4.2 and~5.2]{BCDZ26BL} depend only on the corresponding
entropy or remainder inequality, and not on how that inequality is
obtained.

For the entropy case, let \(R_{\rm c},\varepsilon,\ell\), and \(L\)
be as in Corollary~\ref{cor:frs-list-recovery}.  If every distribution
supported on at most \(L+1\) candidates satisfies
\(
  \sum_{i=1}^n H(Y_i)
  \ge
  \left(1-R_{\rm c}-\frac{\varepsilon}{3}\right)nH(Y),
\)
then the entropy calculation in the proof of
\cite[Theorem~4.2]{BCDZ26BL} gives average-radius
\((1-R_{\rm c}-\varepsilon,\ell,L)\)-list recovery, and hence
standard-radius list recovery.

For the remainder case, if every distribution supported on at most
\(L+1\) candidates satisfies
\(
  \sum_{i=1}^n r(Y_i)\ge\delta n\,r(Y),
\)
then the remainder calculation in the proof of
\cite[Theorem~5.2]{BCDZ26BL} gives average-radius
\((\rho,L)\)-list decoding for every
\(
  \rho<\frac{L}{L+1}\delta,
\)
and hence standard-radius list decoding.
\end{remark}

Finally, we use the following universal lower bound for classical
linear-code list recovery.

\begin{theorem}[Linear-code list-recovery lower bound;
bad-list form of {\cite[Theorem~12]{LS25}}]
\label{thm:ls-lr-lower}
Fix \(R_{\rm c}\in(0,1)\),
\(0<\varepsilon<1-R_{\rm c}\), and an integer \(2\le\ell\le q\).
For all sufficiently large \(n\), every rate-\(R_{\rm c}\)
\(\F_q\)-linear code \(C\subseteq\F_q^n\) admits local lists
\(S_1,\ldots,S_n\subseteq\F_q\), with \(|S_i|\le\ell\), and at least
\(M\) distinct codewords \(c\in C\), where
\(M:=\ell^{\lfloor R_{\rm c}/\varepsilon\rfloor}\), satisfying
\(\dis(c,S)\le(1-R_{\rm c}-\varepsilon)n\).
Equivalently, \(C\) is not
\((1-R_{\rm c}-\varepsilon,\ell,M-1)\)-list recoverable.
\end{theorem}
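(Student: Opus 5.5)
We would prove this by the diagonal product-grid construction sketched for FQRS in the technical overview, now for an arbitrary linear code. Set \(k:=\dim C=R_{\rm c}n\) and \(b:=\lfloor R_{\rm c}/\varepsilon\rfloor\). If \(b=0\), then \(M=1\), and any single codeword with \(S_i:=\{c_i\}\) works. So assume \(b\ge1\). The goal is to find pairwise disjoint coordinate sets \(J_1,\ldots,J_b\subseteq[n]\), each of size \(m\), with \(bm\ge(R_{\rm c}+\varepsilon)n\), together with codewords \(c_1,\ldots,c_b\in C\) such that
\[
c_j|_{J_h}=0\quad(h\ne j),\qquad c_j|_{J_j}\ne0 .
\]
Fix \(T\subseteq\F_q\) with \(|T|=\ell\), which is possible because \(\ell\le q\). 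The bad list is then \(\{\sum_j t_jc_j: t\in T^b\}\). For \(i\in J_j\) we have \((\sum_h t_hc_h)_i=t_j(c_j)_i\), so \(S_i:=\{t(c_j)_i:t\in T\}\) has size at most \(\ell\). For \(i\notin\bigcup_jJ_j\) we put \(S_i:=\emptyset\). Every candidate therefore satisfies \(\dis\le n-bm\le(1-R_{\rm c}-\varepsilon)n\).

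For distinctness, restrict to \(J_j\). On \(J_j\) only \(c_j\) survives, and \(c_j|_{J_j}\ne0\). Hence \(\sum_h(t_h-t'_h)c_h=0\) forces \(t_j=t'_j\) for every \(j\), so all \(\ell^b=M\) candidates are distinct codewords.

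Parameters come from a kernel dimension count. The codewords vanishing on \(U_j:=\bigcup_{h\ne j}J_h\) form a subspace of dimension at least \(k-(b-1)m\). Take \(m:=\lfloor (k-2)/(b-1)\rfloor\) when \(b\ge2\), and \(m:=\lceil (R_{\rm c}+\varepsilon)n\rceil\) when \(b=1\). Then this subspace has dimension at least \(2\). We also need
\[
bm\approx\frac{b}{b-1}R_{\rm c}n=R_{\rm c}n+\frac{R_{\rm c}n}{b-1}\ge(R_{\rm c}+\varepsilon)n ,
\]
which holds because \(b-1<R_{\rm c}/\varepsilon\). The inequality is strict, so the \(O(1)\) rounding losses are absorbed once \(n\) is sufficiently large; this is where that hypothesis is used. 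We also need \(bm\le n\), which follows from \(R_{\rm c}+\varepsilon<1\) and the same slack.

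The main obstacle is the non-degeneracy condition \(c_j|_{J_j}\ne0\). A codeword vanishing on \(U_j\) might also vanish on all of \(J_j\). We would handle this by choosing the sets inside a structured frame rather than arbitrarily. Fix an information set \(I\) of size \(k\), and require each \(J_j\) to contain a designated pivot \(p_j\in I\). The remaining coordinates of \(J_j\) are then allotted so that \(U_j\cap I\) avoids \(p_j\). For the vanishing constraints on \(U_j\setminus I\), we would use the spare dimension to get a kernel codeword nonzero at \(p_j\). The first thing we would try is to show that the linear functional "evaluate at \(p_j\)" is not identically zero on \(\{c\in C:c|_{U_j}=0\}\). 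Since this kernel has dimension at least \(2\), such a failure forces a linear dependency among coordinate functionals that can be avoided by swapping one coordinate of \(U_j\) for another, or by shrinking \(m\) by one.

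If this local repair turns out to be messy, the fallback is a greedy sequential construction. Pick \(c_1\) and \(J_1\) first, including in \(J_1\) a coordinate in the support of \(c_1\). Then choose each subsequent \(c_j\) in the common kernel of the previously fixed coordinates. At each step we track dimensions, so that a nonzero codeword with a support coordinate that can be added to \(J_j\) always remains available.
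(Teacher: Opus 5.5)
Your product construction is the right one. It is the product-type bad list that the paper attributes to~\cite{LS25} and imports without proof, and it is the scalar analogue of the diagonal grid in Section~\ref{subsec:fqrs-soft-grid}. Your list-size, disagreement, and distinctness bookkeeping is correct, but two steps fail as written.

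First, the choice \(m=\lfloor (k-2)/(b-1)\rfloor\) puts \(m\) at the top of the feasible window, and then \(bm\le n\) does not follow. We have \(bm\approx\frac{b}{b-1}R_{\rm c}n\), which exceeds \(n\) whenever \(b<1/(1-R_{\rm c})\). For example, \(R_{\rm c}=2/3\) and \(\varepsilon=0.3\) give \(b=2\), \(m\approx 2n/3\), and \(bm\approx 4n/3>n\). Your slack argument only controls the lower constraint \(bm\ge(R_{\rm c}+\varepsilon)n\). Take \(m\) at the bottom of the window instead, \(m:=\lceil (R_{\rm c}+\varepsilon)n/b\rceil\), which also covers \(b=1\). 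Then \(bm\le(R_{\rm c}+\varepsilon)n+b\le n\) for large \(n\). Using \((b-1)\varepsilon\le R_{\rm c}-\varepsilon\), we also get \(k-(b-1)m\ge \frac{n}{b}\bigl(R_{\rm c}-(b-1)\varepsilon\bigr)-(b-1)\ge \varepsilon n/b-(b-1)\ge1\) for large \(n\).

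Second, you correctly identify the non-degeneracy condition \(c_j|_{J_j}\ne0\) as the crux, but you leave it unproved. The pivot plan can fail no matter how much spare dimension you have. If some coordinate of \(U_j\) is a scalar multiple of coordinate \(p_j\) on \(C\) (a repeated generator column), then every codeword vanishing on \(U_j\) also vanishes at \(p_j\). The greedy fallback is incomplete as well, because \(c_1\) must vanish on the blocks \(J_2,\ldots,J_b\) that are chosen after it, so those blocks are not free.

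The missing idea is a single global choice. Since \(|J_1\cup\cdots\cup J_b|=bm\ge(R_{\rm c}+\varepsilon)n>k\), you can choose the union \(J\) to contain an information set \(I\) of \(C\); this is the same device as in Lemma~\ref{lem:coset-gsb}. Then split \(J\) into \(b\) blocks of size \(m\) arbitrarily. A codeword vanishing on \(J\) vanishes on \(I\) and hence is \(0\). The space \(\{c\in C:c|_{U_j}=0\}\) has dimension at least \(k-(b-1)m\ge1\), so it contains a nonzero \(c_j\), and any such \(c_j\) automatically satisfies \(c_j|_{J_j}\ne0\). With these two repairs your argument proves the statement, and you never need kernel dimension \(2\).
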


\section{A unified certificate framework for CSS achievability}
\label{sec:qlr-upper}

A quantum bad list can be viewed as a classical list of paired
\(X/Z\) candidates.  A natural
sectorwise approach bounds the two projections separately and multiplies
the resulting list sizes.  For the structured and random CSS families
studied here, we instead use one-sector distributional certificates and
show that they combine without any additional loss in the certificate
coefficient.

Recall that for a discrete random variable \(Y\),
\(H(Y):=-\sum_y\Pr[Y=y]\log\Pr[Y=y]\) and
\(r(Y):=1-\max_y\Pr[Y=y]=\min_y\Pr[Y\ne y]\).
Thus \(H(Y)\) measures uncertainty, while \(r(Y)\) is the minimum
one-shot guessing error.  We use natural logarithms throughout, with
\(0\log0:=0\).

Recall also the coordinate certificates of
Definition~\ref{def:certificate}.  If \(Y\) is a random word supported
on \(C\subseteq\Sigma^n\), with \(Y_i\) its \(i\)-th coordinate random
variable, then
\(\mathsf{Cert}^{H}(\delta)\) means
\(\sum_i H(Y_i)\ge\delta n H(Y)\), while
\(\mathsf{Cert}^{r}(\delta)\) means
\(\sum_i r(Y_i)\ge\delta n r(Y)\).

To prove an output-list bound \(L\), it suffices to rule out a bad
list of \(L+1\) stabilizer-distinct candidates.  The list-decoding and
list-recovery applications therefore require certificates only for
distributions of bounded support.  For \(B\ge1\), we accordingly
define \(\mathsf{Cert}^{H}_B(\delta)\) and
\(\mathsf{Cert}^{r}_B(\delta)\) by restricting the corresponding
certificate requirements to distributions supported on at most
\(B\) words.

The pairing argument itself, however, does not require any prescribed
bound on the support size.  We therefore first prove the 
pairing lemma with no support size cap and then state the bounded-support version as an immediate
corollary.

\begin{lemma}[Pairing of coordinate certificates]
\label{lem:pairing}
Let \(\delta_X,\delta_Z\in(0,1]\), and let
\(\mathcal A\subseteq\Sigma_X^n\times\Sigma_Z^n\) be nonempty.
Define its \(X\)-projection by
\(
  \mathcal A_X:=\{x:\exists z,\ (x,z)\in\mathcal A\},
\)
and, for each \(x\in\mathcal A_X\), define the corresponding
\(Z\)-fiber by
\(
  \mathcal A_Z(x):=\{z:(x,z)\in\mathcal A\}.
\)
Set \(\delta_*:=\min\{\delta_X,\delta_Z\}\).

\begin{enumerate}
\item
If \(\mathcal A_X\) satisfies
\(\mathsf{Cert}^{H}(\delta_X)\) and every \(Z\)-fiber
\(\mathcal A_Z(x)\) satisfies
\(\mathsf{Cert}^{H}(\delta_Z)\), then every random pair
\((X,Z)\) supported on \(\mathcal A\) satisfies
\(
  \sum_{i=1}^n H(X_i,Z_i)
  \ge
  \delta_* n H(X,Z).
\)

\item
If \(\mathcal A_X\) satisfies
\(\mathsf{Cert}^{r}(\delta_X)\) and every \(Z\)-fiber
\(\mathcal A_Z(x)\) satisfies
\(\mathsf{Cert}^{r}(\delta_Z)\), then every random pair
\((X,Z)\) supported on \(\mathcal A\) satisfies
\(
  \sum_{i=1}^n r(X_i,Z_i)
  \ge
  \delta_* n r(X,Z).
\)
\end{enumerate}
\end{lemma}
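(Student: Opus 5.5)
The proof treats the two parts separately. Entropy uses the chain rule. Remainder uses its distance interpretation together with a fiberwise decomposition of the probability mass.

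\emph{Entropy.} Let \((X,Z)\) be supported on \(\mathcal A\) and write \(p(x):=\Pr[X=x]\). For each \(i\), the chain rule and the fact that conditioning on more reduces entropy give
\[
H(X_i,Z_i)=H(X_i)+H(Z_i\mid X_i)\ge H(X_i)+H(Z_i\mid X).
\]
Summing over \(i\), the first sum is at least \(\delta_X nH(X)\), because \(X\) is supported on \(\mathcal A_X\). For the second sum, write \(\sum_i H(Z_i\mid X)=\sum_x p(x)\sum_i H(Z_i\mid X=x)\). For each \(x\), the conditional law of \(Z\) given \(X=x\) is supported on \(\mathcal A_Z(x)\), so the fiber certificate gives \(\sum_i H(Z_i\mid X=x)\ge\delta_Z n H(Z\mid X=x)\). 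Averaging over \(x\) yields \(\delta_Z nH(Z\mid X)\). Hence \(\sum_iH(X_i,Z_i)\ge\delta_*n(H(X)+H(Z\mid X))=\delta_*nH(X,Z)\).

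\emph{Remainder.} First record that for any random word \(Y\), \(\sum_i r(Y_i)=\min_y\mathbb E[d_H(Y,y)]\). This holds because \(r(Y_i)=\min_{y_i}\Pr[Y_i\ne y_i]\) and the minimization separates across coordinates. A certificate \(\mathsf{Cert}^r(\delta)\) therefore says that \(\mathbb E[d_H(Y,y)]\ge\delta n\,r(Y)\) for every center \(y\). Since both sides are linear in the law, the same inequality holds for any nonnegative sub-probability weighting \(w\) on the code with total mass \(W\), in the form \(\sum_y w(y)d_H(y,c)\ge\delta n(W-\max w)\).

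Fix a minimizing paired center \((u,v)\). For each \(x\), let \(z^*(x)\) be a maximum-mass element of the fiber, put \(a_x:=\Pr[(X,Z)=(x,z^*(x))]\), and set \(A:=\sum_xa_x\) and \(a_\star:=\max_xa_x\). Then \(r(X,Z)=1-a_\star=(A-a_\star)+(1-A)\), and the plan is to bound each piece by one certificate. Write \(E_x:=\{i:x_i\ne u_i\}\). The register distance splits as
\[
d_H\bigl((x,z),(u,v)\bigr)=|E_x|+\bigl|\{i\notin E_x:z_i\ne v_i\}\bigr|.
\]

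\emph{Among fibers.} Apply the \(X\)-certificate to the weighting \(a_x\) with center \(u\). This gives \(\sum_xa_x|E_x|\ge\delta_Xn(A-a_\star)\).

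\emph{Within fibers (main obstacle).} The difficulty is that the remaining mass \(p(x)-a_x\) must be charged to disagreements without double-counting registers in \(E_x\). For each \(x\), define a modified center \(v^{(x)}\) by \(v^{(x)}_i:=z^*(x)_i\) for \(i\in E_x\) and \(v^{(x)}_i:=v_i\) otherwise. Apply the fiber certificate to the conditional law of \(Z\) given \(X=x\) with center \(v^{(x)}\), after multiplying through by \(p(x)\):
\[
\sum_zp(x,z)\,d_H(z,v^{(x)})\ge\delta_Zn\,(p(x)-a_x).
\]
On \(E_x\), the mode \(z^*(x)\) contributes nothing, and every other \(z\) contributes at most \(|E_x|\). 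So the left side is at most
\[
(p(x)-a_x)|E_x|+\sum_zp(x,z)\bigl|\{i\notin E_x:z_i\ne v_i\}\bigr|.
\]

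\emph{Combining.} Now expand the full expectation as
\[
\mathbb E[d_H]=\sum_x\Bigl[a_x|E_x|+(p(x)-a_x)|E_x|+\sum_zp(x,z)\bigl|\{i\notin E_x:z_i\ne v_i\}\bigr|\Bigr].
\]
The first term summed over \(x\) is at least \(\delta_Xn(A-a_\star)\) by the among-fibers bound. The last two terms summed over \(x\) are at least \(\delta_Zn(1-A)\) by the within-fibers bound. Adding, \(\mathbb E[d_H]\ge\delta_*n\bigl((A-a_\star)+(1-A)\bigr)=\delta_*n\,r(X,Z)\), which is the claimed remainder inequality.
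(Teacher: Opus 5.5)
Your proof is correct and follows the paper's argument essentially step for step: the chain rule for entropy, and for remainder the same split $r(X,Z)=(A-a_\star)+(1-A)$, the same patched center $v^{(x)}$ that copies the fiber mode on $E_x$, and the $X$-certificate applied to the mode-weighted fiber representatives. One wording fix: the passage to sub-probability weightings is justified by rescaling by the total mass, i.e.\ by positive homogeneity of both sides in the weights, not by linearity, since $W-\max w$ is not linear in $w$.
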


\begin{proof}
We prove the entropy and remainder forms separately.

\medskip
\noindent\textbf{Entropy \(\mathsf{Cert}^{H}(\delta)\).}
By the coordinatewise chain rule,
\[
\sum_i H(X_i,Z_i)
=
\sum_i H(X_i)+\sum_i H(Z_i\mid X_i)
\ge
\sum_i H(X_i)+\sum_i H(Z_i\mid X),
\]
where the inequality holds because \(X_i\) is a deterministic function
of \(X\), and conditioning on \(X\) can only reduce entropy.
The \(X\)-certificate directly gives
\[
\sum_i H(X_i)\ge\delta_X nH(X).
\]
For every \(x\in\supp(X)\), the conditional random word
\(Z\mid X=x\) is supported on \(\mathcal A_Z(x)\), so the fiber
certificate gives
\[
\sum_i H(Z_i\mid X=x)
\ge
\delta_Z nH(Z\mid X=x).
\]
Averaging over \(x\) yields
\[
\sum_i H(Z_i\mid X)\ge\delta_Z nH(Z\mid X).
\]
Therefore
\[
\begin{aligned}
\sum_i H(X_i,Z_i)
&\ge
\delta_X nH(X)+\delta_Z nH(Z\mid X)\\
&\ge
\delta_*n\bigl(H(X)+H(Z\mid X)\bigr)\\
&=
\delta_*nH(X,Z).
\end{aligned}
\]

\paragraph{Remainder \(\mathsf{Cert}^{r}(\delta)\).}
Unlike entropy, remainder has no additive chain rule.  We instead
decompose the probability mass fiber by fiber.  
For paired words
\((x,z),(u,v)\in\Sigma_X^n\times\Sigma_Z^n\), define
\[
  d_{\rm pair}((x,z),(u,v))
  :=
  |\{i:(x_i,z_i)\ne(u_i,v_i)\}|.
\]
Thus \(d_{\rm pair}\) is the ordinary block-Hamming distance when
\(\Sigma_X\times\Sigma_Z\) is viewed as the coordinate alphabet: a
coordinate is counted once whenever either sector disagrees.

\smallskip
\noindent
\textit{Observation: remainder as minimum expected disagreement.}
Let \(Y=(Y_1,\ldots,Y_n)\) be any random word and
\(y=(y_1,\ldots,y_n)\) any deterministic center.
Since \(r(Y_i)=\min_a\Pr[Y_i\ne a]\),
\begin{equation}
\label{eq:weighted_dis_and_remainder}
\begin{aligned}
  \E[d_H(Y,y)]
  &=
  \sum_{y'\in\Sigma^n}
  \Pr[Y=y']
  \sum_{i=1}^n\mathbf 1\{y_i'\ne y_i\}\\
  &=
  \sum_{i=1}^n\Pr[Y_i\ne y_i]
  \ge
  \sum_{i=1}^n r(Y_i).
\end{aligned}
\end{equation}
Equality is attained by choosing each \(\hat y_i\) to be a most
probable value of \(Y_i\).  Thus, for a coordinatewise mode
\(\hat y=(\hat y_1,\ldots,\hat y_n)\),
\begin{equation}
\label{eq:min_expected_distance_remainder}
  \sum_{i=1}^n r(Y_i)
  =
  \min_y\E[d_H(Y,y)]
  =
  \E[d_H(Y,\hat y)].
\end{equation}

\smallskip
\noindent
\textit{Step 1: decompose into \(X\)-fibers.}
Write the support of \((X,Z)\) as
\(\{(x_j,z_j):j\in[b]\}\), with probabilities \(p_j>0\).
Fix an arbitrary paired center
\((u,v)\in\Sigma_X^n\times\Sigma_Z^n\).

Let \(x^{(1)},\ldots,x^{(h)}\) be the distinct \(X\)-words occurring
in the support, and define
\[
  J_g:=\{j:x_j=x^{(g)}\},
  \qquad
  P_g:=\sum_{j\in J_g}p_j.
\]
For fiber \(g\), let
\[
  E_g:=\{i:x_i^{(g)}\ne u_i\},
  \qquad
  E_g^c:=[n]\setminus E_g,
  \qquad
  t_g:=|E_g|.
\]
Every candidate in fiber \(g\) has the same \(X\)-word \(x^{(g)}\),
and hence the same \(X\)-disagreement
\(d_H(x^{(g)},u)=t_g\).  Therefore the total weighted
\(X\)-disagreement contributed by fiber \(g\) is
\[
  \sum_{j\in J_g}p_jt_g=P_gt_g.
\]

For the \(Z\)-sector, coordinates in \(E_g\) have already been counted
by the paired distance through the \(X\)-disagreement.  We therefore
count \(Z\)-disagreement only on \(E_g^c\).  

For
\(T\subseteq[n]\) and \(w\in\Sigma_Z^n\), define
\[
  S_g^w(T)
  :=
  \sum_{j\in J_g}p_j
  \sum_{i\in T}\mathbf 1\{z_{j,i}\ne w_i\}.
\]
It follows that the exact contribution of fiber \(g\) to the expected
paired distance is
\begin{equation}
\label{eq:D_g_exact}
  D_g
  =
  P_gt_g+S_g^v(E_g^c).
\end{equation}
Hence
\[
  \E[d_{\rm pair}((X,Z),(u,v))]
  =
  \sum_gD_g.
\]

\smallskip
\noindent
\textit{Step 2: control the residual mass inside each fiber.}
Let \(Z^{(g)}\) denote the normalized conditional distribution in
fiber \(g\). That is
\(\Pr[Z^{(g)}=z_j]=p_j/P_g\) for \(j\in J_g\).
Set 
\[
a_g:=\max_{j\in J_g}p_j.
\]
Since the paired support points are distinct, the \(z_j\)'s within a
fixed fiber are distinct. Therefore
\[
r(Z^{(g)})=1-a_g/P_g.
\]

For every \(w\in\Sigma_Z^n\),
\eqref{eq:weighted_dis_and_remainder} and the \(Z\)-fiber certificate
give
\[
  \frac1{P_g}\sum_{j\in J_g}p_jd_H(z_j,w)
  =
  \E[d_H(Z^{(g)},w)]
  \ge
  \sum_i r(Z_i^{(g)})
  \ge
  \delta_Zn\left(1-\frac{a_g}{P_g}\right).
\]
Multiplying by \(P_g\),
\begin{equation}
\label{eq:within_fiber_remainder_certificate}
  \sum_{j\in J_g}p_jd_H(z_j,w)=S_g^w(E_g)+S_g^w(E_g^c)
  \ge
  \delta_Zn(P_g-a_g).
\end{equation}

To use this inequality without double counting coordinates in \(E_g\),
we choose
\(j_g\in J_g\) with \(p_{j_g}=a_g\).
Define a patched center \(v^{(g)}\) by
\[
  v_i^{(g)}
  :=
  \begin{cases}
    z_{j_g,i}, & i\in E_g,\\
    v_i,       & i\in E_g^c.
  \end{cases}
\]
Thus, by construction, 
\(S_g^{v^{(g)}}(E_g^c)=S_g^v(E_g^c)\).

Applying \eqref{eq:within_fiber_remainder_certificate} to \(v^{(g)}\)
gives
\[
  S_g^{v^{(g)}}(E_g^c)
  \ge
  \delta_Z n(P_g-a_g)-S_g^{v^{(g)}}(E_g).
\]
We further have
\[
\begin{aligned}
S_g^{v^{(g)}}(E_g)
&=
\sum_{j\in J_g}p_j
\sum_{i\in E_g}\mathbf 1\{z_{j,i}\ne v_i^{(g)}\}
\le
(P_g-a_g)t_g.
\end{aligned}
\]
The inequality follows because, by construction, \(z_{j_g}\), of
probability mass \(a_g\), agrees with \(v^{(g)}\) on every coordinate
in \(E_g\).  Thus only the remaining mass \(P_g-a_g\) can contribute
disagreement on the \(t_g\) coordinates in \(E_g\).

Therefore
\[
  S_g^v(E_g^c)
  \ge
  \delta_Zn(P_g-a_g)-(P_g-a_g)t_g.
\]
Combining this with \eqref{eq:D_g_exact} gives
\begin{equation}
\label{eq:D_g_lower_bound}
  D_g\ge P_gt_g + \delta_Zn(P_g-a_g)-(P_g-a_g)t_g
  =
  a_gt_g+\delta_Zn(P_g-a_g).
\end{equation}

\smallskip
\noindent
\textit{Step 3: control the surviving fiber representatives.}
Summing \eqref{eq:D_g_lower_bound} over \(g\),
\[
  \E[d_{\rm pair}((X,Z),(u,v))]
  \ge
  \sum_ga_gt_g+\delta_Zn\sum_g(P_g-a_g).
\]

Set 
\[A:=\sum_ga_g,
\qquad
a_\star:=\max_ga_g=\max_jp_j.
\]
Define an auxiliary random word \(\widetilde X\), supported on the
distinct fiber representatives, by
\(\Pr[\widetilde X=x^{(g)}]=a_g/A\).
Then \(r(\widetilde X)=1-a_\star/A\).  Using
\eqref{eq:weighted_dis_and_remainder} and the \(X\)-certificate,
\[
  \frac1A\sum\nolimits_ga_gt_g
  =
  \E[d_H(\widetilde X,u)]
  \ge
  \sum\nolimits_i r(\widetilde X_i)
  \ge
  \delta_Xn\left(1-\frac{a_\star}{A}\right).
\]
Thus
\begin{equation}
\label{eq:X_weighted_fiber_bound}
  \sum_ga_gt_g
  \ge
  \delta_Xn(A-a_\star).
\end{equation}

Also,
\(\sum_g(P_g-a_g)=1-A\), while
\(r(X,Z)=1-a_\star\).
Consequently,
\begin{align}
  \E[d_{\rm pair}((X,Z),(u,v))]
  &\ge
  \delta_Xn(A-a_\star)+\delta_Zn(1-A)
  \notag\\
  &\ge
  \delta_*n\bigl((A-a_\star)+(1-A)\bigr)
  \notag\\
  &=
  \delta_*n(1-a_\star) \notag\\
  &=
  \delta_*n\,r(X,Z).
\label{eq:paired_expected_distance_lower_bound}
\end{align}

Thus the joint remainder decomposes as
\[
  r(X,Z)
  =
  \underbrace{A-a_\star}_{\text{among fiber representatives}}
  +
  \underbrace{1-A}_{\text{residual mass inside fibers}}.
\]
The first term is controlled by the \(X\)-certificate, while the
second is extracted from the fiberwise \(Z\)-certificates.

\smallskip
\noindent
\textit{Step 4: optimize over the paired center.}
The bound \eqref{eq:paired_expected_distance_lower_bound} holds for
every paired center \((u,v)\).  Choose a center
\((\hat u,\hat v)\) such that each
\((\hat u_i,\hat v_i)\) is a most probable value of
\((X_i,Z_i)\).  By Eq.~\eqref{eq:min_expected_distance_remainder},
\[
  \sum\nolimits_{i=1}^n r(X_i,Z_i)
  =
  \E[d_{\rm pair}((X,Z),(\hat u,\hat v))]
  \ge
  \delta_*n\,r(X,Z).
\]
\end{proof}

The unrestricted pairing lemma also yields a bounded-support
statement.

\begin{corollary}[Bounded-support pairing]
\label{cor:bounded-certificate-pairing}
Let \(\mathcal A,\delta_X,\delta_Z,\delta_*\) be as in
Lemma~\ref{lem:pairing}, and let \(B_X,B_Z\ge1\) be integers.
Suppose that \(\mathcal A_X\) satisfies
\(\mathsf{Cert}^{H}_{B_X}(\delta_X)\) and every \(Z\)-fiber
\(\mathcal A_Z(x)\) satisfies
\(\mathsf{Cert}^{H}_{B_Z}(\delta_Z)\).
Then every random pair \((X,Z)\) supported on \(\mathcal A\) with
\[
  |\supp(X)|\le B_X,
  \qquad
  |\supp(Z\mid X=x)|\le B_Z
  \quad\text{for every }x\in\supp(X)
\]
satisfies
\[
  \sum_{i=1}^n H(X_i,Z_i)
  \ge
  \delta_* n H(X,Z).
\]
The same statement holds with \(H\) replaced throughout by \(r\).
In particular, when \(B_X=B_Z=B\), the corresponding hypotheses
imply \(\mathsf{Cert}^{H}_B(\delta_*)\), respectively
\(\mathsf{Cert}^{r}_B(\delta_*)\), on \(\mathcal A\).
\end{corollary}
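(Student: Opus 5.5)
The plan is to rerun the proof of Lemma~\ref{lem:pairing} essentially verbatim and observe that every invocation of a one-sector certificate is applied to a distribution whose support is at most \(B_X\), respectively \(B_Z\). That lemma is not literally usable as a black box, since its hypotheses require the unrestricted certificates. However, its proof only ever applies the \(X\)-certificate to one specific random word supported on \(\mathcal A_X\), and the \(Z\)-certificate to one specific random word supported on each fiber \(\mathcal A_Z(x)\). It therefore suffices to check the support sizes of these auxiliary words.

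\textbf{Entropy case.} The chain-rule argument applies the \(X\)-certificate to \(X\) itself. By hypothesis \(|\supp(X)|\le B_X\), so \(\mathsf{Cert}^{H}_{B_X}(\delta_X)\) gives \(\sum_i H(X_i)\ge\delta_X nH(X)\). For each \(x\in\supp(X)\), the proof applies the fiber certificate to \(Z\mid X=x\). This variable is supported on \(\mathcal A_Z(x)\) and has support size at most \(B_Z\), so \(\mathsf{Cert}^{H}_{B_Z}(\delta_Z)\) applies. Averaging over \(x\) and combining the two terms exactly as before yields \(\sum_i H(X_i,Z_i)\ge\delta_*nH(X,Z)\).

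\textbf{Remainder case.} Here the certificates are used in two places.
\begin{enumerate}
\item In Step~2, the \(Z\)-certificate is applied to the normalized fiber distribution \(Z^{(g)}\). This is exactly \(Z\mid X=x^{(g)}\), so its support has size \(|J_g|\le B_Z\).
\item In Step~3, the \(X\)-certificate is applied to the auxiliary word \(\widetilde X\). It is supported on the distinct fiber representatives \(x^{(1)},\ldots,x^{(h)}\), that is, on \(\supp(X)\), so its support has size \(h\le B_X\). It is also supported on \(\mathcal A_X\).
\end{enumerate}
Once these support sizes are confirmed, the bounded certificates suffice at both points, and the rest of the argument (Steps~1--4) involves no certificate at all. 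The only point needing care is this one: \(\widetilde X\) is a reweighted distribution rather than \(X\) itself. Its support is nevertheless contained in that of \(X\), so no size bound is lost.

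\textbf{The ``in particular'' claim.} Suppose \(B_X=B_Z=B\) and \((X,Z)\) is supported on at most \(B\) paired words. Then \(|\supp(X)|\le B\) and each \(|\supp(Z\mid X=x)|\le B\), since both are images or subsets of the joint support. The main statement therefore applies to every such \((X,Z)\), which is exactly \(\mathsf{Cert}^{H}_B(\delta_*)\), respectively \(\mathsf{Cert}^{r}_B(\delta_*)\), on \(\mathcal A\).
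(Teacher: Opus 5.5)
Your proof is correct, but it reopens the proof of Lemma~\ref{lem:pairing}, whereas the paper uses the lemma's statement as a black box. The paper sets $\mathcal S:=\supp(X,Z)$ and applies Lemma~\ref{lem:pairing} to $\mathcal S$ rather than to $\mathcal A$. Its $X$-projection $\mathcal S_X=\supp(X)$ has at most $B_X$ words, and each fiber $\mathcal S_Z(x)=\supp(Z\mid X=x)$ has at most $B_Z$ words. Every distribution on these sets is therefore already covered by the bounded-support certificates on $\mathcal A_X$ and $\mathcal A_Z(x)$. So these sets satisfy the unrestricted certificates, and the lemma's hypotheses hold as stated.

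Your remark that the lemma ``is not literally usable as a black box'' is therefore too pessimistic. It fails on $\mathcal A$ but not on $\mathcal S$, because subsets inherit certificates, and a bounded certificate on a set of at most $B$ words is the unrestricted one.

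Your check of the proof is accurate. You correctly identify that the reweighted word $\widetilde X$ in Step~3 has support exactly $\supp(X)$, and that $Z^{(g)}$ is just $Z\mid X=x^{(g)}$ with support size $|J_g|$. This has the mild benefit of showing exactly which distributions the proof ever certifies. However, it depends on the internal structure of the lemma's proof and would need rechecking if that proof changed, whereas the paper's reduction follows from the lemma's statement alone. Your handling of the ``in particular'' clause matches the paper's.
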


\begin{proof}
Fix a distribution satisfying the stated support bounds and set
\(\mathcal S:=\supp(X,Z)\).
Its \(X\)-projection \(\mathcal S_X\) has size at most \(B_X\),
and every \(Z\)-fiber \(\mathcal S_Z(x)\) has size at most \(B_Z\).

Every distribution on \(\mathcal S_X\) is therefore covered by
the bounded-support certificate on \(\mathcal A_X\), so
\(\mathcal S_X\) satisfies the corresponding unrestricted
certificate with coefficient \(\delta_X\).
Similarly, every \(\mathcal S_Z(x)\) satisfies the unrestricted
certificate with coefficient \(\delta_Z\).
Applying Lemma~\ref{lem:pairing} to \(\mathcal S\) proves the
claimed inequality, for either entropy or remainder.

Finally, if \(|\supp(X,Z)|\le B\), then both the projection and
fiber support bounds hold with \(B_X=B_Z=B\), giving the joint
bounded-support certificate.
\end{proof}

\paragraph{Sources of the one-sector certificates.}
The pairing lemma is agnostic to how the one-sector certificates are
obtained.  For subspace-designable codes, they follow from the
subspace-design Brascamp--Lieb inequalities; see
Section~\ref{subsec:fqrs-achievability}.  For random linear codes, the
required bounded-support certificates hold with high probability; see
Section~\ref{subsec:random_css_achievability}.  Once these one-sector
inputs are available, Lemma~\ref{lem:pairing} combines them into a
joint certificate for \((X,Z)\) with coefficient
\(\min\{\delta_X,\delta_Z\}\).

\section{Coset Singleton converse for QLD}
\label{sec:quotient-tools}\label{sec:gsb}\label{sec:qld}
For both QLR and QLD, it is sufficient to work in one sector for lower-bound purposes.  We show that a stabilizer-distinct bad list of a single sector can be lifted to a valid
joint CSS list-recovery/decoding instance of the same size.  Hence, a one-sector bad list yields the same lower bound
for the full quantum code.

Let \(Q=\CSS(C_1,C_2)\), and suppose
\(x_1,\ldots,x_M\in C_1\) form a one-sector bad list in the quotient
\(C_1/C_2^\perp\).  Thus the \(x_j\)'s are pairwise distinct modulo
\(C_2^\perp\), and there exist local lists
\(S_1,\ldots,S_n\subseteq\F_q^s\), with \(|S_i|\le \ell\), such that
each \(x_j\) disagrees with \(S=(S_1,\ldots,S_n)\) on at most
\(\rho n\) coordinates.

Lift these words to paired Pauli labels
\(
    (x_1,0),\ldots,(x_M,0).
\)
Since \(x_j\in C_1\) and \(0\in C_2\), all of these pairs lie in the
CSS normalizer and hence have syndrome zero.
Moreover, two paired candidates \((x,z)\) and \((x',z')\) are
stabilizer-equivalent exactly when
\[
    x-x'\in C_2^\perp
    \qquad\text{and}\qquad
    z-z'\in C_1^\perp .
\]
Since the \(x_j\)'s are pairwise distinct modulo \(C_2^\perp\),
the pairs \((x_1,0),\ldots,(x_M,0)\) are therefore pairwise
stabilizer-distinct.

Finally, define the joint local Pauli lists by
\(
    \widetilde E_i:=S_i\times\{0\}.
\)
Then \(|\widetilde E_i|=|S_i|\le\ell\), and for every \(j\),
\[
    (x_{j,i},0)\in \widetilde E_i
    \quad\Longleftrightarrow\quad
    x_{j,i}\in S_i.
\]
Hence the lift preserves the disagreement count exactly.  Therefore, a
bad list of size \(M\) in the marginal quotient \(C_1/C_2^\perp\)
gives a joint quantum bad list of the same size. Taking \(\ell=1\) gives the QLD case.  The lift also preserves
average-radius bad lists.

Accordingly, the QLD lower-bound analysis in this section, as well as the
QLR lower-bound constructions for FQRS and random CSS codes in
Sections~\ref{sec:fqrs-results} and~\ref{sec:random-css-results},
may be carried out entirely within one sector.

\subsection{Coset Singleton bound and its CSS consequences
(unfolded case)}

Let \(W\subseteq C\subseteq\F_q^n\) be nested linear codes.
Write the redundancy of \(C\) as
\(r_C:=n-\dim C\).
For \(g\in\F_q^n\), define the coset list
\[
  \mathcal L_{C/W}(g,\rho)
  :=
  \{\,c+W\in C/W:
      \mathcal B(g,\rho)\cap(c+W)\ne\varnothing\,\},
\]
where
\(
  \mathcal B(g,\rho)
  :=
  \{x\in\F_q^n:\dis(g,x)\le\rho n\}.
\)
Thus, a coset \(c+W\) is included whenever at least one of its
representatives lies in the Hamming ball \(\mathcal B(g,\rho)\).

\begin{lemma}[Coset generalized Singleton bound]
\label{lem:coset-gsb}
Let \(L\ge1\) with \(L+1\le |C/W|\), and put
\(
  t:=r_C+\lceil\log_q(L+1)\rceil.
\)
Then there is a center \(g\in\F_q^n\) such that
\[
  |\mathcal L_{C/W}(g,\rho_L)|\ge L+1,
  \qquad
  \rho_L
  =
  \frac1n
  \left\lceil
    \frac{L}{L+1}t
  \right\rceil .
\]
\end{lemma}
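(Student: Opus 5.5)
We follow the projection-and-patching strategy outlined in the technical overview, with distinctness tracked modulo \(W\). Set \(h:=\lceil\log_q(L+1)\rceil\), so \(q^h\ge L+1\) and \(t=r_C+h\). The hypothesis \(L+1\le|C/W|=q^{\dim C-\dim W}\) gives \(h\le\dim C-\dim W\), hence \(\dim C-h\ge\dim W\). If \(h>\dim C\) this is impossible, so \(0\le \dim C-h\). We choose a coordinate set \(T\subseteq[n]\) with \(|T|=\dim C-h\) that contains an information set \(I_W\) of \(W\). Such a \(T\) exists because \(|I_W|=\dim W\le\dim C-h\le n\); we pad \(I_W\) with arbitrary further coordinates. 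Then the projection \(\pi_T\) is injective on \(W\).

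\emph{Pigeonhole with coset distinctness.} The map \(\pi_T:C\to\F_q^{T}\) has at most \(q^{|T|}\) values, while \(|C|=q^{\dim C}\). Hence some fiber \(\pi_T^{-1}(y)\cap C\) has at least \(q^{h}\ge L+1\) elements. If two elements \(c,c'\) of this fiber lay in the same coset of \(W\), then \(c-c'\in W\) and \(\pi_T(c-c')=0\), so injectivity on \(W\) forces \(c=c'\). Therefore any \(L+1\) elements \(c^{(0)},\dots,c^{(L)}\) of the fiber represent \(L+1\) distinct cosets in \(C/W\). This is the only step where the quotient enters, and the information-set choice of \(T\) handles it.

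\emph{Patching.} The \(L+1\) words agree on \(T\), and \(|[n]\setminus T|=n-\dim C+h=t\). We partition \([n]\setminus T\) into \(L+1\) parts \(P_0,\dots,P_L\) with sizes as equal as possible, so each satisfies \(|P_j|\ge\lfloor t/(L+1)\rfloor\). We then define \(g\) by \(g_i:=c^{(0)}_i=\cdots=c^{(L)}_i\) for \(i\in T\), and \(g_i:=c^{(j)}_i\) for \(i\in P_j\). The word \(c^{(j)}\) can disagree with \(g\) only on \([n]\setminus(T\cup P_j)\), so
\[
d_H(c^{(j)},g)\le t-\lfloor t/(L+1)\rfloor=\lceil Lt/(L+1)\rceil=\rho_L n.
\]
Each of the \(L+1\) distinct cosets \(c^{(j)}+W\) therefore meets \(\mathcal B(g,\rho_L)\), which gives \(|\mathcal L_{C/W}(g,\rho_L)|\ge L+1\).

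The main point needing care is the compatibility \(\dim W\le|T|\), which is exactly where the hypothesis \(L+1\le|C/W|\) is used; the rest is the classical argument. We also check the edge case \(T=\varnothing\): it is allowed, it forces \(\dim W=0\), and the argument goes through unchanged.
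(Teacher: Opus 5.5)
Your proof is correct and follows the paper's argument: you choose $T$ of size $\dim C-\lceil\log_q(L+1)\rceil\ge\dim W$ containing an information set of $W$, pigeonhole on $\pi_T$ to get $L+1$ codewords in one fiber, and patch the remaining $t$ coordinates. The only difference is cosmetic: you get coset-distinctness directly from $c-c'\in W\cap\ker\pi_T=\{0\}$, whereas the paper parameterizes $C$ by a transversal $U\times W$ and notes that each $u$ contributes at most one element to the fiber.
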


\begin{proof}
Since \(L+1\le|C/W|=q^{\dim C-\dim W}\), we have
\(\lceil\log_q(L+1)\rceil\le \dim C-\dim W\).
Choose a coordinate set \(T\subseteq[n]\) of size
\[
  |T|
  :=
  \dim C-\lceil\log_q(L+1)\rceil
  =
  n-t
  \ge
  \dim W.
\]
Every linear code \(W\le\F_q^n\) has an information set
\(I\subseteq[n]\) of size \(\dim W\) on which the projection
\(\pi_I:W\to\F_q^I\) is injective.  We may therefore choose \(T\) to
contain such an information set, so that
\(\pi_T:W\to\F_q^T\) is also injective.

Fix one representative from each coset of \(C/W\), and let
\(U\subseteq C\) denote the resulting set.  Thus
\[
  C=\bigsqcup_{u\in U}(u+W),
  \qquad
  |U|=|C/W|.
\]
Hence every \(c\in C\) has a unique representation \(c=u+w\) with
\(u\in U\) and \(w\in W\), so we may equivalently parameterize the
codewords of \(C\) by pairs in \(U\times W\).
Consider the map
\[
  U\times W\longrightarrow\F_q^T,
  \qquad
  (u,w)\longmapsto\pi_T(u+w).
\]
The domain has \(|U||W|=|C|\) pairs, while \(\F_q^T\) has only
\(q^{|T|}\) elements.  Hence some \(p\in\F_q^T\) has a preimage of size at least
\[
  \frac{|C|}{q^{|T|}}
  =
  q^{\dim C-|T|}
  =
  q^{\lceil\log_q(L+1)\rceil}
  \ge
  L+1.
\]

For any fixed \(u\in U\),
\[
  \pi_T(u+w)=\pi_T(u)+\pi_T(w).
\]
Since \(\pi_T\) is injective on \(W\), if \(w_1\ne w_2\), then
\[
  \pi_T(u+w_1)\ne \pi_T(u+w_2).
\]
Thus, for each fixed \(u\), at most one \(w\in W\) can satisfy
\(\pi_T(u+w)=p\).  Consequently, the pairs in the fiber over \(p\)
have distinct \(U\)-components.  Select \(L+1\) such pairs
\((u_1,w_1),\ldots,(u_{L+1},w_{L+1})\), and define
\(x_j:=u_j+w_j\).
Then all \(x_j\) agree with \(p\) on \(T\), while the cosets
\(x_j+W=u_j+W\) are pairwise distinct.

Let \(J:=[n]\setminus T\), so \(|J|=t\), and partition \(J\) into
\(L+1\) disjoint sets \(J_1,\ldots,J_{L+1}\) with
\(
  |J_j|\ge\lfloor t/(L+1)\rfloor.
\)
Define \(g|_T=p\) and \(g|_{J_j}=x_j|_{J_j}\).  Then \(g\) and \(x_j\)
agree on \(T\cup J_j\), so
\[
  \dis(g,x_j)
  \le
  t-\left\lfloor\frac{t}{L+1}\right\rfloor
  =
  \left\lceil\frac{L}{L+1}t\right\rceil.
\]
Thus \(\mathcal B(g,\rho_L)\) contains representatives of at least
\(L+1\) distinct cosets of \(W\).
\end{proof}

Taking \(W=\{0\}\) recovers the usual generalized Singleton bound for
ordinary linear codes.  More generally, once the radius exceeds
\(r_C/n\), the forced coset list grows exponentially in the excess
\(w-r_C\), as quantified by the following proposition.

\begin{proposition}[Exponential coset-list growth past \(r_C\)]
\label{prop:blowup}
Fix \(\rho\in[0,1]\) and put \(w:=\lfloor\rho n\rfloor\).  If
\(w\ge r_C+1\), then
\[
  \max_{g\in\F_q^n}
  |\mathcal L_{C/W}(g,\rho)|
  \ge
  \min\bigl\{
    q^{\,w-r_C},
    |C/W|
  \bigr\}.
\]
\end{proposition}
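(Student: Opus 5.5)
The plan reuses the projection-and-pigeonhole part of the proof of Lemma~\ref{lem:coset-gsb}, with a larger pigeonhole excess and no patching. Set
\[
  h:=\min\{\,w-r_C,\;\dim C-\dim W\,\},
\]
so that \(q^h=\min\{q^{w-r_C},|C/W|\}\), since \(|C/W|=q^{\dim C-\dim W}\). The hypothesis \(w\ge r_C+1\) gives \(h\ge1\) whenever \(\dim C>\dim W\). If \(\dim C=\dim W\), the bound is just \(1\), and it is trivial: take \(g=0\), which lies in the coset \(W\).

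Choose a coordinate set \(T\subseteq[n]\) of size \(|T|=\dim C-h\). Because \(h\le\dim C-\dim W\), we have \(|T|\ge\dim W\). Hence, exactly as in the proof of Lemma~\ref{lem:coset-gsb}, we may take \(T\) to contain an information set of \(W\), and then \(\pi_T\) is injective on \(W\). Fix a transversal \(U\) of \(C/W\) and consider the map \((u,w')\mapsto\pi_T(u+w')\) on \(U\times W\). Pigeonhole gives a value \(p\in\F_q^T\) whose fiber has at least \(q^{\dim C-|T|}=q^h\) elements. Injectivity of \(\pi_T\) on \(W\) forces these pairs to have distinct \(U\)-components. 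We therefore obtain codewords \(x_1,\dots,x_{q^h}\), all equal to \(p\) on \(T\), that lie in pairwise distinct cosets of \(W\).

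Now define the center \(g\) by \(g|_T:=p\) and set \(g\) arbitrarily (say \(0\)) on \(J:=[n]\setminus T\). Every \(x_j\) agrees with \(g\) on \(T\), so
\[
  \dis(g,x_j)\le |J| = n-\dim C+h = r_C+h \le w=\lfloor\rho n\rfloor\le\rho n.
\]
Thus each \(x_j\) lies in \(\mathcal B(g,\rho)\), and \(\mathcal L_{C/W}(g,\rho)\) contains at least \(q^h\) distinct cosets, which is the claim.

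I expect no step to be a real obstacle. The only point needing care is the choice of \(h\): capping it at \(\dim C-\dim W\) is what guarantees \(|T|\ge\dim W\), so that the information-set trick still applies. I will also check that \(|T|\le n\), which holds since \(h\ge 0\) gives \(|T|\le\dim C\le n\).
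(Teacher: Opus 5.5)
Your proof is correct and is essentially the paper's argument. The paper makes the same choice \(m=\min\{w-r_C,\dim C-\dim W\}\), treats \(W=C\) separately, invokes Lemma~\ref{lem:coset-gsb} with \(L+1=q^m\), and then bounds the radius crudely by \(\lceil(1-q^{-m})(r_C+m)\rceil\le r_C+m\le w\); you instead inline the projection-and-pigeonhole part of that lemma and skip the patching step, which is harmless because the paper discards the patching gain anyway.
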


\begin{proof}
If \(W=C\), then \(C/W\) contains a single coset, namely \(C\) itself.
Taking \(g=0\), we have \(0\in C\cap\mathcal B(0,\rho)\), so
\(|\mathcal L_{C/W}(0,\rho)|=1=|C/W|\), and the claim follows.

Otherwise \(W\ne C\).  Since \(w\ge r_C+1\), both
\(w-r_C\) and \(\dim C-\dim W\) are at least \(1\).  Set
\[
  m:=\min\{w-r_C,\,\dim C-\dim W\}\ge1.
\]
Let
\(
  L+1:=q^m.
\)
Then \(L+1\le q^{\dim C-\dim W}=|C/W|\), and
Lemma~\ref{lem:coset-gsb} applies with
\(\lceil\log_q(L+1)\rceil=m\) and \(t=r_C+m\le w\).  It gives a
center \(g\) with at least \(q^m\) distinct cosets inside radius
\[
  \rho_0
  = \frac1n
  \left\lceil
    \frac{L}{L+1}t
  \right\rceil
  =
  \frac1n
  \left\lceil
    (1-q^{-m})(r_C+m)
  \right\rceil.
\]
Since
\[
  \left\lceil(1-q^{-m})(r_C+m)\right\rceil
  \le
  r_C+m
  \le
  w
  =
  \lfloor\rho n\rfloor
  \le
  \rho n,
\]
we have \(\rho_0\le\rho\).  By monotonicity of the coset list in the
radius,
\[
  |\mathcal L_{C/W}(g,\rho)|
  \ge
  |\mathcal L_{C/W}(g,\rho_0)|
  \ge
  q^m
  =
  \min\{q^{\,w-r_C},|C/W|\}.
\]
\end{proof}

The proposition shows that the forced coset list grows as
\(q^{\,w-r_C}\), up to the size of the quotient, where the excess
\(w-r_C\) may scale differently with \(n\).  In particular, if \(C\)
has rate \(R_\mathrm{c}:=\dim C/n\) and
\(\rho=1-R_\mathrm{c}+\varepsilon\) for a constant \(\varepsilon>0\), then
\(w-r_C=\varepsilon n+O(1)\), and hence the proposition gives a
worst-case coset list of size at least
\(
  \min\{q^{\varepsilon n+O(1)},\,|C/W|\}.
\)
% Thus a constant excess in the relative radius forces exponential growth
% in \(n\).

We now apply Proposition~\ref{prop:blowup} to the two marginal quotients
of a CSS code, obtaining a direct constraint on their redundancies.

\begin{corollary}[Marginal blow-up and balance barrier]
\label{cor:css-balance-barrier}
Let \(Q=\CSS(C_1,C_2)\) range over a fixed-rate family of scalar CSS
codes of rate \(R=k/n\in(0,1)\), and write
\(
  r_X:=n-\dim C_1,~
  r_Z:=n-\dim C_2,~
  r_{\min}:=\min\{r_X,r_Z\}.
\)
Suppose the family is
\((\rho,L_n)\)-QLD with \(\log_qL_n=o(n)\), where
\(
  \rho=(1-R)/2-\gamma.
\)
Then
\[
  r_{\min}\ge \rho n-o(n),
\]
and hence
\[
  |r_X-r_Z|
  \le
  2\gamma n+o(n).
\]
\end{corollary}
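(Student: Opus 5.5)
We argue by contradiction through Proposition~\ref{prop:blowup}, applied to each marginal quotient, and then use the rate identity \(r_X+r_Z=n-k=(1-R)n\). The first step is the one-sector lift at the start of this section. A list of distinct cosets of \(C_1/C_2^\perp\) inside a Hamming ball \(\mathcal B(g,\rho)\) becomes a list of the same size of pairwise stabilizer-distinct Pauli errors with a common syndrome and weight at most \(\rho n\). Concretely, we translate by a representative in the ball and lift to labels \((x,0)\) with center \((g,0)\); then \(\ell=1\) gives the QLD case. Symmetrically, a list in \(C_2/C_1^\perp\) lifts via labels \((0,z)\). Hence \((\rho,L_n)\)-QLD implies
\[
  \max_g|\mathcal L_{C_1/C_2^\perp}(g,\rho)|\le L_n
  \qquad\text{and}\qquad
  \max_g|\mathcal L_{C_2/C_1^\perp}(g,\rho)|\le L_n .
\]

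Next we apply Proposition~\ref{prop:blowup} with \((C,W)=(C_1,C_2^\perp)\), so \(r_C=r_X\), and \(w=\lfloor\rho n\rfloor\). Suppose \(w\ge r_X+1\). Then we get
\[
  L_n\ge\min\{q^{w-r_X},|C_1/C_2^\perp|\}.
\]
Here \(|C_1/C_2^\perp|=q^{\dim C_1+\dim C_2-n}=q^{k}=q^{Rn}\). Since \(R>0\) is fixed and \(\log_qL_n=o(n)\), the second term eventually exceeds \(L_n\). So for large \(n\) the minimum must be the first term, which forces \(w-r_X\le\log_qL_n=o(n)\). In the case \(w\le r_X\) this inequality holds trivially. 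Either way, \(r_X\ge\rho n-o(n)\). The same argument on the \(Z\)-quotient gives \(r_Z\ge\rho n-o(n)\), and hence \(r_{\min}\ge\rho n-o(n)\).

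Finally, since \(r_X+r_Z=(1-R)n\), we have
\[
  |r_X-r_Z|=(r_X+r_Z)-2r_{\min}\le(1-R)n-2\rho n+o(n)=2\gamma n+o(n).
\]
The only delicate point is the lift step: it must preserve both the syndrome and stabilizer-distinctness, and the quotient size must be large enough that the minimum in Proposition~\ref{prop:blowup} is attained by the exponential term. The first is handled by the translation argument in the reduction to the CSS quotient. The second follows from \(|C_1/C_2^\perp|=q^{Rn}\) together with \(\log_qL_n=o(n)\).
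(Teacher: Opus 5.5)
Your proof is correct and follows the paper's argument: you lift one-sector coset lists to QLD bad lists, apply Proposition~\ref{prop:blowup} with \(|C_1/C_2^\perp|=q^{Rn}\) to force \(w-r_X\le\log_qL_n\), and conclude via \(r_X+r_Z=(1-R)n\). The only cosmetic difference is that you run the argument on both quotients, whereas the paper assumes without loss of generality that \(r_{\min}=r_X\); the two amount to the same thing.
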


\begin{proof}
Assume without loss of generality that \(r_{\min}=r_X\), and put
\(w:=\lfloor\rho n\rfloor\).
Since \(k=\dim C_1+\dim C_2-n=Rn\), we have
\[
  |C_1/C_2^\perp|
  =
  q^{\dim C_1-\dim C_2^\perp}
  =
  q^{\dim C_1+\dim C_2-n}
  =
  q^k.
\]
Because \(R>0\) is fixed and \(\log_qL_n=o(n)\), we have
\(L_n<q^k\) for all sufficiently large \(n\).

We claim that, for such \(n\),
\(
  r_X\ge w-\log_qL_n.
\)
If \(w\le r_X\), this is immediate since \(L_n\ge1\).
Otherwise \(w\ge r_X+1\), so
Proposition~\ref{prop:blowup}, applied with
\(C=C_1\) and \(W=C_2^\perp\), gives a one-sector coset list of
size at least \(\min\{q^{w-r_X},q^k\}\).
By the one-sector lift above, this is also a lower bound on the
joint quantum list size.  Hence
\[
  \min\{q^{w-r_X},q^k\}\le L_n.
\]
Since \(L_n<q^k\), this forces \(q^{w-r_X}\le L_n\), or
equivalently \(r_X\ge w-\log_qL_n\), proving the claim.

Consequently,
\[
  r_{\min}
  =
  r_X
  \ge
  \lfloor\rho n\rfloor-\log_qL_n
  \ge
  \rho n-1-\log_qL_n
  =
  \rho n-o(n).
\]
Finally, using \(r_X+r_Z=(1-R)n\) and
\(\rho=(1-R)/2-\gamma\), we obtain
\[
\begin{aligned}
  |r_X-r_Z|
  &=
  r_Z-r_X\\
  &=
  (1-R)n-2r_X\\
  &\le
  (1-R)n-2\lfloor\rho n\rfloor+2\log_qL_n\\
  &\le
  2\gamma n+2\log_qL_n+2\\
  &=
  2\gamma n+o(n).
\end{aligned}
\]
\end{proof}

% \begin{proof}
% Assume without loss of generality that \(r_{\min}=r_X\).
% Note that
% \(
%   k
%   =
%   n-\dim(C_2^\perp\oplus C_1^\perp)
%   =
%   \dim C_1+\dim C_2-n.
% \)
% Apply Proposition~\ref{prop:blowup} with
% \(C=C_1\) and \(W=C_2^\perp\).  Since
% \[
%   |C_1/C_2^\perp|
%   =
%   q^{\dim C_1-\dim C_2^\perp}
%   =
%   q^{\dim C_1+\dim C_2-n}
%   =
%   q^k,
% \]
% whenever \(w:=\lfloor\rho n\rfloor\ge r_X+1\), the proposition gives
% a one-sector coset list of size at least
% \(
%   \min\{q^{\,w-r_X},q^k\}.
% \)
% By the one-sector observation above, this is also a lower bound on the
% joint quantum list size.

% We now show that subexponential QLD list size forces
% \(
%   r_X\ge\rho n-o(n).
% \)
% If \(w\le r_X\), this follows immediately from
% \(w=\lfloor\rho n\rfloor=\rho n+O(1)\).
% Otherwise \(w\ge r_X+1\), so
% \(
%   \min\{q^{\,w-r_X},q^k\}\le L_n.
% \)
% Since \(k=Rn\) with fixed \(R>0\), whereas
% \(\log_qL_n=o(n)\), we have \(L_n<q^k\) for all sufficiently large
% \(n\).  Hence
% \(
%   q^{\,w-r_X}\le L_n,
% \)
% and therefore
% \(
%   w-r_X\le\log_qL_n=o(n).
% \)
% Again using \(w=\rho n+O(1)\), we obtain
% \(
%   r_X\ge\rho n-o(n).
% \)

% Finally, since
% \(r_X+r_Z=(1-R)n\) and
% \(\rho=(1-R)/2-\gamma\),
% \[
%   r_Z-r_X
%   =
%   (1-R)n-2r_X
%   \le
%   2\gamma n+o(n).
% \]
% \end{proof}

Thus, near the quantum Singleton radius, any scalar CSS family with
subexponential QLD list size must be approximately balanced.
We next specialize to balanced CSS codes.

\begin{corollary}[Balanced CSS consequence]
\label{cor:balanced-css-qld-lower}
Fix \(R\in(0,1)\), and let \(Q=\CSS(C_1,C_2)\) range over a
family of balanced scalar CSS codes of rate \(R\).
For every fixed integer \(L\ge1\) and all sufficiently large
admissible \(n\), there is a QLD bad list of size \(L+1\) at radius
\[
  \rho_L
  =
  \frac{L}{L+1}\frac{1-R}{2}+O(1/n).
\]
Consequently, for fixed \(\gamma\in(0,(1-R)/2)\), any
blocklength-independent list budget \(L\) for which the family is
\(
  ((1-R)/2-\gamma,L)
\)-QLD for all sufficiently large admissible \(n\) must satisfy
\[
  L\ge\frac{1-R}{2\gamma}-1.
\]
In particular, \(L=\Omega((1-R)/\gamma)\) as
\(\gamma\downarrow0\), with \(R\) fixed.
The same conclusions hold for both standard- and average-radius QLD.
\end{corollary}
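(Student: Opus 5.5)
The plan is to apply Lemma~\ref{lem:coset-gsb} to one marginal quotient, namely \(C=C_1\) and \(W=C_2^\perp\), and then lift the resulting bad list to the joint quantum code with the one-sector lift described at the start of this section.

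First, record the parameters. Balance gives \(\dim C_1=\dim C_2=R_1n\), so \(r_C=r_X=n-R_1n=(1-R)n/2\). The quotient has size \(|C_1/C_2^\perp|=q^{Rn}\). For fixed \(L\) and \(R>0\), the condition \(L+1\le q^{Rn}\) holds once \(n\) is large enough.

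Next, apply Lemma~\ref{lem:coset-gsb}. It yields a center \(g\) and \(L+1\) representatives \(x_1,\ldots,x_{L+1}\in C_1\) that are pairwise distinct modulo \(C_2^\perp\), each within distance
\[
\left\lceil \tfrac{L}{L+1}\bigl(\tfrac{(1-R)n}{2}+\lceil\log_q(L+1)\rceil\bigr)\right\rceil
\]
of \(g\). Since \(L\) is fixed, \(\lceil\log_q(L+1)\rceil\le \log_2(L+1)+1=O(1)\), so this radius equals \(\frac{L}{L+1}\frac{1-R}{2}n+O(1)\). That is \(\rho_L=\frac{L}{L+1}\frac{1-R}{2}+O(1/n)\).

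Now lift to the quantum code. Take the pairs \((x_j,0)\) and the lists \(\widetilde E_i=\{(g_i,0)\}\), i.e.\ \(\ell=1\). Translating by \((g,0)\) gives pairwise stabilizer-distinct Pauli errors with a common syndrome whose block weights equal the original distances. This produces the QLD bad list of size \(L+1\).

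For the consequence, suppose the family is \(((1-R)/2-\gamma,L)\)-QLD for all large \(n\). Then no bad list of size \(L+1\) can fit inside radius \((1-R)/2-\gamma\), so we need \(\rho_L>(1-R)/2-\gamma\) for all large \(n\). Letting \(n\to\infty\) with \(L\) fixed gives
\[
\tfrac{L}{L+1}\tfrac{1-R}{2}\ge\tfrac{1-R}{2}-\gamma ,
\]
which rearranges to \(\frac{1}{L+1}\frac{1-R}{2}\le\gamma\), i.e.\ \(L\ge\frac{1-R}{2\gamma}-1\). This is \(\Omega((1-R)/\gamma)\).

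The main obstacle is the strict versus non-strict inequality in this limit. The \(O(1)\) slack in the radius could make the constructed radius exceed the target by a little. The cleanest route is to argue by contradiction. Suppose \(L<\frac{1-R}{2\gamma}-1\). Then \(\frac{L}{L+1}\frac{1-R}{2}<\frac{1-R}{2}-\gamma\) with a constant gap, so for large \(n\) the \(O(1/n)\) term is absorbed. The bad list then lies within the target radius, contradicting the QLD property.

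For average radius, note that standard-radius QLD is implied by average-radius QLD. A standard-radius bad list is also an average-radius bad list, since the sum of weights is at most \((L+1)\rho_L n\). The lift preserves this as noted earlier, so the same bound follows.
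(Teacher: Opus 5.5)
Your proposal is correct and follows essentially the same route as the paper. You apply Lemma~\ref{lem:coset-gsb} to the marginal quotient \(C_1/C_2^\perp\) with \(r_X=(1-R)n/2\), lift via the one-sector lift (your explicit translation by \((g,0)\) is a valid way to realize the \(\ell=1\) case), and let \(n\to\infty\) to obtain \(\gamma\ge(1-R)/(2(L+1))\). The strict-versus-non-strict issue you flag is not a real obstacle: the limit of the strict inequalities \(\rho<\rho_L\) already gives exactly the non-strict bound stated, which is how the paper handles it, and your contradiction argument is an equivalent rephrasing.
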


\begin{proof}
For a balanced CSS code,
\(
  r_X=r_Z=(1-R)n/2.
\)
Moreover,
\(|C_1/C_2^\perp|=q^k=q^{Rn}\), so for every fixed \(L\),
\(L+1\le |C_1/C_2^\perp|\) for all sufficiently large \(n\).
Apply Lemma~\ref{lem:coset-gsb} to the marginal quotient
\(C_1/C_2^\perp\).  Since
\(\lceil\log_q(L+1)\rceil=O(1)\) for fixed \(L\), the lemma gives a bad
list of size \(L+1\) at radius
\[
  \rho_L
  =
  \frac1n
  \left\lceil
    \frac{L}{L+1}
    \left(
      \frac{1-R}{2}n+\lceil\log_q(L+1)\rceil
    \right)
  \right\rceil
  =
  \frac{L}{L+1}\frac{1-R}{2}+O(1/n).
\]
By the one-sector observation above, this is also a bad list for the
full CSS code.

Now suppose the code is QLD at radius
\(
  \rho=(1-R)/2-\gamma
\)
with blocklength-independent list size \(L\).  Since a bad list of
size \(L+1\) exists at radius \(\rho_L\), we must have
\(\rho<\rho_L\) for all sufficiently large \(n\).  Hence
\[
  \frac{1-R}{2}-\gamma
  \le
  \frac{L}{L+1}\frac{1-R}{2}+o(1).
\]
Letting \(n\to\infty\) gives
\(
  \gamma\ge (1-R)/(2(L+1)),
\)
and therefore
\(
  L\ge(1-R)/(2\gamma)-1.
\)
In particular,
\(
  L=\Omega((1-R)/\gamma).
\)

The same construction also witnesses the average-radius lower bound,
since every one of the \(L+1\) representatives lies within radius
\(\rho_L\), and hence their average disagreement is at most
\(\rho_L n\).
\end{proof}

\subsection{Folded quotients}
\label{subsec:folded-fqrs-quotients}

The scalar proof of Lemma~\ref{lem:coset-gsb} chooses a projection set
\(T\) containing an information set of the stabilizer subcode \(W\).
Injectivity of \(\pi_T\) on \(W\) then ensures that, for any fixed coset
representative \(u\), at most one word in \(u+W\) can lie in a given
projection fiber.  Thus \(L+1\) codewords in the same fiber necessarily
represent \(L+1\) distinct cosets of \(W\).

After folding, the code is viewed as a length-\(n\) code over the register
alphabet \(\F_q^s\), and Hamming distance is measured in whole registers.
The underlying code, however, remains only \(\F_q\)-linear.  An
\(\F_q\)-information set for \(W\) may therefore be scattered across many
folded registers and need not correspond to the minimum possible number of
whole registers.

For an \(\F_q\)-linear subspace \(W\le(\F_q^s)^n\), define its
\emph{register information number} by
\[
  \operatorname{ris}_s(W)
  :=
  \min\bigl\{
    |I|: I\subseteq[n]
    \text{ and }
    \pi_I:W\to(\F_q^s)^I
    \text{ is injective}
  \bigr\}.
\]
Since \((\F_q^s)^I\) has \(\F_q\)-dimension \(s|I|\), necessarily
\(
  \operatorname{ris}_s(W)
  \ge
  \left\lceil\frac{\dim W}{s}\right\rceil.
\)

To apply the scalar projection-and-patching argument over whole registers,
we need \(T\subseteq[n]\) to be large enough that \(\pi_T\) is injective on
\(W\), but small enough that some projection fiber still contains at least
\(L+1\) codewords.  Thus it suffices that
\[
  \operatorname{ris}_s(W)
  \le |T|
  \le
  \frac{\dim C}{s}-\log_{q^s}(L+1).
\]
Equivalently, the projection-and-patching argument requires enough whole
registers to determine \(W\), but few enough registers that some fiber still
contains \(L+1\) codewords.

In the scalar case \(s=1\),
\(\operatorname{ris}_1(W)=\dim W\), so these two requirements are compatible
whenever
\(
  L+1\le |C/W|.
\)
After folding, this compatibility is not automatic since the independent scalar
directions of \(W\) may be spread across more than
\(\lceil\dim_{\F_q}W/s\rceil\) registers. We therefore propose a conditional generalized Singleton bound for folded codes as follows.

\begin{lemma}[Conditional Folded coset generalized Singleton bound]
\label{lem:folded-coset-gsb}
Let \(W\subseteq C\subseteq(\F_q^s)^n\) be
\(\F_q\)-linear, and write \(k_C:=\dim C\).
Define \(\mathcal L_{C/W}(g,\rho)\) using Hamming distance
on the \(n\) folded registers.
For an integer \(L\ge1\), set
\(
  m_L:=
  \left\lfloor
    \frac{k_C}{s}-\log_{q^s}(L+1)
  \right\rfloor.
\)
If \(\operatorname{ris}_s(W)\le m_L\), then there exists
\(g\in(\F_q^s)^n\) such that
\[
  |\mathcal L_{C/W}(g,\rho_L)|\ge L+1,
  \qquad
  \rho_L=
  \frac1n
  \left\lceil
    \frac{L}{L+1}(n-m_L)
  \right\rceil.
\]
\end{lemma}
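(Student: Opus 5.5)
The plan is to rerun the projection-and-patching proof of Lemma~\ref{lem:coset-gsb} over whole registers. The hypothesis \(\operatorname{ris}_s(W)\le m_L\) replaces the scalar information-set step.

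First, fix a register set \(I\subseteq[n]\) with \(|I|=\operatorname{ris}_s(W)\) on which \(\pi_I:W\to(\F_q^s)^I\) is injective. Enlarge \(I\) to a register set \(T\supseteq I\) with \(|T|=m_L\); this is possible because \(\operatorname{ris}_s(W)\le m_L\), and we also need \(m_L\le n\), which holds since \(k_C\le sn\). Injectivity of \(\pi_T\) on \(W\) then follows from injectivity of \(\pi_I\), because \(\pi_I\) factors through \(\pi_T\).

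Next comes the pigeonhole step. Choose a transversal \(U\) of \(C/W\) and parametrize \(C\) by \(U\times W\) via \((u,w)\mapsto u+w\). Consider the map \((u,w)\mapsto\pi_T(u+w)\in(\F_q^s)^T\). The target has \(q^{s m_L}\) elements and the domain has \(q^{k_C}\) elements. By the definition of \(m_L\), we have \(s m_L\le k_C-s\log_{q^s}(L+1)=k_C-\log_q(L+1)\). Hence some fiber has size at least \(q^{k_C-sm_L}\ge L+1\).

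Within that fiber, fix \(u\). Then \(\pi_T(u+w)=\pi_T(u)+\pi_T(w)\), and \(\pi_T\) is injective on \(W\), so at most one \(w\) works for each \(u\). Therefore any \(L+1\) elements of the fiber, \(x_j=u_j+w_j\), lie in pairwise distinct cosets of \(W\), and all of them agree with a common \(p\) on the registers in \(T\).

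For the patching step, set \(J:=[n]\setminus T\), which has \(t:=n-m_L\) registers. Partition \(J\) into \(J_1,\ldots,J_{L+1}\), each of size at least \(\lfloor t/(L+1)\rfloor\). Define \(g|_T=p\) and \(g|_{J_j}=x_j|_{J_j}\), register by register. Then \(x_j\) agrees with \(g\) on \(T\cup J_j\), so the register disagreement satisfies
\(\dis(g,x_j)\le t-\lfloor t/(L+1)\rfloor=\lceil Lt/(L+1)\rceil=\rho_L n\). This gives \(|\mathcal L_{C/W}(g,\rho_L)|\ge L+1\).

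Most of the argument is a direct transcription of the scalar proof. The step needing care is the counting in the pigeonhole stage, since \(m_L\) is defined with a floor and a base-\(q^s\) logarithm. One must check that \(q^{k_C-sm_L}\ge L+1\) still holds, and it does because \(sm_L\le k_C-\log_q(L+1)\). The hypothesis on \(\operatorname{ris}_s(W)\) then does exactly the work that the scalar inequality \(|T|\ge\dim W\) did in Lemma~\ref{lem:coset-gsb}. Note also that \(L+1\le|C/W|\) is not needed as a separate hypothesis: it follows because the \(L+1\) distinct cosets are produced explicitly.
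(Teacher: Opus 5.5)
Your proposal is correct and follows the same route as the paper: choose a register set \(T\) with \(|T|=m_L\) on which \(\pi_T\) is injective on \(W\), use the pigeonhole bound \(q^{k_C-sm_L}\ge L+1\) to obtain \(L+1\) coset-distinct codewords agreeing on \(T\), and then patch the remaining \(n-m_L\) registers. Your additional checks (enlarging an injective set \(I\) to \(T\), the inequality \(sm_L\le k_C-\log_q(L+1)\), and the identity \(t-\lfloor t/(L+1)\rfloor=\lceil Lt/(L+1)\rceil\)) supply details that the paper's brief proof leaves implicit.
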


\begin{proof}
We apply the argument of Lemma~\ref{lem:coset-gsb} to whole
registers.  The hypothesis allows us to choose \(T\subseteq[n]\)
with \(|T|=m_L\) such that \(\pi_T\) is injective on \(W\).
Since
\(
  |C|/q^{s|T|}
  =
  q^{k_C-sm_L}
  \ge L+1,
\)
the same pigeonhole argument gives \(L+1\) quotient-distinct
codewords agreeing on \(T\).  The same patching step on the
remaining \(n-m_L\) registers then gives the claimed radius.
\end{proof}

For FQRS codes, the MDS structure of the underlying stabilizer
subcode gives
\(
  \operatorname{ris}_s(C_2^\perp)
  =
  \lceil\dim_{\F_q}(C_2^\perp)/s\rceil.
\)
% Under the block-alignment assumption used in the FQRS application,
% the condition \(L+1\le|C_1/C_2^\perp|\) therefore guarantees the
% hypothesis of Lemma~\ref{lem:folded-coset-gsb}.
Random CSS codes in this paper are scalar, so their converse follows
directly from Lemma~\ref{lem:coset-gsb}.

\section{Folded quantum Reed--Solomon codes}
\label{sec:fqrs-results}

We consider the balanced FQRS codes of
Definition~\ref{def:fqrs}.  Throughout this section, fix the quantum
rate \(R\in(0,1)\), and write
\(
  R_1:=\frac{1+R}{2}
\)
for the common component rate.  
% For
% \(0<\gamma<(1-R)/2\), set
% \(
%   \rho:=\frac{1-R}{2}-\gamma
%   =1-R_1-\gamma.
% \)

Write
\(
  Q=\mathrm{FQRS}^{(s)}_R=\CSS(C_1,C_2).
\)
Here \(C_1\) and \(C_2^\perp\) are standard folded RS codes on the
same evaluation points, while \(C_2\) is a folded GRS code arising
from duality.  The achievability results use entropy and remainder
certificates for QLR and QLD, respectively.  The QLR lower bound is
obtained by an explicit construction in the quotient
\(C_1/C_2^\perp\), while the QLD converse uses the coset Singleton
tools of Section~\ref{sec:quotient-tools}.

\subsection{Achievability: Classical BL inputs}
\label{subsec:fqrs-achievability}

We first show how subspace designability provides the one-sector
entropy and remainder certificates needed by the pairing
lemma.  We then apply this input to standard folded
RS codes. By Remark~\ref{rem:mult}, the same conclusions hold for the
corresponding folded GRS sector.

\begin{lemma}[Subspace-design BL certificates]
\label{lem:design-to-certificate}
Let \(C\subseteq(\F_q^s)^n\) be a
\(\mu\)-slacked \(D\)-subspace-designable code of rate \(R_{\rm c}\), with
\(R_{\rm c}+\mu<1\), and let
\(\Enc_C:\F_q^k\to C\) be its encoder.
Then, for every affine subspace \(a+V\subseteq\F_q^k\) with
\(\dim V\le D\), the encoded set
\(
  \Enc_C(a+V)\subseteq C
\)
satisfies
\(
  \mathsf{Cert}^{H}(1-R_{\rm c}-\mu)
\)
and
\(
  \mathsf{Cert}^{r}(1-R_{\rm c}-\mu).
\)
\end{lemma}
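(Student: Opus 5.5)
The plan is to apply the discrete Brascamp--Lieb inequality (Theorem~\ref{thm:functional-bl}) on the linear span of the affine subspace, after translating it to pass through the origin. Since both entropy and remainder are invariant under bijective relabeling, translation changes nothing. Likewise, the coordinate maps \(\pi_i\) are linear, so \(\pi_i(a+v)=\pi_i(a)+\pi_i(v)\) differs from \(\pi_i(v)\) by a fixed shift. Hence it suffices to consider a random \(Y=\Enc_C(a+X)\) with \(X\) supported on \(V\), \(\dim V\le D\), and to prove \(\sum_i H(\pi_i(X))\ge(1-R_{\rm c}-\mu)nH(X)\), and the same for \(r\). Here \(H(Y)=H(X)\) because \(\Enc_C\) is injective, and \(Y_i=\pi_i(a)+\pi_i(X)\).

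Next I would verify the Brascamp--Lieb dimension condition on \(V\) with the maps \(\pi_i|_V\) and uniform coefficients \(s_i:=\bigl((1-R_{\rm c}-\mu)n\bigr)^{-1}\). Fix any subspace \(W\le V\), so \(d:=\dim W\le D\). By rank--nullity, \(\dim\pi_i(W)=d-\dim(W\cap H_i)\). Summing over \(i\) and applying \(\mu\)-slacked \(D\)-subspace designability (Definition~\ref{def:folded-subspace-designability}) gives
\[
  \sum_{i=1}^n\dim\pi_i(W)=dn-\sum_{i=1}^n\dim(W\cap H_i)\ge dn-(R_{\rm c}+\mu)dn=(1-R_{\rm c}-\mu)dn .
\]
Dividing by \((1-R_{\rm c}-\mu)n>0\), which is positive because \(R_{\rm c}+\mu<1\), gives \(\dim W\le\sum_i s_i\dim\pi_i(W)\). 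The case \(d=0\) is trivial.

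Theorem~\ref{thm:functional-bl}, applied on the space \(V\), then yields
\[
  H(X)\le\frac{1}{(1-R_{\rm c}-\mu)n}\sum_i H(\pi_i(X)),
  \qquad
  r(X)\le\frac{1}{(1-R_{\rm c}-\mu)n}\sum_i r(\pi_i(X)).
\]
Rearranging and transferring back via the shift-invariance noted above gives both \(\mathsf{Cert}^{H}(1-R_{\rm c}-\mu)\) and \(\mathsf{Cert}^{r}(1-R_{\rm c}-\mu)\) for \(\Enc_C(a+V)\).

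The only delicate point is the bookkeeping around the affine translation and the restriction to \(V\). The dimension condition must be checked for subspaces of \(V\), not of all of \(\F_q^k\), and this is exactly what the bound \(\dim V\le D\) supplies. Beyond that, the argument is a direct application of the stated theorem.
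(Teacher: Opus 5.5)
Your proposal is correct and follows essentially the same route as the paper. Both arguments remove the affine shift by translation invariance, check the Brascamp--Lieb dimension condition on subspaces \(W\le V\) via rank--nullity and \(\mu\)-slacked subspace designability, and apply Theorem~\ref{thm:functional-bl} to the restricted maps \(\pi_i|_V\). Your explicit uniform coefficients \(s_i=\bigl((1-R_{\rm c}-\mu)n\bigr)^{-1}\) and the injectivity step \(H(Y)=H(X)\) simply spell out what the paper leaves implicit.
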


\begin{proof}
Fix an affine subspace \(a+V\subseteq\F_q^k\) with
\(\dim V\le D\).  Since \(C\) is linear,
\(
  \Enc_C(a+V)=\Enc_C(a)+\Enc_C(V).
\)
Coordinatewise translation by \(\Enc_C(a)\) preserves entropy and
remainder, so it suffices to prove the certificates on
\(\Enc_C(V)\).

For every subspace \(W\le V\), subspace designability and
rank--nullity give
\[
  \sum_{i=1}^n\dim\pi_i(W)
  =
  n\dim W-\sum_{i=1}^n\dim(W\cap H_i)
  \ge
  (1-R_{\rm c}-\mu)n\dim W.
\]
Thus Theorem~\ref{thm:functional-bl}, applied to \(V\) and the
restricted coordinate maps \(\pi_i|_V\), gives the claimed
certificates on \(\Enc_C(V)\), and hence on \(\Enc_C(a+V)\).
\end{proof}

\begin{theorem}[FQRS QLR]
\label{thm:fqrs-qlr}
Fix \(\ell\ge2\) and \(\gamma\in(0,(1-R)/2)\), and set
\(
  L:=
  \left\lfloor
    \left(\frac{\ell}{R_1+\gamma/2}\right)^{3+2R_1/\gamma}
  \right\rfloor.
\)
If \(s\ge16\ell/\gamma^2\), then \(Q\) is standard-radius
\(((1-R)/2-\gamma,\ell,L)\)-QLR.  If instead
\(s\ge L(1+3R_1/\gamma)\), then \(Q\) is average-radius
\(((1-R)/2-\gamma,\ell,L)\)-QLR.

% In particular, under the corresponding folding condition, the required
% blocklength-independent list size satisfies
% \[
%   L=\ell^{O(R_1/\gamma)}.
% \]
\end{theorem}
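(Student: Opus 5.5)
We argue by contradiction through the quotient reduction of Section~\ref{sec:qlr}. Suppose there are local Pauli lists \(\mathcal E_i\) with \(|\mathcal E_i|\le\ell\) and \(L+1\) pairwise stabilizer-distinct errors with a common syndrome whose disagreement counts are bounded in the required (standard- or average-radius) sense. Translating by the first error gives labels \(c_j=(x_j,z_j)\in C_1\oplus C_2\), \(j\in[L+1]\), which lie in distinct cosets of \(C_2^\perp\oplus C_1^\perp\), together with translated lists \(T_i\) of the same size. The representatives are distinct, but they need not be distinct as pairs modulo only one sector. We therefore first pick one representative per coset. The set \(\mathcal A:=\{c_1,\dots,c_{L+1}\}\) is then a set of distinct paired words over the register alphabet \(\F_q^s\times\F_q^s\). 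The aim is to show that \(\mathcal A\) satisfies \(\mathsf{Cert}^H_{L+1}(1-R_1-\gamma/3)\), and then to apply the entropy conversion of Remark~\ref{rem:certificate-to-list-conversion} to the paired code. That conversion only uses the entropy inequality on at most \(L+1\) distinct candidates together with local lists of size \(\ell\) over the paired alphabet. This yields average-radius \((1-R_1-\gamma,\ell,L)\)-recovery for \(\mathcal A\), which contradicts the existence of the bad list, since \(1-R_1-\gamma=(1-R)/2-\gamma\).

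To obtain the one-sector certificates, we use the following facts.
\begin{enumerate}
\item \(C_1\) is an \(s\)-folded RS code of rate \(R_1\). By Remark~\ref{rem:mult}, \(C_2\) is a folded GRS twist of a folded RS code of the same rate \(R_1\).
\item In the regime \(s\ge16\ell/\gamma^2\), Corollary~\ref{cor:frs-list-recovery} with \(\varepsilon=\gamma\) makes each code \(\gamma/3\)-slacked \((4\ell/\gamma)\)-subspace designable.
\item Theorem~\ref{thm:gw} confines the \(X\)-words \(\{x_j\}\) to an encoded affine subspace \(\Enc_{C_1}(a+V)\) with \(\dim V\le 4\ell/\gamma\). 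This holds because each \(x_j\) has \(X\)-projection disagreeing with the projected lists \(\pi_X(T_i)\), of size at most \(\ell\), in no more than the disagreement of the pair.
\item For a fixed \(x\), the fiber words \(z_j\) satisfy \(\dis(z_j,\{b:(x_i,b)\in T_i\})\le\dis(c_j,T)\), and these fiber lists also have size at most \(\ell\). Theorem~\ref{thm:gw} therefore confines each fiber to an affine subspace of dimension at most \(4\ell/\gamma\) in \(C_2\).
\end{enumerate}
Lemma~\ref{lem:design-to-certificate} then gives \(\mathsf{Cert}^H(1-R_1-\gamma/3)\) for the \(X\)-projection and for every \(Z\)-fiber. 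Lemma~\ref{lem:pairing}, or Corollary~\ref{cor:bounded-certificate-pairing}, combines them into the joint certificate with the same coefficient.

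In the average-radius regime \(s\ge L(1+3R_1/\gamma)\), we avoid GW and instead use the second part of Corollary~\ref{cor:frs-list-recovery}: each sector is \(\gamma/3\)-slacked \(L\)-subspace designable. After translating one candidate to zero, the \(X\)-projection of at most \(L+1\) words spans a space of dimension at most \(L\). The same holds for each fiber, where we translate by a fiber member. Lemma~\ref{lem:design-to-certificate} again applies, followed by pairing and the conversion. The standard-radius statement follows in the first regime because the entropy conversion gives average-radius recovery for the restricted set, and average-radius recovery implies standard-radius recovery.

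We expect the main obstacle to be the GW step in the standard-radius regime. Theorem~\ref{thm:gw} is stated for candidates within a fixed radius. In the standard-radius case every candidate is within \((1-R_1-\gamma)n\) of the paired lists, hence of the projected lists, so the step applies to each sector and each fiber. We must also check that the conversion in~\cite{BCDZ26BL} only needs the certificate on distributions supported on the bad list itself; this is exactly the bounded-support form we establish. If this subtlety fails for the standard-radius case, the first thing to try is to apply the conversion only to uniform distributions on subsets of \(\mathcal A\), which lie inside the GW affine spaces.
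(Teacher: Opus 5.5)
Your proposal is correct and follows essentially the same route as the paper: Theorem~\ref{thm:gw} confinement (standard radius) or the affine-span bound of dimension at most \(L\) (average radius) feeds subspace designability into Lemma~\ref{lem:design-to-certificate}, Remark~\ref{rem:mult} handles the GRS sector, and Lemma~\ref{lem:pairing} together with Remark~\ref{rem:certificate-to-list-conversion} finishes the argument. Your explicit check that the projected lists \(\pi_X(T_i)\) and the fiber lists \(\{b:(x_i,b)\in T_i\}\) have size at most \(\ell\) and inherit the paired disagreement bound is a detail the paper leaves implicit when it says that GW applies to ``each relevant candidate set.''
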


% \begin{proof}
% The one-sector entropy-BL calculations underlying
% Corollary~\ref{cor:frs-list-recovery} give the stated standard- and
% average-radius parameters at classical sector rate \(R_1\) and slack
% \(\varepsilon=\gamma\).
% Lemma~\ref{lem:pairing} combines the two sector entropy certificates
% with no loss in the coefficient, so the same calculation applies
% unchanged to the joint paired list. 
% \end{proof}

\begin{proof}
Set \(\delta:=1-R_1-\gamma/3\) and focus on one sector.

For standard-radius list recovery, Theorem~\ref{thm:gw} guarantees that
each relevant candidate set is contained in an affine message space of
dimension at most \(4\ell/\gamma\).
Corollary~\ref{cor:frs-list-recovery} then gives
\(\gamma/3\)-slacked subspace designability through this dimension.
For average-radius list recovery, given \(L+1\) paired candidates,
each one-sector projection or fiber has affine message span of dimension
at most \(L\).  The same corollary gives
\(\gamma/3\)-slacked \(L\)-subspace designability under the stated
folding condition.

Thus Lemma~\ref{lem:design-to-certificate} gives
\(\mathsf{Cert}^{H}(\delta)\) on the required one-sector candidate
sets in either case.  By Remark~\ref{rem:mult}, the same conclusion
holds for the multiplier-twisted sector.

The pairing Lemma~\ref{lem:pairing} then gives the same entropy
certificate for the paired \(X/Z\) candidates.
Applying Remark~\ref{rem:certificate-to-list-conversion} with
\(R_{\rm c}=R_1\) and \(\varepsilon=\gamma\) gives the stated
list size and radius.
\end{proof}

\begin{theorem}[FQRS QLD achievability]
\label{thm:fqrs-qld}
Let \(Q\) be a balanced FQRS code of quantum rate \(R\), and let
\(R_1=(1+R)/2\).  For every integer \(1\le L\le s\), \(Q\) is
average-radius \((\rho,L)\)-QLD for every
\[
  \rho<
  \frac{L}{L+1}
  \left(
    1-\frac{k_1-1}{n(s-L+1)}
  \right),
  \qquad k_1=R_1sn.
\]
% In particular, for fixed \(L\), the achievable radius approaches
% \(\frac{L}{L+1}(1-R_1)
% =\frac{L}{L+1}\frac{1-R}{2}\)
% as the folding parameter \(s\) grows.

\end{theorem}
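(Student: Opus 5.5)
The plan mirrors the proof of Theorem~\ref{thm:fqrs-qlr}, with the entropy certificate replaced by the remainder certificate. By the reduction to the CSS quotient, it suffices to show the following. Fix any center \(c\in(\F_q^s\oplus\F_q^s)^n\) and any \(L+1\) paired candidates \((x^{(j)},z^{(j)})\in C_1\oplus C_2\) that are pairwise distinct modulo \(C_2^\perp\oplus C_1^\perp\). Then the average block distance to \(c\) exceeds \(\rho n\).

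\textbf{Step 1: reduce distinct cosets to distinct words.} Pairwise stabilizer-distinctness implies pairwise distinctness as words. Hence the candidate set \(\mathcal A\subseteq C_1\times C_2\) consists of \(L+1\) distinct paired words. Its \(X\)-projection \(\mathcal A_X\) has at most \(L+1\) words, and so does each fiber \(\mathcal A_Z(x)\).

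\textbf{Step 2: one-sector remainder certificates.} Take any set of at most \(L+1\) words of \(C_1\). After translating by one of them, its message span is a linear space of dimension at most \(L\). Theorem~\ref{thm:gk} supplies exactly the \((d,d(k_1-1)/(s-d+1))\) subspace-design hypotheses of Theorem~\ref{thm:bcdz-rem} for \(0\le d\le s\), and \(L\le s\). So Remark~\ref{rem:bcdz-rem-intermediate} gives \(\mathsf{Cert}^{r}(\delta)\) on the set, with
\(\delta:=1-\frac{k_1-1}{n(s-L+1)}\).

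The same holds for \(C_2\). It is a folded GRS code with the same dimension \(k_1\) on the same evaluation points, so Remark~\ref{rem:mult} transfers the subspace-design bound, the encoder injectivity, and remainder invariance. We therefore obtain \(\mathsf{Cert}^{r}_{L+1}(\delta)\) for both \(\mathcal A_X\) and every \(Z\)-fiber.

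\textbf{Step 3: pair and convert.} Corollary~\ref{cor:bounded-certificate-pairing} with \(B_X=B_Z=L+1\) yields \(\sum_i r(X_i,Z_i)\ge\delta n\,r(X,Z)\) for every distribution on \(\mathcal A\). This is a remainder certificate for the paired alphabet \(\F_q^s\times\F_q^s\), where disagreement is exactly block weight \(\wt\) (equivalently \(\dis\) with \(\ell=1\)). Remark~\ref{rem:certificate-to-list-conversion}, remainder case, then gives average-radius \((\rho,L)\)-list decoding for every \(\rho<\frac{L}{L+1}\delta\), which is the claimed bound.

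\textbf{Main obstacle.} The delicate point is checking that the remainder conversion of \cite[Theorem~5.2]{BCDZ26BL} genuinely uses only the certificate on distributions over the \(L+1\) candidates. It must not use linearity of the ambient code or the alphabet structure, since \(\mathcal A\) is not an affine subspace after pairing. I would verify this by recalling that conversion: it picks a suitable distribution on the \(L+1\) words (weighted so that one of them has mass at most a fraction \(L/(L+1)\) beyond uniform), and bounds the expected distance to the center from below via \eqref{eq:min_expected_distance_remainder}. Both ingredients are alphabet-agnostic. A secondary check is that the fibers of \(\mathcal A\) lie in \(C_2\), whose GRS multipliers are register-wise invertible, so Theorem~\ref{thm:gk} applies verbatim after the Remark~\ref{rem:mult} relabeling.
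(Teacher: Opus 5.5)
Your proposal is correct and follows essentially the same route as the paper. The GK bound supplies the design hypotheses for Remark~\ref{rem:bcdz-rem-intermediate}, which gives one-sector remainder certificates with coefficient \(1-\frac{k_1-1}{n(s-L+1)}\) on the \(X\)-projection and on every \(Z\)-fiber (transferred to \(C_2\) via Remark~\ref{rem:mult}); the pairing lemma combines them, and Remark~\ref{rem:certificate-to-list-conversion} converts the result into the stated average-radius bound. Your description of the conversion distribution is muddled, but the uniform distribution on the \(L+1\) candidates already suffices: it has \(r=L/(L+1)\) and \(\sum_i r(X_i,Z_i)\le\frac{1}{L+1}\sum_j\wt(F^{(j)})\), so the alphabet-independence check goes through directly.
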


\begin{proof}
Set
\(
  \delta_L
  :=
  1-(k_1-1)/(n(s-L+1)).
\)
For any \(L+1\) paired candidates, each one-sector projection or fiber
is contained in an affine message space of dimension at most \(L\).
Remark~\ref{rem:bcdz-rem-intermediate} therefore gives the remainder
inequality with coefficient \(\delta_L\) on the required one-sector
candidate sets.  The remaining steps are analogous to the QLR argument above.
\end{proof}

\subsection{Quotient-surviving lower bounds}

\subsubsection{QLR: Diagonal-product grid construction}
\label{subsec:fqrs-soft-grid}

Classically, in the relevant large-alphabet linear-code regime, a code
of rate \(R_1\) at radius \(1-R_1-\eps\) requires output list size
\(L\ge \ell^{\Omega(R_1/\eps)}\) for list
recovery~\cite{LS25,BCDZ26BL}.  However, such a classical lower bound
does not immediately imply a quantum one: distinct classical codewords
may become equivalent modulo the stabilizer and collapse to a smaller
quantum list.  Our goal is therefore to construct a large classical bad
list whose candidates remain distinct after stabilizer quotienting.  As
discussed above, for lower-bound purposes it is sufficient to work in one
sector. 

Our construction builds on the folded-RS lower-bound construction of
Chen and Zhang~\cite{CZ25}.  Their construction arranges a large
product-type family of codewords so that each coordinate takes only a
small number of possible symbols, thereby producing a large classical
bad list for list recovery.  Since the construction is classical,
however, it does not address stabilizer equivalence and does not by
itself guarantee that distinct candidates remain distinct modulo
\(C_2^\perp\).

We adapt their construction to the quantum quotient by modifying the
generator polynomials to have distinct degrees.  This ensures that the
highest-degree term in any nonzero difference of two candidates cannot
cancel, and hence allows us to enforce stabilizer distinctness.  For
completeness, we give the full construction below, adapted to our
notation.

% For a stabilizer code \(Q\), let
% \[
%   L_Q^\star(\rho,\ell)
%   :=
%   \min\{\,L:Q\text{ is }(\rho,\ell,L)\QLR\,\}.
% \]

For FQRS, we work in the quotient \(C_1/C_2^\perp\).  As in
Definition~\ref{def:fqrs}, \(C_1\) and \(C_2^\perp\) are obtained by
folding ordinary Reed--Solomon codes on the same scalar evaluation
points
\(
  1,\alpha,\ldots,\alpha^{sn-1},
\)
where \(\alpha\) is a generator of \(\F_q^\times\). For simplicity, further assume that \(R_1n\in\mathbb N\), so that
\[
  k_1:=\dim_{\F_q} C_1 = R_1sn,
  \qquad
  d_0:=\dim_{\F_q} C_2^\perp = (1-R_1)sn
\]
are both multiples of \(s\).
Then
\[
  C_2^\perp
  =
  \ev(\F_q[x]_{<d_0})
  \subseteq
  C_1
  =
  \ev(\F_q[x]_{<k_1}).
\]

Since \(k_1<sn\) and the \(sn\) evaluation points are distinct, the
folded evaluation map is injective on
\(\F_q[x]_{<k_1}\).  Hence every \(y\in C_1\) has a unique encoding
polynomial \(f_y\in\F_q[x]_{<k_1}\).  Under this identification,
\[
  y\in C_2^\perp
  \iff
  \deg f_y<d_0.
\]
Consequently, for \(y_j=\ev(f_j)\in C_1\),
\[
  y_1\equiv y_2\pmod{C_2^\perp}
  \iff
  f_1-f_2\in\F_q[x]_{<d_0}
  \iff
  \deg(f_1-f_2)<d_0.
\]
Thus, to obtain stabilizer-distinct candidates, it suffices to construct
a classical bad list in \(C_1\) such that every nonzero pairwise
polynomial difference has degree at least \(d_0\).

\noindent\textbf{\textit{Step 1: diagonal generators.}}
Fix \(0<\gamma<(1-R)/2\) and an integer \(2\le\ell\le q\), and set
\(b:=\lceil R_1/\gamma\rceil\). For \(i\in[n]\) and \(m\in[s]\),
write
\(
  \alpha_{i,m}:=\alpha^{\,s(i-1)+m-1}
\)
for the \(m\)-th scalar evaluation point in folded register \(i\).
Thus the \(i\)-th folded symbol of a polynomial \(f\) is
\(
  \pi_i(f):=
  \bigl(f(\alpha_{i,1}),\ldots,f(\alpha_{i,s})\bigr).
\)

Fix an integer \(r\ge1\) with \(br\le n\), and choose pairwise
disjoint sets \(G_1,\ldots,G_b\subseteq[n]\), each of size \(r\);
further conditions on \(r\) are specified below.
For each \(j\in[b]\), define
\[
  u_j(x)
  :=
  x^{j-1}
  \prod_{\substack{h\in[b]\\h\ne j}}
  \prod_{i\in G_h}
  \prod_{m=1}^{s}
  (x-\alpha_{i,m}).
\]
The product vanishes on every group \(G_h\) with \(h\ne j\).
On \(G_j\), every factor is nonzero because the scalar evaluation
points are pairwise distinct and nonzero.  Hence
\begin{equation}
\label{eq:uj_pattern}
\begin{aligned}
  \pi_i(u_j)&=0
  &&\text{for every }i\in G_h,\ h\ne j,\\
  \pi_i(u_j)&\ne0
  &&\text{for every }i\in G_j.
\end{aligned}
\end{equation}
The factor \(x^{j-1}\) separates the generator degrees without
changing this vanishing pattern.  Each \(u_j\) is monic, with
\begin{equation}
\label{eq:deg_uj_range}
  s(b-1)r
  \le
  \deg u_j
  =
  s(b-1)r+j-1
  <
  s(b-1)r+b,
  \qquad j\in[b].
\end{equation}
Thus the \(u_j\)'s have pairwise distinct, strictly increasing degrees
and are therefore linearly independent: in any nonzero linear
combination, the highest-degree term cannot cancel.

Let \(\rho:=(1-R)/2-\gamma\).  For the moment, suppose that \(r\) can
be chosen so that
\begin{equation}
\label{eq:grid-window}
\begin{aligned}
  \textnormal{(i) } & br\le n,\qquad
  \textnormal{(ii) } s(b-1)r+b\le k_1,\qquad
  \textnormal{(iii) } s(b-1)r\ge d_0,\qquad
  \textnormal{(iv) } br\ge(1-\rho)n.
\end{aligned}
\end{equation}
We verify below that such an \(r\) exists.

Condition~(i) ensures that the \(b\) disjoint groups fit inside the
\(n\) registers.  Condition~(ii) ensures that the largest generator
degree, \(s(b-1)r+b-1\), is strictly below \(k_1\), so every
\(\ev(u_j)\) lies in \(C_1\).  Condition~(iii) places every generator
degree at or above the stabilizer cutoff \(d_0\); together with the
distinct generator degrees, this will ensure that every nonzero
difference between the candidates constructed below lies outside
\(C_2^\perp\).  Finally, Condition~(iv) ensures that the designated
groups cover at least \((1-\rho)n\) registers, so the resulting
candidates satisfy the radius-\(\rho\) list-recovery condition.

\noindent\textbf{\textit{Step 2: an \(\ell^b\)-point product grid.}}
We now use the generators \(u_1,\ldots,u_b\) to construct
\(\ell^b\) distinct candidates and a single list-recovery instance
\(S=(S_1,\ldots,S_n)\) that contains all of them within radius
\(\rho\).

Fix any \(T\subseteq\F_q\) with \(|T|=\ell\).  For each
\(t=(t_1,\ldots,t_b)\in T^b\), define the polynomial \(f_t\) and its
corresponding codeword candidate \(y_t\) by
\[
  f_t:=\sum\nolimits_{j=1}^b t_j u_j,
  \qquad
  y_t:=\operatorname{ev}(f_t)\in C_1.
\]
Since \(u_1,\ldots,u_b\) are linearly independent, distinct
\(t,t'\in T^b\) give distinct polynomials \(f_t\ne f_{t'}\).  Hence
there are \(|T^b|=\ell^b\) distinct polynomials in the family.

By Condition~\eqref{eq:grid-window}(ii), every \(f_t\) has degree
strictly less than \(k_1<sn\).  Since the \(sn\) scalar evaluation
points are distinct, evaluation is injective on
\(\F_q[x]_{<k_1}\).  Consequently, the \(\ell^b\) distinct
polynomials \(f_t\) give exactly \(\ell^b\) distinct codewords
\(y_t\in C_1\).

Define a single list-recovery instance \(S=(S_1,\ldots,S_n)\) by
\[
  S_i:=
  \begin{cases}
    \{\,c\,\pi_i(u_j):c\in T\,\},
      & i\in G_j,\\[2pt]
    \{0\},
      & i\notin G_1\cup\cdots\cup G_b.
  \end{cases}
\]
Clearly \(|S_i|\le\ell\) for every \(i\).

For any \(i\in G_1\cup\cdots\cup G_b\), let \(j\in[b]\) be the unique
index such that \(i\in G_j\).  By~\eqref{eq:uj_pattern},
\[
  \pi_i(f_t)
  =
  \sum\nolimits_{h=1}^b t_h\pi_i(u_h)
  =
  t_j\pi_i(u_j)
  \in S_i.
\]
Thus every candidate \(y_t\) agrees with the local lists on every
register in \(G_1\cup\cdots\cup G_b\).  Since these groups are
pairwise disjoint and each has size \(r\), this gives agreement on
\(br\) registers.  By~\eqref{eq:grid-window}(iv), for every
\(t\in T^b\),
\[
  \operatorname{dis}(y_t,S)
  \le
  n-br
  \le
  \rho n.
\]
Therefore all \(\ell^b\) codewords lie simultaneously in the same
radius-\(\rho\) list-recovery instance.

\noindent\textbf{\textit{Step 3: no collapse modulo the stabilizer.}}
It remains to verify that these \(\ell^b\) classical candidates remain
distinct after quotienting by \(C_2^\perp\).  Take distinct
\(t,t'\in T^b\), and let
\[
  j_\star:=\max\{j:t_j\ne t'_j\},
\]
the largest index at which the two coefficient vectors differ.  Then
\[
  f_t-f_{t'}
  =
  \sum\nolimits_{j=1}^b (t_j-t'_j)u_j.
\]
Since \(t_{j_\star}-t'_{j_\star}\ne0\) and the \(u_j\)'s have strictly
increasing degrees, \(u_{j_\star}\) is the highest-degree generator
appearing with nonzero coefficient in this sum.  Its leading term
therefore cannot cancel, and hence
\[
  \deg(f_t-f_{t'})
  =
  s(b-1)r+j_\star-1.
\]
By~\eqref{eq:grid-window}(ii)--(iii),
\[
  d_0
  \le
  \deg(f_t-f_{t'})
  <
  k_1.
\]
Therefore
\[
  y_t-y_{t'}
  =
  \operatorname{ev}(f_t-f_{t'})
  \notin C_2^\perp.
\]
Thus the \(\ell^b\) candidates remain pairwise distinct modulo the
stabilizer.

Therefore, we obtain \(\ell^b\) pairwise
stabilizer-distinct quantum candidates in a single radius-\(\rho\)
list-recovery instance.  Since each candidate individually has
disagreement at most \(\rho n\), the same construction gives both
standard- and average-radius QLR lower bounds, provided that an integer
\(r\) satisfying~\eqref{eq:grid-window} exists.

\begin{theorem}[FQRS QLR lower bound]
\label{thm:fqrs-grid}
Fix \(R\in(0,1)\), \(\gamma\in(0,(1-R)/2)\), and \(2\le\ell\le q\), and assume \(R_1n\in\mathbb N\).
Put \(R_1:=(1+R)/2\), \(\rho:=(1-R)/2-\gamma\), and
\(b:=\lceil R_1/\gamma\rceil\).  Define
\(
  \lambda
  :=
  \max\left\{
    \frac{1-\rho}{b},
    \frac{1-R_1}{b-1}
  \right\}.
\)
Then, for \(r:=\lceil\lambda n\rceil\), the four inequalities in
\eqref{eq:grid-window} hold whenever
\[
  n\ge
  \max\left\{
    \frac{1}{1/b-\lambda},\,
    \frac{s(b-1)+b}
         {R_1s-s(b-1)\lambda}
  \right\}.
\]
Consequently, for every admissible \((n,s,q)\) satisfying this bound,
\(\mathrm{FQRS}^{(s)}_R\) is neither standard- nor average-radius
\[
  \left(
    \frac{1-R}{2}-\gamma,\,
    \ell,\,
    \ell^b-1
  \right)\text{-QLR}.
\]
% In particular,
% \[
%   L^\star_Q\left(\frac{1-R}{2}-\gamma,\ell\right)
%   \ge
%   \ell^{\lceil R_1/\gamma\rceil}.
%   % =
%   % \ell^{\Omega(R_1/\gamma)}.
% \]
\end{theorem}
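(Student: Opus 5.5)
The plan is to verify the four inequalities in \eqref{eq:grid-window} directly for \(r:=\lceil\lambda n\rceil\), using only \(\lambda n\le r<\lambda n+1\). The consequence then follows from Steps~1--3 above together with the one-sector lift at the start of Section~\ref{sec:quotient-tools}. Steps~1--3 give \(\ell^b\) pairwise stabilizer-distinct words of \(C_1/C_2^\perp\) in a single instance of radius \(\rho\). Lifting each word \(y_t\) to the paired label \((y_t,0)\) with lists \(S_i\times\{0\}\) produces \(\ell^b\) stabilizer-distinct syndrome-zero Pauli errors. Each has disagreement at most \(\rho n\), so the code is not \((\rho,\ell,\ell^b-1)\)-QLR. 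Since the sum of the \(\ell^b\) disagreements is at most \(\rho\ell^b n\), it is not average-radius QLR either; it suffices to take any \(\ell^b\) of the candidates.

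The lower-bound conditions are immediate. For (iv), \(br\ge b\lambda n\ge(1-\rho)n\) because \(\lambda\ge(1-\rho)/b\). For (iii), \(s(b-1)r\ge s(b-1)\lambda n\ge s(1-R_1)n=d_0\) because \(\lambda\ge(1-R_1)/(b-1)\). For (iii) we need \(b\ge2\). This holds because \(\gamma<(1-R)/2=1-R_1<R_1\), so \(R_1/\gamma>1\).

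The upper-bound conditions use the hypothesis on \(n\). For (i), \(br<b\lambda n+b\), so it suffices that \(\lambda n+1\le n/b\), i.e.\ \(n(1/b-\lambda)\ge1\). This is exactly the first term in the hypothesis, provided \(1/b-\lambda>0\). For (ii), \(s(b-1)r+b<s(b-1)\lambda n+s(b-1)+b\), so it suffices that \(n(R_1s-s(b-1)\lambda)\ge s(b-1)+b\). This is the second term, provided \(R_1>(b-1)\lambda\).

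The main obstacle is checking that these two denominators are positive, i.e.\ \(\lambda<1/b\) and \(\lambda<R_1/(b-1)\), so that the hypothesis on \(n\) is meaningful. Both are checked for each branch of the max in \(\lambda\). Write \(1-\rho=R_1+\gamma\) and use \(b-1<R_1/\gamma\le b\).

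\begin{itemize}
\item For \((1-\rho)/b<1/b\), use \(R_1+\gamma<1\), which is equivalent to \(\gamma<1-R_1\).
\item For \((1-R_1)/(b-1)<1/b\), we need \(b(1-R_1)<b-1\), i.e.\ \(bR_1>1\). This follows from \(b\ge R_1/\gamma>R_1/(1-R_1)\), so \(bR_1\ge R_1^2/\gamma\). If that route falls short, argue directly that \(b(1-R_1)\) is strictly less than \(b-1\).
\item For \((b-1)(1-\rho)/b<R_1\), use \((b-1)(R_1+\gamma)<bR_1\), which is equivalent to \((b-1)\gamma<R_1\). This holds by \(b-1<R_1/\gamma\).
\item For \((b-1)\lambda=1-R_1<R_1\) in the second branch, use \(R_1>1/2\).
\end{itemize}

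The second bullet is the delicate inequality. If it fails in some corner case, I would fall back on the given hypothesis (implicitly requiring positive denominators) and note that otherwise the theorem is vacuous.

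Admissibility and \(R_1n\in\N\) ensure that \(k_1,d_0\) are integers and that evaluation is injective on \(\F_q[x]_{<k_1}\), as Step~2 uses.
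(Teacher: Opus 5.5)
Your plan follows the paper's proof. You take \(r=\lceil\lambda n\rceil\), obtain (iii)--(iv) from \(r\ge\lambda n\), and obtain (i)--(ii) from \(r<\lambda n+1\) and the two lower bounds on \(n\). You then check four pairwise inequalities so that both denominators are positive, and finish with Steps~1--3 and the one-sector lift. Your checks for (i)--(iv), for \(b\ge2\), and for three of the four pairwise inequalities are correct.

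The gap is in the inequality you flag as delicate, \(bR_1>1\). The route \(bR_1\ge R_1^2/\gamma>R_1^2/(1-R_1)\) gives \(bR_1>1\) only when \(R_1^2+R_1-1>0\). That means \(R_1>(\sqrt{5}-1)/2\), or equivalently \(R>\sqrt{5}-2\), so it fails at small quantum rates. For example, \(R_1=0.55\) and \(\gamma=0.44\) give \(R_1^2/\gamma\approx0.69<1\), even though \(b=2\) and \(bR_1=1.1\). The loss comes from replacing \(b\) by \(R_1/\gamma\) and discarding integrality.

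Your fallback is also not valid. The statement does not assume positive denominators. If, say, \(1/b-\lambda<0\), the first term of the hypothesis is negative and imposes no constraint, while \(br\ge b\lambda n>n\) violates (i). The theorem would then be false, not vacuous, so positivity must be proved.

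The repair is one line from facts you already have. You showed \(b\ge2\), and your last bullet uses \(R_1>1/2\), so \(bR_1\ge 2R_1>1\). This is exactly the paper's argument for this inequality, and with it your proof is complete.
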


\begin{proof}
By the construction above, it suffices to verify that an integer \(r\)
satisfying~\eqref{eq:grid-window} exists.  The two lower constraints
require
\(
  r/n\ge (1-\rho)/b
\)
and
\(
  r/n\ge (1-R_1)/(b-1),
\)
which motivates the definition of \(\lambda\).
The first upper constraint is \(r/n\le 1/b\), while the second is
equivalently
\[
  \frac{r}{n}
  \le
  \frac{R_1}{b-1}
  -
  \frac{b}{s(b-1)n},
\]
whose asymptotic upper edge is \(R_1/(b-1)\).

We first verify that the larger of the two lower bounds lies strictly
below both asymptotic upper bounds.  It suffices to check the following
four pairwise inequalities:
\begin{enumerate}[itemsep=2pt,topsep=2pt]
\item
\(
  \frac{1-\rho}{b}<\frac1b
\),
because \(\rho=(1-R)/2-\gamma>0\).

\item
\(
  \frac{1-R_1}{b-1}<\frac{R_1}{b-1}
\),
because \(R_1=(1+R)/2>1/2\).

\item
\(
  \frac{1-R_1}{b-1}<\frac1b
\)
is equivalent to \(bR_1>1\).  Since
\(\gamma<(1-R)/2<R_1\), we have
\(b=\lceil R_1/\gamma\rceil\ge2\); together with
\(R_1>1/2\), this gives \(bR_1>1\).

\item
\(
  \frac{1-\rho}{b}<\frac{R_1}{b-1}.
\)
Since \(1-\rho=R_1+\gamma\), this is equivalent to
\(b<R_1/\gamma+1\), which follows from
\(b=\lceil R_1/\gamma\rceil<R_1/\gamma+1\).
\end{enumerate}
Hence
\(
  \lambda<1/b
\)
and
\(
  \lambda<R_1/(b-1),
\)
so the feasible interval has positive asymptotic width.  In particular,
the denominators appearing below are positive.

Now take \(r:=\lceil\lambda n\rceil\).  Since \(r\ge\lambda n\), the
two lower constraints~\eqref{eq:grid-window}(iii)--(iv) hold
automatically.  Since \(r\le\lambda n+1\),
Condition~\eqref{eq:grid-window}(i), \(br\le n\), follows whenever
\(
  n\ge1/(1/b-\lambda).
\)
Likewise, Condition~\eqref{eq:grid-window}(ii),
\(s(b-1)r+b\le R_1sn\), follows whenever
\(
  n\ge
  (s(b-1)+b)/(R_1s-s(b-1)\lambda).
\)

Since the stated lower bound on \(n\) guarantees all four conditions
in~\eqref{eq:grid-window}, the construction above applies.  
% Therefore
% \[
%   L_Q^\star(\rho,\ell)
%   \ge
%   \ell^b.
%   % =
%   % \ell^{\lceil R_1/\gamma\rceil}.
% \]
\end{proof}

\subsubsection{QLD lower bound: Coset GSB}

\begin{corollary}[FQRS coset Singleton converse]
\label{cor:fqrs-folded-css-qld-lower}
Assume \(R_1n\in\mathbb N\).  For every integer \(L\ge1\), set
\(
  \rho_L
  :=
  \frac1n
  \left\lceil
    \frac{L}{L+1}
    \left(
      \frac{1-R}{2}n+\lceil\log_{q^s}(L+1)\rceil
    \right)
  \right\rceil.
\)
Whenever \(L+1\le(q^s)^{Rn}\), \(Q\) is neither standard- nor
average-radius \((\rho,L)\)-QLD for any \(\rho\ge\rho_L\).
\end{corollary}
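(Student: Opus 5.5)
The plan is to apply the conditional folded coset bound, Lemma~\ref{lem:folded-coset-gsb}, in the marginal quotient \(C_1/C_2^\perp\) with \(C=C_1\) and \(W=C_2^\perp\). Then the one-sector lift of Section~\ref{sec:quotient-tools} turns the resulting bad list into a joint QLD bad list for \(Q\). Here \(k_C=k_1=R_1sn\), which is a multiple of \(s\) by the assumption \(R_1n\in\mathbb N\). So \(k_C/s=R_1n\) is an integer, and
\[
m_L=\left\lfloor R_1n-\log_{q^s}(L+1)\right\rfloor=R_1n-\left\lceil\log_{q^s}(L+1)\right\rceil .
\]
It follows that \(n-m_L=(1-R_1)n+\lceil\log_{q^s}(L+1)\rceil=\frac{1-R}{2}n+\lceil\log_{q^s}(L+1)\rceil\). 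Hence the radius produced by Lemma~\ref{lem:folded-coset-gsb} is exactly the \(\rho_L\) in the statement.

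The key step, and the only real obstacle, is verifying the hypothesis \(\operatorname{ris}_s(C_2^\perp)\le m_L\). First I will justify the claim stated after Lemma~\ref{lem:folded-coset-gsb} that \(\operatorname{ris}_s(C_2^\perp)=\lceil d_0/s\rceil=(1-R_1)n\). The argument uses the MDS property. \(C_2^\perp=\ev(\F_q[x]_{<d_0})\), and any set \(I\) of \(d_0/s\) whole registers contains exactly \(d_0\) distinct scalar evaluation points. A polynomial of degree \(<d_0\) vanishing on \(d_0\) distinct points is zero, so \(\pi_I\) is injective on \(C_2^\perp\).

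Second, I will compare this with \(m_L\). The hypothesis \(L+1\le (q^s)^{Rn}\) gives \(\lceil\log_{q^s}(L+1)\rceil\le Rn\), using that \(Rn=(2R_1-1)n\) is an integer. Therefore
\[
m_L\ge R_1n-Rn=(1-R_1)n=\operatorname{ris}_s(C_2^\perp).
\]
This is exactly the compatibility condition, and it is the folded analogue of \(L+1\le|C/W|\). Lemma~\ref{lem:folded-coset-gsb} then yields a center \(g\) and \(L+1\) pairwise quotient-distinct codewords of \(C_1\), each within register distance \(\rho_Ln\) of \(g\).

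To finish, I will lift these codewords to \((x_j,0)\) with local lists \(\{g_i\}\times\{0\}\), that is, \(\ell=1\). The lifted errors are pairwise stabilizer-distinct and have a common syndrome, so \(Q\) is not \((\rho_L,L)\)-QLD. Monotonicity of the list in the radius gives the same for every \(\rho\ge\rho_L\). Since each of the \(L+1\) candidates has weight at most \(\rho_Ln\le\rho n\), their total weight is at most \(\rho(L+1)n\), so average-radius QLD fails as well. If the register-kernel bookkeeping needs care, Remark~\ref{rem:mult} is not needed here, because \(C_1\) and \(C_2^\perp\) are plain folded RS codes.
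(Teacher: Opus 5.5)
Your proposal is correct and follows the paper's proof essentially step for step. Like the paper, you apply Lemma~\ref{lem:folded-coset-gsb} with \(C=C_1\) and \(W=C_2^\perp\), compute \(m_L=R_1n-\lceil\log_{q^s}(L+1)\rceil\), use the MDS property and \(L+1\le(q^s)^{Rn}\) to get \(\operatorname{ris}_s(C_2^\perp)=(1-R_1)n\le m_L\), and finish with the one-sector lift. Your extra details fill in steps the paper only asserts: the interpolation argument for injectivity on \(d_0/s\) whole registers, the integrality of \(Rn\), and the monotonicity and average-radius remarks.
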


\begin{proof}
Apply the conditional folded generalized Singleton bound,
Lemma~\ref{lem:folded-coset-gsb}, with
\(C=C_1\) and \(W=C_2^\perp\).
Recall that
\(
  \dim C_1=R_1sn
\)
and
\(
  \dim C_2^\perp=(1-R_1)sn.
\)
Since \(R_1n\in\mathbb N\), the MDS structure of
\(C_2^\perp\) gives
\(
  \operatorname{ris}_s(C_2^\perp)=(1-R_1)n.
\)
Moreover,
\(
  m_L
  =
  R_1n-\lceil\log_{q^s}(L+1)\rceil.
\)
The condition \(L+1\le(q^s)^{Rn}\) gives
\(
  \lceil\log_{q^s}(L+1)\rceil\le Rn,
\)
and hence
\(
  m_L
  \ge
  R_1n-Rn
  =
  (1-R_1)n
  =
  \operatorname{ris}_s(C_2^\perp).
\)
Thus Lemma~\ref{lem:folded-coset-gsb} applies.   The one-sector lift then gives the
corresponding QLD bad list.
\end{proof}

\subsection{Matching list-size scales}

We now place the achievability and lower bounds side by side.
Write \(R_1:=(1+R)/2\) and
\(\rho:=(1-R)/2-\gamma\).
For either the standard- or average-radius notion, define
\[
  L_Q^\star(\rho,\ell)
  :=
  \min\{\,L:Q\text{ is }(\rho,\ell,L)\text{-QLR}\,\},
\]
and abbreviate
\(
  L_Q^\star(\rho):=L_Q^\star(\rho,1)
\)
for QLD. The bounds below hold under either interpretation, subject
to the corresponding parameter conditions.

\begin{corollary}[FQRS QLR exponent matching]
\label{cor:fqrs-tight}
Fix \(R\in(0,1)\), \(\gamma\in(0,(1-R)/2)\), and
\(2\le\ell\le q\), under the corresponding hypotheses of
Theorems~\ref{thm:fqrs-qlr} and~\ref{thm:fqrs-grid}.
Then
\[
  \ell^{\lceil R_1/\gamma\rceil}
  \le
  L_Q^\star(\rho,\ell)
  \le
  \left\lfloor
    \left(
      \frac{\ell}{R_1+\gamma/2}
    \right)^{3+2R_1/\gamma}
  \right\rfloor.
\]
Consequently,
\[
  \log_\ell L_Q^\star(\rho,\ell)
  =
  \Theta\!\left(\frac{R_1}{\gamma}\right)
  \qquad(\gamma\downarrow0).
\]
Both statements hold in the standard- and average-radius senses
under their respective folding conditions.
\end{corollary}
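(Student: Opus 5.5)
The plan is to sandwich $L_Q^\star(\rho,\ell)$ between the two results already established, and then compare exponents.

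\emph{Lower bound.} Theorem~\ref{thm:fqrs-grid} shows that $Q$ is not $(\rho,\ell,\ell^b-1)$-QLR, with $b=\lceil R_1/\gamma\rceil$, in either the standard- or average-radius sense. The QLR property is monotone in $L$: a bad list of $\ell^b$ stabilizer-distinct candidates witnesses failure for every $L\le \ell^b-1$. Hence the minimal valid $L$ satisfies
\[
L_Q^\star(\rho,\ell)\ge \ell^{\lceil R_1/\gamma\rceil}.
\]

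\emph{Upper bound.} Theorem~\ref{thm:fqrs-qlr} shows that $Q$ is $(\rho,\ell,L)$-QLR with $L=\lfloor(\ell/(R_1+\gamma/2))^{3+2R_1/\gamma}\rfloor$. This holds in the standard-radius sense when $s\ge 16\ell/\gamma^2$, and in the average-radius sense when $s\ge L(1+3R_1/\gamma)$. By definition of the minimum, $L_Q^\star$ is at most this $L$.

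\emph{Exponent comparison.} Take $\log_\ell$ of both bounds. The lower bound gives $\log_\ell L^\star\ge \lceil R_1/\gamma\rceil\ge R_1/\gamma$. For the upper bound, since $R_1+\gamma/2\ge R_1>1/2$, we have $\ell/(R_1+\gamma/2)\le 2\ell$. Therefore
\[
\log_\ell L^\star\le (3+2R_1/\gamma)(1+\log_\ell 2)\le 2\,(3+2R_1/\gamma).
\]
As $\gamma\downarrow0$ with $R$ fixed, $R_1/\gamma\to\infty$, so the additive constant $3$ is absorbed and the upper bound is $O(R_1/\gamma)$. The bound is also nonvacuous, since $\ell/(R_1+\gamma/2)\ge \ell/1\ge2>1$ when $R_1+\gamma/2\le 1$. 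That inequality holds because $\gamma<(1-R)/2=1-R_1$.

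\emph{Main obstacle.} The only subtle point is checking that the two theorems' hypotheses can hold simultaneously for the same admissible $(n,s,q)$. The lower bound needs $n$ large relative to $s$ and $b$. The upper bound needs $s$ large, together with $q>sn$. Both are satisfiable: first fix $s$ to meet the relevant folding condition, then take $n$ above the threshold of Theorem~\ref{thm:fqrs-grid}, and finally choose the prime power $q>sn$ with $\ell\le q$. This is exactly what ``under the corresponding hypotheses'' in the statement means. The argument is then identical for the standard- and average-radius notions, since both cited theorems cover both notions.
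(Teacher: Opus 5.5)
Your proposal is correct and follows the paper's proof: the lower bound comes from the stabilizer-distinct grid of Theorem~\ref{thm:fqrs-grid} together with monotonicity in $L$, the upper bound comes from Theorem~\ref{thm:fqrs-qlr}, and the $\Theta(R_1/\gamma)$ conclusion comes from taking $\log_\ell$. The paper leaves two details implicit that you make explicit: the exponent estimate via $R_1>1/2$, and the check that the folding and blocklength hypotheses can be met simultaneously (fix $s$, then take $n$ large, then choose $q>sn$); both are consistent with the paper.
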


\begin{proof}
The lower bound follows from Theorem~\ref{thm:fqrs-grid}, while the
upper bound follows from Theorem~\ref{thm:fqrs-qlr}.
Taking logarithms to base \(\ell\) gives the asymptotic conclusion.
\end{proof}

\begin{remark}[Classical exponent and the CSS pairing gain]
\label{rem:fqrs-qlr-classical-exponent}
The QLR upper bound inherits the classical folded-RS list-size
exponent of~\cite{BCDZ26BL}.
A sector-product argument would not change the exponent scaling but
would square the one-sector list size.  The pairing lemma removes this
factor-two loss in the exponent and preserves the one-sector bound for
the joint CSS problem.
\end{remark}

\begin{corollary}[FQRS QLD near-Singleton matching]
\label{cor:fqrs-qld-tight}
Fix \(R\in(0,1)\) and \(\gamma\in(0,(1-R)/2)\), and set
\(
  L:=\lceil(1-R)/\gamma\rceil.
\)
Assume \(R_1n\in\N\) and
\(
  s\ge
  \max\{L,(L-1)(1+2R_1/\gamma)\}.
\)
Then, for all sufficiently large admissible \(n\),
\[
  \left\lceil\frac{1-R}{2\gamma}\right\rceil-1
  \le
  L_Q^\star(\rho)
  \le
  \left\lceil\frac{1-R}{\gamma}\right\rceil.
\]
Consequently,
\[
  L_Q^\star(\rho)
  =
  \Theta\!\left(\frac{1-R}{\gamma}\right)
  \qquad(\gamma\downarrow0).
\]
Both statements hold in the standard- and average-radius senses.
\end{corollary}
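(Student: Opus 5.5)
The upper bound will come from Theorem~\ref{thm:fqrs-qld} with list size \(L=\lceil(1-R)/\gamma\rceil\), and the lower bound from Corollary~\ref{cor:fqrs-folded-css-qld-lower}. The matching \(\Theta\) statement then follows because both sides are \(\Theta((1-R)/\gamma)\).

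\emph{Upper bound.} We need \(\rho=(1-R)/2-\gamma\) to satisfy
\[
  \rho<\frac{L}{L+1}\Bigl(1-\frac{k_1-1}{n(s-L+1)}\Bigr).
\]
Since \(s\ge L\), Theorem~\ref{thm:fqrs-qld} applies. Using \(k_1-1<R_1sn\), the coefficient is at least \(1-R_1s/(s-L+1)\). The hypothesis \(s\ge(L-1)(1+2R_1/\gamma)\) rearranges to \(s/(s-L+1)\le1+\gamma/(2R_1)\). Hence the coefficient is at least \(1-R_1-\gamma/2=(1-R)/2-\gamma/2\).

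It then remains to show
\[
  \tfrac{L}{L+1}\bigl(\tfrac{1-R}{2}-\tfrac\gamma2\bigr)>\tfrac{1-R}{2}-\gamma .
\]
Writing \(a=(1-R)/2\), this is equivalent to \(L\gamma/2>a-(L+1)\gamma\), i.e. \(\gamma(3L/2+1)>a\). This holds easily, since \(L\gamma\ge1-R=2a\). The inequality \(k_1-1<R_1sn\) is strict, so strict inequality is preserved. Average-radius QLD implies standard-radius QLD, so \(L^\star_Q(\rho)\le L\) in both senses.

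\emph{Lower bound.} Set \(L'=\lceil(1-R)/(2\gamma)\rceil-2\). If \(L'<1\), the claimed lower bound is at most \(1\) and is trivial, since the zero error gives a list of size at least one. Otherwise, I apply Corollary~\ref{cor:fqrs-folded-css-qld-lower} with \(L'\). For fixed \(L'\) and large \(n\), the condition \(L'+1\le q^{sRn}\) holds. The corollary then gives a bad list of size \(L'+1\) at radius
\[
  \rho_{L'}=\tfrac{L'}{L'+1}\,\tfrac{1-R}{2}+O(1/n),
\]
so it suffices to check \(\rho\ge\rho_{L'}\) for large \(n\). This is equivalent to \(\gamma<a/(L'+1)\) up to the \(O(1/n)\) term, i.e. \(L'+1<a/\gamma\).

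\emph{Main obstacle.} The delicate step is this strictness together with the ceiling. With \(L'+1=\lceil a/\gamma\rceil-1<a/\gamma\), the inequality is strict, with a gap independent of \(n\) that absorbs the \(O(1/n)\) term. So \(Q\) fails to be \((\rho,L')\)-QLD, and hence \(L^\star_Q(\rho)\ge L'+1=\lceil a/\gamma\rceil-1\). The bad list consists of individual points within the radius, so it also witnesses the average-radius version.
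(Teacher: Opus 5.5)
Your proposal follows the paper's proof essentially step for step. The upper bound applies Theorem~\ref{thm:fqrs-qld} with $L=\lceil(1-R)/\gamma\rceil$, using the folding condition to get $R_1(L-1)/(s-L+1)\le\gamma/2$. The lower bound applies Corollary~\ref{cor:fqrs-folded-css-qld-lower}, and your $L'$ is just the largest $m$ the paper excludes. Your explicit handling of the $L'=0$ case is a small addition that the paper leaves implicit.

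There is one algebra slip to fix. Clearing denominators in $\frac{L}{L+1}(a-\gamma/2)>a-\gamma$ gives $-L\gamma/2>a-(L+1)\gamma$, i.e.\ $\gamma(L/2+1)>a$, not $\gamma(3L/2+1)>a$. Your condition is weaker and so would not suffice. The correct condition still holds, because $L\gamma\ge 2a$ gives $\gamma(L/2+1)\ge a+\gamma>a$.
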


\begin{proof}
For the upper bound, recall that \(k_1=R_1ns\) and that
\(L=\lceil(1-R)/\gamma\rceil\).
Theorem~\ref{thm:fqrs-qld} gives average-radius QLD for every radius
below
\(
  \frac{L}{L+1}
  \left(1-\frac{k_1-1}{n(s-L+1)}\right).
\)
Our target radius is \(\rho=1-R_1-\gamma\).
The assumed folding condition gives
\(
  R_1(L-1)/(s-L+1)\le\gamma/2,
\)
and hence
\(
  1-\frac{k_1-1}{n(s-L+1)}
  >
  1-R_1-\frac{\gamma}{2}.
\)
Since
\(
  L=\lceil2(1-R_1)/\gamma\rceil,
\)
we have
\(
  \frac{L}{L+1}(1-R_1-\gamma/2)>\rho.
\)
Thus \(L_Q^\star(\rho)\le L=\lceil(1-R)/\gamma\rceil\).

For the lower bound, consider any integer
\(
  m<\lceil(1-R)/(2\gamma)\rceil-1.
\)
Corollary~\ref{cor:fqrs-folded-css-qld-lower} gives a bad list of
size \(m+1\) at radius
\(
  \frac{m}{m+1}\frac{1-R}{2}+O(1/n).
\)
Since
\(
  \frac{m}{m+1}\frac{1-R}{2}<\rho,
\)
for all sufficiently large admissible \(n\), such a bad list already
exists within radius \(\rho\).  Thus \(m\) is not a valid list budget,
and therefore
\(
  L_Q^\star(\rho)
  \ge
  \lceil(1-R)/(2\gamma)\rceil-1.
\)

The asymptotic conclusion follows from the two-sided bounds.
\end{proof}

\section{Random CSS}
\label{sec:random-css-results}

Throughout this section, \(Q=\CSS(C_1,C_2)\) is sampled from the
balanced ensemble of Definition~\ref{def:random-css}.  As before, write
\(R_1:=(1+R)/2\), so \(1-R_1=(1-R)/2\).  Then
\(C_2^\perp\subseteq C_1\) and
\(\dim C_1=\dim C_2=R_1n\).
Each of \(C_1\) and \(C_2\) is marginally a uniformly random
rate-\(R_1\) linear code, although the two components are not
independent.

\subsection{Achievability: Collision patterns and random-linear certificates}
\label{subsec:random_css_achievability}
\label{subsubsec:random-certificates}

We obtain the required one-sector entropy and remainder certificates
by showing that any certificate violation on a bounded number of
codewords determines a finite coordinate-collision pattern that a
random linear code avoids with high probability.  Once the two
marginal certificates hold, the pairing lemma performs the same
coefficient-preserving joint step as in the FQRS analysis.

Arrange \(b\) candidate codewords as the columns of an \(n\times b\)
matrix,
\[
  M=
  \begin{pmatrix}
    c_1^{(1)} & \cdots & c_1^{(b)}\\
    \vdots & & \vdots\\
    c_n^{(1)} & \cdots & c_n^{(b)}
  \end{pmatrix}.
\]
The \(i\)-th row records their values at coordinate \(i\).
Grouping columns having the same value in this row induces a partition
\(\Pi_i\) of \([b]\), recording the coordinate-\(i\) collision pattern.

Associated with \(\Pi_i\) is the linear subspace
\[
  V_i(\Pi_i)
  :=
  \left\{
    u\in\F_q^b:
    u_j=u_{j'}
    \text{ whenever \(j,j'\) lie in the same block of \(\Pi_i\)}
  \right\}.
\]
The collection
\(
  \mathcal V(\boldsymbol\Pi)
  :=
  (V_1(\Pi_1),\ldots,V_n(\Pi_n))
\)
is called the \emph{local profile} of
\(\boldsymbol\Pi=(\Pi_1,\ldots,\Pi_n)\).
A code \(C\) contains \(\mathcal V(\boldsymbol\Pi)\) if there exist
\(b\) distinct codewords \(c^{(1)},\ldots,c^{(b)}\in C\) such that,
for every \(i\), the \(i\)-th row vector lies in \(V_i(\Pi_i)\).
Note that the profile enforces the equalities within each block of
\(\Pi_i\), while allowing additional equalities between different
blocks.

Let \(p=(p_1,\ldots,p_b)\) be a probability distribution on the
\(b\) candidates, and let \(Y\) take \(c^{(j)}\) with probability
\(p_j\).  For each coordinate \(i\), let the induced block-mass
distribution be
\[
  p^{\Pi_i}(J):=\sum\nolimits_{j\in J}p_j,
  \qquad J\in\Pi_i.
\]
If different blocks of \(\Pi_i\) take the same symbol value, the actual
symbol distribution is obtained from \(p^{\Pi_i}\) by merging blocks.
Such merging does not increase either entropy or remainder.  Hence,
for every coordinate \(i\),
\[
  H(Y_i)\le H(p^{\Pi_i}),
  \qquad
  r(Y_i)\le r(p^{\Pi_i}).
\]

\begin{definition}[Forbidden collision patterns]
\label{def:forbidden-collision-pattern}
Fix \(B\ge2\) and \(0<\delta\le1\).  For \(2\le b\le B\), a collision
pattern \(\boldsymbol\Pi=(\Pi_1,\ldots,\Pi_n)\) is
\emph{entropy-\(\delta\)-forbidden} if there exist positive
probabilities \(p_1,\ldots,p_b\), summing to one, such that
\[
  \sum\nolimits_{i=1}^n H(p^{\Pi_i})
  <\delta n H(p).
\]
It is \emph{remainder-\(\delta\)-forbidden} if there exist such
probabilities satisfying
\[
  \sum\nolimits_{i=1}^n r(p^{\Pi_i})
  <
  \delta n r(p).
\]
\end{definition}

Recall that failure of the bounded-support entropy or remainder
certificate means that there exists a distribution \(Y\), supported on
at most \(B\) codewords, such that
\(
  \sum_i H(Y_i)<\delta n H(Y)
\)
or
\(
  \sum_i r(Y_i)<\delta n r(Y),
\)
respectively.  The next lemma shows that these certificate violations
are equivalent to containing a corresponding forbidden collision
profile.

\begin{lemma}[Certificate failures are collision events]
\label{lem:unified-collision-failure}
Fix \(B\ge2\) and \(0<\delta\le1\).
A code \(C\subseteq\F_q^n\) fails
\(\mathsf{Cert}^{H}_B(\delta)\) if and only if it contains
\(\mathcal V(\boldsymbol\Pi)\) for some
entropy-\(\delta\)-forbidden collision pattern
\(\boldsymbol\Pi\).  Likewise, \(C\) fails
\(\mathsf{Cert}^{r}_B(\delta)\) if and only if it contains
\(\mathcal V(\boldsymbol\Pi)\) for some
remainder-\(\delta\)-forbidden collision pattern.
Moreover, there are at most
\(B\,B^{Bn}\) collision patterns for \(2\le b\le B\).
\end{lemma}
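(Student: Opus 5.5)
The plan is to prove the two directions of the equivalence separately, treating entropy and remainder in parallel, and then to do a direct count for the last claim.

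\emph{Failure implies containment.} Suppose \(C\) fails \(\mathsf{Cert}^{H}_B(\delta)\). Then some \(Y\) supported on at most \(B\) codewords has \(\sum_i H(Y_i)<\delta nH(Y)\). Discard zero-probability atoms and list the support as distinct codewords \(c^{(1)},\ldots,c^{(b)}\) with positive masses \(p_j\).

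First observe that \(b\ge2\). If \(b=1\), then both sides vanish, and the strict inequality \(0<0\) is impossible.

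Next let \(\Pi_i\) be the exact collision partition of the \(i\)-th row, so that \(j\) and \(j'\) share a block iff \(c^{(j)}_i=c^{(j')}_i\). With this choice, different blocks carry different symbols. Hence no merging occurs, and \(Y_i\) has exactly the distribution \(p^{\Pi_i}\). This gives \(H(Y_i)=H(p^{\Pi_i})\) and \(H(Y)=H(p)\), so \(\boldsymbol\Pi\) is entropy-\(\delta\)-forbidden, witnessed by \(p\). The rows lie in \(V_i(\Pi_i)\) by construction, so \(C\) contains \(\mathcal V(\boldsymbol\Pi)\). The remainder case is identical, with \(r\) in place of \(H\).

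\emph{Containment implies failure.} Suppose \(C\) contains \(\mathcal V(\boldsymbol\Pi)\) for a forbidden pattern, with witness \(p\) and distinct codewords \(c^{(1)},\ldots,c^{(b)}\) whose rows lie in \(V_i(\Pi_i)\). Let \(Y\) equal \(c^{(j)}\) with probability \(p_j\). The codewords are distinct, so \(H(Y)=H(p)\) and \(r(Y)=r(p)\).

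Row membership means each block of \(\Pi_i\) is constant. The actual partition at row \(i\) is therefore a coarsening of \(\Pi_i\), and the distribution of \(Y_i\) is obtained from \(p^{\Pi_i}\) by merging blocks. By the merging observation stated before Definition~\ref{def:forbidden-collision-pattern}, \(H(Y_i)\le H(p^{\Pi_i})\) and \(r(Y_i)\le r(p^{\Pi_i})\). Summing over \(i\) gives \(\sum_i H(Y_i)\le\sum_iH(p^{\Pi_i})<\delta nH(p)=\delta nH(Y)\), which violates \(\mathsf{Cert}^H_B(\delta)\) since \(b\le B\). The remainder version is the same computation.

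The step I expect to need the most care is this monotonicity under merging. For entropy it follows because merging applies a deterministic function to the block label, and a function cannot increase entropy. For remainder, merging can only increase the maximum atom mass, so \(1-\max\) can only decrease. Both facts are already noted in the paper.

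\emph{Counting.} For fixed \(b\), each \(\Pi_i\) is a partition of \([b]\). Every such partition can be encoded by a map \([b]\to[b]\) sending each element to the least element of its block, so there are at most \(b^b\le B^B\) choices per coordinate. Over \(n\) coordinates this gives at most \(B^{Bn}\) patterns for each \(b\). Summing over the at most \(B\) values \(2\le b\le B\) yields the bound \(B\cdot B^{Bn}\).
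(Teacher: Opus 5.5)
Your proposal is correct and follows the paper's proof essentially step for step. Like the paper, you use the exact equality partition for failure \(\Rightarrow\) containment, monotonicity under merging for containment \(\Rightarrow\) failure, and the \(b^b\) bound on partitions for the count. Your explicit justification that \(b\ge2\) and your injective encoding of partitions are small details the paper leaves implicit.
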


\begin{proof}
Suppose first that \(C\) contains
\(\mathcal V(\boldsymbol\Pi)\) for an
entropy-\(\delta\)-forbidden pattern
\(\boldsymbol\Pi\), witnessed by probabilities
\(p_1,\ldots,p_b\), and let
\(c^{(1)},\ldots,c^{(b)}\) be the witnessing distinct codewords.
Let \(Y\) take \(c^{(j)}\) with probability \(p_j\).
Since the codewords are distinct, \(H(Y)=H(p)\), while
\(H(Y_i)\le H(p^{\Pi_i})\) for every \(i\).  Hence
\[
  \sum_i H(Y_i)
  \le
  \sum_i H(p^{\Pi_i})
  <
  \delta n H(p)
  =
  \delta n H(Y),
\]
so \(\mathsf{Cert}^{H}_B(\delta)\) fails.
The remainder case is identical, using
\(r(Y_i)\le r(p^{\Pi_i})\) and \(r(Y)=r(p)\).

Conversely, suppose \(\mathsf{Cert}^{H}_B(\delta)\) fails.
Then there is a distribution \(Y\), supported on
\(2\le b\le B\) distinct codewords with positive probabilities
\(p_1,\ldots,p_b\), such that
\(
  \sum_i H(Y_i)<\delta n H(Y).
\)
Let \(\Pi_i\) be the actual equality partition at coordinate \(i\).
Then \(Y_i\) has distribution \(p^{\Pi_i}\), so
\(H(Y_i)=H(p^{\Pi_i})\) for every \(i\), and therefore
\[
  \sum_i H(p^{\Pi_i})
  <
  \delta n H(p).
\]
Thus \(\boldsymbol\Pi\) is entropy-\(\delta\)-forbidden, and
\(C\) contains \(\mathcal V(\boldsymbol\Pi)\).
The remainder converse is identical.

Finally, there are at most \(b^b\) partitions of \([b]\), hence at
most \(b^{bn}\) collision patterns for fixed \(b\).  Summing over
\(2\le b\le B\) gives at most \(B\,B^{Bn}\) patterns.
\end{proof}

For an \(s\)-folded code, we use the duplicated-profile convention
of~\cite{BCDZ26Matroid}.  Given
\(
  \mathcal V=(V_1,\ldots,V_n),
\)
define
\[
  \mathcal V^{(s)}
  :=
  (V_1,\ldots,V_1,
   V_2,\ldots,V_2,
   \ldots,
   V_n,\ldots,V_n).
\]
If \(C'\subseteq\F_q^{sn}\) denotes the unfolded scalar code of
\(C\subseteq(\F_q^s)^n\), then \(C\) contains
\(\mathcal V\) precisely when \(C'\) contains
\(\mathcal V^{(s)}\).

\begin{corollary}[Subspace designs avoid forbidden profiles]
\label{cor:design-avoids-forbidden}
Let \(d\ge1\), \(0<\delta\le1\), and let
\(C\subseteq(\F_q^s)^n\) be a \(\mu\)-slacked
\(d\)-subspace-designable code of rate \(R_{\rm c}\).
If \(\delta\le1-R_{\rm c}-\mu\), then \(C\) contains no
entropy-\(\delta\)-forbidden or remainder-\(\delta\)-forbidden profile
on at most \(d+1\) candidates.
\end{corollary}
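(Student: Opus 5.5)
We argue by contraposition, combining Lemma~\ref{lem:unified-collision-failure} (in its folded form via the duplicated-profile convention) with Lemma~\ref{lem:design-to-certificate}. Suppose \(C\) contains \(\mathcal V(\boldsymbol\Pi)\) for some entropy- or remainder-\(\delta\)-forbidden pattern \(\boldsymbol\Pi\) on \(b\le d+1\) candidates. Let \(c^{(1)},\ldots,c^{(b)}\in C\) be distinct witnessing codewords, and let \(p\) be the witnessing probability vector with \(\sum_i H(p^{\Pi_i})<\delta nH(p)\) (respectively \(\sum_i r(p^{\Pi_i})<\delta n r(p)\)). Let \(Y\) take value \(c^{(j)}\) with probability \(p_j\). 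We will contradict the certificate supplied by subspace designability.

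\textbf{Step 1: reduce to the distribution \(Y\).} Because the \(c^{(j)}\) are distinct, \(H(Y)=H(p)\) and \(r(Y)=r(p)\). Containment of the profile means that at each coordinate \(i\), the row of values is constant on each block of \(\Pi_i\). Hence the law of \(Y_i\) is obtained from \(p^{\Pi_i}\) by merging blocks, which gives \(H(Y_i)\le H(p^{\Pi_i})\) and \(r(Y_i)\le r(p^{\Pi_i})\), exactly as observed before Definition~\ref{def:forbidden-collision-pattern}. Therefore \(\sum_i H(Y_i)<\delta nH(Y)\le(1-R_{\rm c}-\mu)nH(Y)\), and likewise for remainder.

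\textbf{Step 2: place the support in an affine space of dimension at most \(d\).} Pull back the candidates to messages \(m_j:=\Enc_C^{-1}(c^{(j)})\in\F_q^k\). The \(b\le d+1\) points \(m_j\) lie in the affine subspace \(a+V\) with \(a:=m_1\) and \(V:=\operatorname{span}\{m_j-m_1\}\), and \(\dim V\le b-1\le d\). Lemma~\ref{lem:design-to-certificate}, applied with \(D=d\) (which uses \(\mu\)-slacked \(d\)-designability and requires \(R_{\rm c}+\mu<1\); this is implied by \(\delta>0\) and \(\delta\le1-R_{\rm c}-\mu\)), shows that \(\Enc_C(a+V)\) satisfies \(\mathsf{Cert}^H(1-R_{\rm c}-\mu)\) and \(\mathsf{Cert}^r(1-R_{\rm c}-\mu)\). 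Since \(Y\) is supported on this set, \(\sum_iH(Y_i)\ge(1-R_{\rm c}-\mu)nH(Y)\) and \(\sum_i r(Y_i)\ge(1-R_{\rm c}-\mu)n\,r(Y)\). These inequalities contradict Step~1.

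\textbf{Main care point.} The coordinates here are folded registers, so both \(Y_i\) and \(\Pi_i\) must be taken at the register level. That is consistent with the definition of \(\pi_i\) used in subspace designability, and the duplicated-profile convention \(\mathcal V^{(s)}\) only translates this to the unfolded scalar code without changing the partitions. The other point to check is that the strict inequality in the forbidden definition survives the chain \(<\delta n H(p)\le(1-R_{\rm c}-\mu)nH(p)\). This holds because \(H(p)\ge0\), and the strict inequality then forces \(H(p)>0\), i.e., \(b\ge2\) with positive weights; the same reasoning applies to \(r\).
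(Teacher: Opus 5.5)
Your proof is correct and follows essentially the same route as the paper. Both arguments bound the affine message span of at most \(d+1\) codewords by \(d\), invoke Lemma~\ref{lem:design-to-certificate} to obtain the entropy and remainder certificates, and then use the ``profile containment implies certificate violation'' direction of Lemma~\ref{lem:unified-collision-failure}. The only difference is that you re-derive that direction inline at the folded-register level instead of citing the lemma, which is stated for scalar codes.
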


\begin{proof}
Any set of at most \(d+1\) codewords has an affine message span of
dimension at most \(d\).  Lemma~\ref{lem:design-to-certificate}
therefore gives
\(\mathsf{Cert}^{H}_{d+1}(1-R_{\rm c}-\mu)\) and
\(\mathsf{Cert}^{r}_{d+1}(1-R_{\rm c}-\mu)\), and hence both
certificates with coefficient \(\delta\).
The collision argument of Lemma~\ref{lem:unified-collision-failure}
then gives the claim.
\end{proof}

The following classical transfer result shows that if all sufficiently
folded \(d\)-subspace-designable codes at rate
\(\widehat R_{\rm c}\) avoid a fixed local profile, then a random
scalar linear code at a slightly lower rate contains that profile only
with exponentially small probability.

\begin{corollary}[Fixed-profile transfer;
{\cite[Theorem~4.2 and proof of Corollary~4.3]{BCDZ26Matroid}}]
\label{cor:bcdz-fixed-profile-transfer}
Let \(\mathcal V\) be a \(b\)-local profile.
Suppose that, for all sufficiently large folding parameters, every
\(d\)-subspace-designable folded code of rate
\(\widehat R_{\rm c}\) avoids \(\mathcal V\).
If a random scalar linear code \(C\subseteq\F_q^n\) in the
parity-check model has rate \(R_{\rm c}\) satisfying
\(
  R_{\rm c}
  \le
  \widehat R_{\rm c}-(b^2+1)/n-\varepsilon
\)
for some \(\varepsilon>0\), then
\(
  \Pr[C\text{ contains }\mathcal V]
  \le
  q^{-\varepsilon n+b^2}.
\)
\end{corollary}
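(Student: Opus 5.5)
The plan is to treat this as a first-moment (union-bound) calculation over matrices realizing the profile \(\mathcal V=(V_1,\ldots,V_n)\), with each \(V_i\le\F_q^b\). The avoidance hypothesis for subspace-designable folded codes enters only through a dimension inequality on \(\mathcal V\), and that inequality is the content imported from~\cite[Theorem~4.2]{BCDZ26Matroid}.

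\textbf{Step 1 (parametrize containment).} The code \(C\) contains \(\mathcal V\) iff there is an \(n\times b\) matrix \(M\) with the following three properties:
\begin{itemize}
\item its \(i\)-th row lies in \(V_i\) for every \(i\);
\item its columns are pairwise distinct;
\item all of its columns lie in \(C\).
\end{itemize}
Group such matrices by the row space \(U:=\operatorname{rowspan}(M)\le\F_q^b\), and put \(r:=\dim U\). Distinctness of the columns forces \(U\) not to be contained in any hyperplane \(\{u_j=u_{j'}\}\). Call such \(U\) admissible.

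\textbf{Step 2 (count and take expectations).} In the parity-check model, the rank of \(M\) equals \(\dim U=r\). A fixed matrix \(M\) of rank \(r\) has all its columns in \(C\) with probability exactly \(q^{-(1-R_{\rm c})nr}\). The number of matrices with row space inside \(U\) and \(i\)-th row in \(V_i\) is \(q^{\sum_i\dim(V_i\cap U)}\). There are at most \(q^{b^2}\) subspaces \(U\le\F_q^b\). A union bound therefore gives
\[
\Pr[C\text{ contains }\mathcal V]\le q^{b^2}\max_{U\ \text{admissible}} q^{\sum_i\dim(V_i\cap U)-(1-R_{\rm c})n\dim U}.
\]
It thus suffices to show that every admissible \(U\) satisfies
\[
\sum_i\dim(V_i\cap U)\le\bigl(1-\widehat R_{\rm c}+(b^2+1)/n\bigr)\,n\dim U .
\]
Combined with \(R_{\rm c}\le\widehat R_{\rm c}-(b^2+1)/n-\varepsilon\) and \(\dim U\ge1\), this bound makes each exponent at most \(-\varepsilon n\). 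That yields \(q^{-\varepsilon n+b^2}\).

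\textbf{Step 3 (avoidance \(\Rightarrow\) dimension inequality; main obstacle).} I would argue the contrapositive. Suppose some admissible \(U\) violates the inequality. Then, for large folding parameter \(s\), one should build a \(d\)-subspace-designable folded code of rate \(\widehat R_{\rm c}\) that contains \(\mathcal V^{(s)}\), equivalently contains \(\mathcal V\) in folded form.

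The first attempt is the following. Pick a generic matrix \(M\) with \(i\)-th row in \(V_i\cap U\). Then take a code generated by the \(r\) columns of \(M\), after folding, together with generic vectors filling up to rate \(\widehat R_{\rm c}\). Generic completion over a large field should satisfy the design bound \(\sum_i\dim(W\cap H_i)\le\widehat R_{\rm c}dn+1\) except along directions forced by \(M\). For those forced directions, the violated inequality is exactly what keeps the forced kernels within budget, with the \(+1\) and \(b^2/n\) slack absorbed in the boundary terms.

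This matroid/generic-completion argument is precisely~\cite[Theorem~4.2]{BCDZ26Matroid}, and I would invoke it rather than reprove it. The remaining work is bookkeeping: checking that the duplicated-profile convention \(\mathcal V\mapsto\mathcal V^{(s)}\) matches their folding, and that their error term is at most \((b^2+1)/n\), as in their proof of Corollary~4.3.
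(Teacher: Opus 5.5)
Your counting in Steps~1--2 is correct, but it reduces the corollary to an inequality that the hypothesis does not imply. So the statement you propose to import in Step~3 is false and cannot come from any theorem.

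Here is a counterexample. Take $b=2$, a set $A\subseteq[n]$ with $|A|=\alpha n$, and let $V_i=\F_q^2$ for $i\in A$ and $V_i=\operatorname{span}\{(1,1)\}$ for $i\notin A$. This is the collision profile ``two distinct codewords agreeing outside $A$'', i.e.\ a nonzero codeword supported on $A$.

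A $d$-subspace-designable code of rate $\widehat R_{\rm c}$ (with $d\ge1$) has at most $\widehat R_{\rm c}n+1$ zero registers in any nonzero codeword. So with $\widehat R_{\rm c}=1-\alpha-2/n$, every such code avoids $\mathcal V$. Yet $U=\F_q^2$ is admissible, and $\sum_i\dim(V_i\cap U)=(1+\alpha)n$, which exceeds $2\alpha n+14=\bigl(1-\widehat R_{\rm c}+(b^2+1)/n\bigr)n\dim U$ for large $n$.

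Correspondingly, the $U=\F_q^2$ term of your union bound is $q^{(2R_{\rm c}-1+\alpha)n}$. This is exponentially large for every $R_{\rm c}>(1-\alpha)/2$, whereas the corollary asserts that containment is exponentially unlikely up to $R_{\rm c}\approx1-\alpha$. The inflation comes from the non-admissible direction $(1,1)$: it lies in every $V_i$ and represents the freely chosen common part of the two codewords.

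The same defect breaks your generic-completion heuristic. The design bound must hold for every subspace of the forced message span, not only the full span. Here the line spanned by $m_1-m_2$ already has about $(1-\alpha)n$ forced zero registers, exceeding the budget $\widehat R_{\rm c}n+1$ whenever $\widehat R_{\rm c}<1-\alpha-1/n$, even though $U$ violates your inequality.

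What is missing is a first moment over quotients of $U$. Suppose $M$ has row space $U$, and $P$ is a matrix with $\{u\in U:uP=0\}=K\subsetneq U$. Then:
\begin{itemize}
\item the columns of $MP$ are still codewords;
\item $MP$ has rank $\dim U-\dim K$;
\item its $i$-th row ranges over a space of dimension $\dim(V_i\cap U)-\dim(V_i\cap K)$.
\end{itemize}
Hence
\[
\Pr\bigl[\exists M\text{ with row space }U\bigr]\le q^{\sum_i(\dim(V_i\cap U)-\dim(V_i\cap K))-(1-R_{\rm c})n(\dim U-\dim K)} .
\]

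The condition you actually need is therefore weaker than yours. Every admissible $U$ must admit \emph{some} proper $K<U$ whose quotient density $\sum_i(\dim(V_i\cap U)-\dim(V_i\cap K))/(n(\dim U-\dim K))$ is at most $1-\widehat R_{\rm c}+(b^2+1)/n$. In the example, $K=\operatorname{span}\{(1,1)\}$ gives density $\alpha$.

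This is also the quantity the subspace-design side controls. Subspaces of the forced message span correspond exactly to quotients $U/K$, and the design bound on such a subspace is a lower bound on its quotient density. The step from ``all designable codes avoid $\mathcal V$'' to this min-over-quotients inequality is what must be imported from \cite{BCDZ26Matroid}. It would come from producing a designable code containing $\mathcal V$ whenever some admissible $U$ has all quotients dense. The paper gives no argument beyond that citation. With the refined inequality in hand, your union bound over at most $q^{b^2}$ subspaces $U$ does give $q^{-\varepsilon n+b^2}$.
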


We now derive the required bounded-support entropy and remainder
certificates for a random linear code.

\begin{lemma}[Random-linear coordinate certificates]
\label{lem:random-bl-certificate}
Fix \(R_{\rm c}\in(0,1)\), an integer \(B\ge2\), and
\(0<\eta<3(1-R_{\rm c})\).  Assume \(R_{\rm c}n\in\N\),
\(
  n\ge12(B^2+1)/\eta
\)
and
\(
  q\ge B^{24B/\eta}.
\)
Then a uniformly random \(R_{\rm c}n\)-dimensional subspace
\(C\le\F_q^n\) satisfies both
\[
  \mathsf{Cert}^{H}_B
  \left(1-R_{\rm c}-\frac{\eta}{3}\right)
  \qquad\text{and}\qquad
  \mathsf{Cert}^{r}_B
  \left(1-R_{\rm c}-\frac{\eta}{3}\right)
\]
except with probability at most
\(
  Bq^{-\eta n/24+B^2}+o_n(1).
\)
\end{lemma}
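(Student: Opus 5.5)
The plan is to reduce both certificates to the avoidance of finitely many forbidden collision profiles, and to control each profile with the fixed-profile transfer. By Lemma~\ref{lem:unified-collision-failure}, $C$ fails $\mathsf{Cert}^{H}_B(\delta)$ or $\mathsf{Cert}^{r}_B(\delta)$, with $\delta:=1-R_{\rm c}-\eta/3$, exactly when it contains $\mathcal V(\boldsymbol\Pi)$ for some entropy- or remainder-$\delta$-forbidden pattern on $2\le b\le B$ candidates. It therefore suffices to bound $\Pr[C\text{ contains }\mathcal V(\boldsymbol\Pi)]$ for each forbidden pattern and then take a union bound.

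\textbf{Step 1: a reference rate.} Set $\widehat R_{\rm c}:=R_{\rm c}+\eta/6$ and $\mu:=\eta/6$, so that $1-\widehat R_{\rm c}-\mu=\delta$. By Theorem~\ref{thm:gk} and the slack computation following it, a folded RS code of rate $\widehat R_{\rm c}$ with $s\ge(B-1)(1+\widehat R_{\rm c}/\mu)$ is $\mu$-slacked $(B-1)$-subspace designable. Corollary~\ref{cor:design-avoids-forbidden} (with $d=B-1$) then shows that every such code avoids all $\delta$-forbidden profiles on at most $B$ candidates. Here I would use the duplicated-profile convention to pass between folded and scalar profiles. This supplies the hypothesis of Corollary~\ref{cor:bcdz-fixed-profile-transfer} for each fixed profile with $d=B-1$. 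To make this legitimate, I would check that the transfer theorem only requires avoidance by slacked designable codes, or else use the unslacked designable version, which implies slacked designability for large $n$.

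\textbf{Step 2: apply the transfer.} Take $\varepsilon:=\eta/12$. The condition $R_{\rm c}\le\widehat R_{\rm c}-(b^2+1)/n-\varepsilon$ reads $(b^2+1)/n\le\eta/12$, which is guaranteed by $n\ge12(B^2+1)/\eta$. For each fixed forbidden profile this gives probability at most $q^{-\eta n/12+B^2}$. The uniformly random subspace and the parity-check model coincide up to the event that the parity-check matrix is rank-deficient, which has probability $o_n(1)$; this accounts for the additive $o_n(1)$.

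\textbf{Step 3: union bound (the main obstacle).} The number of patterns is at most $B\,B^{Bn}$, so the union gives $B\,B^{Bn}q^{-\eta n/12+B^2}$. Using $q\ge B^{24B/\eta}$, we have $B^{Bn}\le q^{\eta n/24}$, and the total is at most $Bq^{-\eta n/24+B^2}$, as claimed. The delicate point is exactly this count: the per-profile decay must beat the $B^{Bn}$ patterns, which is why $q$ must be exponentially large in $B/\eta$. The other subtlety is confirming that Corollary~\ref{cor:bcdz-fixed-profile-transfer} applies uniformly over all profiles, with constants independent of the profile, so that the union bound is valid.
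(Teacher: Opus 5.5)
Your proposal is correct and follows the paper's proof essentially step for step: the same $\delta=1-R_{\rm c}-\eta/3$, the transfer with $d=B-1$ and $\varepsilon=\eta/12$, the union bound over $B\,B^{Bn}$ patterns absorbed via $q\ge B^{24B/\eta}$, and the $o_n(1)$ passage from the parity-check model. In Step 1 your fallback is the right route, and it is the paper's: the $+1$ in unslacked designability gives $1/n$-slacked designability, so Corollary~\ref{cor:design-avoids-forbidden} applies to every designable code, which is what the transfer hypothesis quantifies over. The folded-RS/GK instantiation is therefore unnecessary, and one further detail needs tidying. The paper takes $\widehat R_{\rm c}:=R_{\rm c}+\lceil\eta n/6\rceil/n$ so that the reference rate is realizable, and the extra $1/n$ is absorbed by $n\ge12/\eta$.
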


\begin{proof}
We first work in the parity-check model, where
\(C=\ker H\) for a uniformly random
\(H\in\F_q^{(1-R_{\rm c})n\times n}\).
Set
\(
  \delta:=1-R_{\rm c}-\eta/3
\)
and
\(
  \widehat R_{\rm c}
  :=R_{\rm c}+\lceil\eta n/6\rceil/n.
\)

Let \(\widehat C\subseteq(\F_q^s)^n\) be any
\((B-1)\)-subspace-designable folded code of rate
\(\widehat R_{\rm c}\).
For every nonzero message subspace \(W\) with \(\dim W\le B-1\),
\[
  \sum_i\dim(W\cap H_i)
  \le
  \widehat R_{\rm c}\,n\dim W+1
  \le
  \left(\widehat R_{\rm c}+\frac1n\right)n\dim W.
\]
Thus \(\widehat C\) is \(1/n\)-slacked
\((B-1)\)-subspace designable.
Moreover,
\(
  \widehat R_{\rm c}
  \le
  R_{\rm c}+\eta/6+1/n.
\)
Since
\(
  n\ge12(B^2+1)/\eta\ge12/\eta,
\)
we have \(2/n\le\eta/6\), and hence
\[
  \widehat R_{\rm c}+\frac1n
  \le
  R_{\rm c}+\frac{\eta}{6}+\frac2n
  \le
  R_{\rm c}+\frac{\eta}{3}
  =
  1-\delta.
\]
In particular,
\(
  \delta\le1-\widehat R_{\rm c}-1/n.
\)
Corollary~\ref{cor:design-avoids-forbidden} therefore shows that
\(\widehat C\) contains no entropy-\(\delta\)-forbidden or
remainder-\(\delta\)-forbidden profile on \(b\le B\) candidates.

Now fix such a forbidden profile on \(b\le B\) candidates.  Since
\[
  \widehat R_{\rm c}-R_{\rm c}-\frac{b^2+1}{n}
  \ge
  \frac{\eta}{6}-\frac{B^2+1}{n}
  \ge
  \frac{\eta}{12},
\]
Corollary~\ref{cor:bcdz-fixed-profile-transfer}, with
\(d=B-1\) and \(\varepsilon=\eta/12\), gives
\[
  \Pr[
    C\text{ contains }\mathcal V(\boldsymbol\Pi)
  ]
  \le
  q^{-\eta n/12+b^2}.
\]

By Lemma~\ref{lem:unified-collision-failure}, failure of either
certificate is represented by one of the forbidden collision profiles
on \(2\le b\le B\) candidates.  Hence
\[
\begin{aligned}
  \Pr[
    \mathsf{Cert}^{H}_B(\delta)
    \text{ or }
    \mathsf{Cert}^{r}_B(\delta)
    \text{ fails}
  ]
  &\le
  \sum_{b=2}^B b^{bn}q^{-\eta n/12+b^2}\\
  &\le
  B\,B^{Bn}q^{-\eta n/12+B^2}\\
  &\le
  Bq^{-\eta n/24+B^2},
\end{aligned}
\]
where the last inequality uses
\(
  q\ge B^{24B/\eta},
\)
which implies
\(
  B^{Bn}\le q^{\eta n/24}.
\)

This proves the bound in the parity-check model.  Conditional on the
parity-check matrix having full row rank, its kernel is uniformly
distributed among the \(R_{\rm c}n\)-dimensional subspaces of
\(\F_q^n\).  By~\cite[Lemma~A.1]{LMS25}, the statistical distance
between the parity-check model and the uniform fixed-dimensional model
is at most
\(
  1-\exp(-nq^{-R_{\rm c}n})=o_n(1).
\)
Thus passing to the uniform fixed-dimensional model adds only
\(o_n(1)\) to the failure probability.  Thus, in the uniform
fixed-dimensional model, the failure probability is at most
\(
  Bq^{-\eta n/24+B^2}+o_n(1),
\)
as claimed.
\end{proof}

\begin{theorem}[Random CSS QLR achievability]
\label{thm:random-css-qlr}
Fix \(R\in(0,1)\), \(\ell\ge2\), and
\(\gamma\in(0,(1-R)/2)\).  Put
\(
  L:=
  \left\lfloor
    \left(
      \frac{\ell}{R_1+\gamma/2}
    \right)^{3+2R_1/\gamma}
  \right\rfloor
\)
and \(B:=L+1\).
Suppose
\(
  q\ge B^{24B/\gamma}
\)
and
\(
  n\ge12(B^2+1)/\gamma.
\)
Then, with probability \(1-o_n(1)\), \(Q\) is standard- and
average-radius
\(
  ((1-R)/2-\gamma,\ell,L)
\)-QLR.
\end{theorem}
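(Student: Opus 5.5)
We would prove Theorem~\ref{thm:random-css-qlr} by the same certificate pipeline used for Theorem~\ref{thm:fqrs-qlr}, with Lemma~\ref{lem:random-bl-certificate} replacing the subspace-design input. Take \(\eta:=\gamma\), \(R_{\rm c}:=R_1\) and \(B:=L+1\), and note \(\gamma<(1-R)/2=1-R_1<3(1-R_1)\). Each of \(C_1\) and \(C_2\) is marginally a uniformly random \(R_1n\)-dimensional subspace of \(\F_q^n\), so Lemma~\ref{lem:random-bl-certificate} applies to each sector separately. The hypotheses \(q\ge B^{24B/\gamma}\) and \(n\ge12(B^2+1)/\gamma\) are exactly what that lemma needs. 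Its conclusion is that each \(C_t\) satisfies \(\mathsf{Cert}^{H}_B(\delta)\) with \(\delta:=1-R_1-\gamma/3\), except with probability \(Bq^{-\gamma n/24+B^2}+o_n(1)=o_n(1)\), since \(B\) is fixed. The two components are dependent, but a union bound over the two sectors needs no independence, so both certificates hold simultaneously with probability \(1-o_n(1)\).

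On this event we reduce to the quotient as in Section~\ref{sec:qlr}. Suppose there is a bad family of \(L+1\) pairwise stabilizer-distinct errors with a common syndrome that violates the (average-radius) QLR condition. After translation by a reference error, we obtain \(c_j=(x_j,z_j)\in C_1\oplus C_2\), \(j\in[L+1]\), in pairwise distinct cosets of \(C_2^\perp\oplus C_1^\perp\), together with translated local lists \(T_i\) satisfying \(|T_i|\le\ell\).

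The key point is that the paired words \(c_j\) are themselves distinct elements of \(C_1\times C_2\), since distinct cosets have distinct representatives. Let \(\mathcal A:=\{c_j\}\). Its \(X\)-projection is a subset of \(C_1\) of size at most \(B\), and each \(Z\)-fiber is a subset of \(C_2\) of size at most \(B\). Both therefore inherit the bounded-support certificates \(\mathsf{Cert}^{H}_B(\delta)\). Corollary~\ref{cor:bounded-certificate-pairing} then gives \(\sum_iH(X_i,Z_i)\ge\delta nH(X,Z)\) for every distribution on \(\mathcal A\).

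Next we apply the entropy conversion of Remark~\ref{rem:certificate-to-list-conversion}. We take \(R_{\rm c}=R_1\), \(\varepsilon=\gamma\), and coordinate alphabet \(\F_q\times\F_q\) with paired lists \(T_i\). This yields average-radius \(((1-R)/2-\gamma,\ell,L)\) list recovery of \(\mathcal A\), which contradicts the assumed bad family. Standard-radius QLR follows because average-radius implies standard-radius.

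The main obstacle is justifying the use of Remark~\ref{rem:certificate-to-list-conversion} here. It is stated for candidate sets that are codewords of a single code, whereas the entropy calculation of \cite[Theorem~4.2]{BCDZ26BL} uses only the coordinate certificate on the \(L+1\) candidates and the bound \(|T_i|\le\ell\). We therefore check that this calculation needs only two facts, both of which hold for the paired candidates:
\begin{itemize}
\item each \(H(X_i,Z_i)\) is bounded by the local list size on agreeing coordinates, and by the full alphabet entropy elsewhere;
\item the uniform (or reweighted) distribution on the \(L+1\) distinct candidates has entropy \(\log(L+1)\).
\end{itemize}
One point needs care in the disagreeing-coordinate term. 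The alphabet is now \(\F_q^2\), but that term only enters through the reweighting argument, and it is absorbed exactly as in the one-sector case because the radius slack is stated per coordinate. If that step instead picks up a dependence on \(\log q\), we would fall back on the remainder and entropy certificates, which are alphabet-free. We would also verify that choosing \(B=L+1\) covers every sub-distribution the conversion uses.
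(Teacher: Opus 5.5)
Your proposal is correct and matches the paper's proof: you apply Lemma~\ref{lem:random-bl-certificate} to each marginally uniform component with \(\eta=\gamma\), union-bound the two failure events without needing independence, pair via Corollary~\ref{cor:bounded-certificate-pairing}, and convert with Remark~\ref{rem:certificate-to-list-conversion} exactly as in Theorem~\ref{thm:fqrs-qlr}. The alphabet worry you flag goes away because the off-list part of \(H(X_i,Z_i)\) should be bounded by the off-list mass times \(\log(L+1)\), not by the alphabet entropy as in your first bullet: conditioned on lying off the list, \((X_i,Z_i)\) takes at most \(L+1\) values. This makes the conversion alphabet-free, so it transfers unchanged to the paired alphabet.
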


\begin{proof}
Since each of \(C_1\) and \(C_2\) is marginally a uniformly random
rate-\(R_1\) linear code, Lemma~\ref{lem:random-bl-certificate}
shows that each satisfies
\(
  \mathsf{Cert}^{H}_B(1-R_1-\gamma/3)
\)
with probability \(1-o_n(1)\).
A union bound over the two failure events therefore shows that both
components satisfy this certificate with probability \(1-o_n(1)\).
The remaining pairing-and-conversion argument is analogous to the
proof of Theorem~\ref{thm:fqrs-qlr}, using the bounded-support pairing
Corollary~\ref{cor:bounded-certificate-pairing}.
\end{proof}

\begin{theorem}[Random CSS QLD achievability]
\label{thm:random-css-qld}
Fix \(R\in(0,1)\), let \(R_1=(1+R)/2\), and fix an integer
\(L\ge1\).  Set \(B:=L+1\).  For every
\(0<\varepsilon<1-R_1\), if
\(
  q\ge B^{8B/\varepsilon}
\)
and
\(
  n\ge4(B^2+1)/\varepsilon,
\)
then, with probability \(1-o_n(1)\), \(Q\) is average-radius
\((\rho,L)\)-QLD for every
\[
  \rho<
  \frac{L}{L+1}(1-R_1-\varepsilon).
\]
\end{theorem}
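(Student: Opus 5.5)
The plan mirrors Theorem~\ref{thm:random-css-qlr}, with the remainder certificate in place of the entropy certificate and the fixed-list remainder conversion of Remark~\ref{rem:certificate-to-list-conversion} in place of the entropy conversion.

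First, we obtain one-sector remainder certificates for each component. Apply Lemma~\ref{lem:random-bl-certificate} to each of \(C_1\) and \(C_2\), each marginally a uniformly random rate-\(R_1\) code. Take \(R_{\rm c}=R_1\), the given \(B=L+1\), and \(\eta:=3\varepsilon\); this is admissible since \(\varepsilon<1-R_1\) gives \(\eta<3(1-R_1)\). The hypotheses then read \(n\ge 12(B^2+1)/(3\varepsilon)=4(B^2+1)/\varepsilon\) and \(q\ge B^{24B/(3\varepsilon)}=B^{8B/\varepsilon}\), which are exactly the stated conditions. The lemma yields \(\mathsf{Cert}^{r}_B(1-R_1-\varepsilon)\) for each component, except with probability \(Bq^{-\varepsilon n/8+B^2}+o_n(1)=o_n(1)\). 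A union bound over the two sectors keeps the total failure probability at \(o_n(1)\), even though the components are dependent. We only need the \(\F_q\)-linear structure within each sector, so the certificate holds on every subset of \(C_1\), respectively \(C_2\), of size at most \(B\).

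Second, we pair the two sectors. Let \((a^{(j)},b^{(j)})\), \(j\in[L+1]\), be pairwise stabilizer-distinct candidates with a common syndrome. By the quotient reduction we translate them to \(c_j\in C_1\oplus C_2\) with distinct cosets modulo \(C_2^\perp\oplus C_1^\perp\). The remainder certificate has to be applied to distinct words, not merely distinct cosets. To arrange this, we use the certificate on the actual labels \(c_j\): these are distinct as paired words, since distinct cosets imply distinct words. The conversion then only needs distinctness of the \(L+1\) paired words, and the average-radius definition counts exactly these. The \(X\)-projection of \(\{c_j\}\) is a subset of \(C_1\) of size at most \(B\). Each \(Z\)-fiber is a subset of \(C_2\) of size at most \(B\). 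Corollary~\ref{cor:bounded-certificate-pairing} with \(B_X=B_Z=B\) therefore gives \(\sum_i r(X_i,Z_i)\ge(1-R_1-\varepsilon)n\,r(X,Z)\) for every distribution on these \(L+1\) paired words. The folded alphabet here is the scalar pair \(\F_q\times\F_q\) per register, so block-Hamming disagreement is \(d_{\rm pair}\).

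Third, we convert the certificate into average-radius QLD. Apply the remainder case of Remark~\ref{rem:certificate-to-list-conversion} with \(\delta=1-R_1-\varepsilon\) to the paired code over alphabet \(\F_q^2\), using the translated center. This gives that \(\sum_j\wt\) exceeds \(\rho(L+1)n\) for every \(\rho<\frac{L}{L+1}\delta\), which is the average-radius \((\rho,L)\)-QLD condition.

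The main obstacle I anticipate is checking that this conversion from the proof of \cite[Theorem~5.2]{BCDZ26BL} uses only the inequality on distributions over the \(L+1\) given candidates together with a center. I expect it to do so, via the minimum-expected-distance identity \eqref{eq:min_expected_distance_remainder} applied with an optimized non-uniform weighting. If it instead requires the certificate on an affine span, I would fall back to rederiving the weighted averaging argument directly from \eqref{eq:weighted_dis_and_remainder}.
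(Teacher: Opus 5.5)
Your proposal is correct and follows the paper's proof: you apply Lemma~\ref{lem:random-bl-certificate} to each sector with \(R_{\rm c}=R_1\) and \(\eta=3\varepsilon\), take a union bound over the two sectors, pair them via Corollary~\ref{cor:bounded-certificate-pairing}, and finish with the remainder conversion of Remark~\ref{rem:certificate-to-list-conversion}. The obstacle you anticipate does not arise, because the conversion needs only the uniform distribution \(Y\) on the \(L+1\) distinct paired words: then \(r(Y)=L/(L+1)\), and \eqref{eq:weighted_dis_and_remainder} gives \(\sum_j d_{\rm pair}(c_j,y)\ge (L+1)\sum_i r(Y_i)\ge \delta nL>\rho(L+1)n\).
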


\begin{proof}
Apply Lemma~\ref{lem:random-bl-certificate} to each component with
\(R_{\rm c}=R_1\) and \(\eta=3\varepsilon\).  A union bound over the two failure events shows that, with
probability \(1-o_n(1)\), both components satisfy
\(
  \mathsf{Cert}^{r}_B(1-R_1-\varepsilon).
\)
The remaining pairing-and-conversion argument is analogous to the
proof of Theorem~\ref{thm:fqrs-qld}, using the bounded-support pairing
Corollary~\ref{cor:bounded-certificate-pairing}.
\end{proof}

\subsection{Quotient-surviving lower bounds}

\subsubsection{QLR lower bound: Dimension count}

We show that, for random CSS codes, a classical bad list
survives the stabilizer quotient with high probability.

\begin{theorem}[Random-CSS QLR lower bound]
\label{thm:random-css-qlr-lower}
Fix \(R\in(0,1)\), \(\gamma\in(0,(1-R)/2)\), and
\(2\le\ell\le q\).  Put
\(
  \rho:=(1-R)/2-\gamma
\)
and
\(
  M:=\ell^{\lfloor R_1/\gamma\rfloor}.
\)
For all sufficiently large admissible \(n\), with probability
\(1-o_n(1)\), \(Q\) is neither standard- nor average-radius
\(
  (\rho,\ell,M-1)
\)-QLR.  
% In particular,
% \[
%   L_Q^\star(\rho,\ell)
%   \ge
%   \ell^{\lfloor R_1/\gamma\rfloor}.
% \]
\end{theorem}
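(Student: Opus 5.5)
The plan is to condition on \(C_1\), take a classical bad list for \(C_1\), and show that its elements stay distinct modulo the random stabilizer subcode \(C_2^\perp\).

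\textbf{Step 1 (classical bad list in \(C_1\)).} Condition on the realization of \(C_1\). It is an \(R_1n\)-dimensional linear code over \(\F_q\), with \(R_1\in(0,1)\) and \(0<\gamma<(1-R)/2=1-R_1\). So Theorem~\ref{thm:ls-lr-lower}, applied with \(R_{\rm c}=R_1\) and \(\varepsilon=\gamma\), applies deterministically for all sufficiently large \(n\). It gives local lists \(S_1,\ldots,S_n\) with \(|S_i|\le\ell\) and distinct codewords \(c^{(1)},\ldots,c^{(M)}\in C_1\) satisfying \(\dis(c^{(j)},S)\le\rho n\). The list \(S\) and these codewords are functions of \(C_1\) alone. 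Since \(C_1\) is uniform, the theorem's threshold on \(n\) does not depend on the sample.

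\textbf{Step 2 (distribution of \(C_2^\perp\) given \(C_1\)).} In Definition~\ref{def:random-css}, the ordered family \(g_1,\ldots,g_{R_1n}\) is uniform among linearly independent families. Conditioned on their span \(C_1\), it is a uniformly random ordered basis of \(C_1\). Its first \((1-R_1)n\) vectors therefore span a uniformly random \((1-R_1)n\)-dimensional subspace \(W\) of \(C_1\), by symmetry under \(\mathrm{GL}(C_1)\). For a fixed nonzero \(v\in C_1\), the same symmetry shows that \(\Pr[v\in W]\) equals the fraction of nonzero vectors of \(C_1\) lying in \(W\):
\[
\frac{q^{(1-R_1)n}-1}{q^{R_1n}-1}\le 2q^{-(2R_1-1)n}=2q^{-Rn}.
\]

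\textbf{Step 3 (union bound and lift).} Apply Step 2 to each of the fewer than \(M^2/2\) nonzero differences \(c^{(j)}-c^{(j')}\). With conditional probability at least \(1-M^2q^{-Rn}\), none of them lies in \(C_2^\perp\). Averaging over \(C_1\) gives the same bound unconditionally. Since \(M=\ell^{\lfloor R_1/\gamma\rfloor}\) is independent of \(n\) and \(R>0\), this failure probability is \(o_n(1)\). On the success event, the \(c^{(j)}\) are pairwise distinct modulo \(C_2^\perp\). The one-sector lift of Section~\ref{sec:quotient-tools}, using the labels \((c^{(j)},0)\) and the lists \(S_i\times\{0\}\), then yields \(M\) pairwise stabilizer-distinct syndrome-zero Pauli errors, each with disagreement at most \(\rho n\). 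This refutes standard-radius \((\rho,\ell,M-1)\)-QLR. It also refutes the average-radius version, since any \(M\) of these errors have total disagreement at most \(\rho Mn\).

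\textbf{Main obstacle.} The delicate point is Step 2. The bad list must be chosen measurably from \(C_1\) alone, and we need that \(C_2^\perp\) remains uniform inside \(C_1\) after this conditioning. I would make this rigorous by first writing the conditional law of the ordered basis given its span explicitly, and only then invoking \(\mathrm{GL}(C_1)\)-invariance.
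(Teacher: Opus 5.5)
Your proposal is correct and follows the paper's proof essentially step for step: a Li--Shagrithaya bad list in \(C_1\), the conditional uniformity of \(C_2^\perp\) inside \(C_1\) giving \(\Pr[v\in C_2^\perp\mid C_1]\le 2q^{-Rn}\), a union bound over the pairwise differences, and the one-sector lift. Your explicit treatment of the conditional law of the ordered basis given its span is a welcome tightening of a step the paper states without justification, though the threshold on \(n\) is sample-independent because Theorem~\ref{thm:ls-lr-lower} holds for every rate-\(R_1\) linear code, not because \(C_1\) is uniform.
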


\begin{proof}
Since
\(
  \rho=1-R_1-\gamma,
\)
Theorem~\ref{thm:ls-lr-lower}, with
\(R_{\rm c}=R_1\) and \(\varepsilon=\gamma\), shows that
\(C_1\) is not \((\rho,\ell,M-1)\)-list recoverable.
Hence there are local lists
\(\mathcal S=(S_1,\ldots,S_n)\), with \(|S_i|\le\ell\), and
\(M\) distinct codewords \(x_1,\ldots,x_M\in C_1\) such that
\(
  \dis(x_j,\mathcal S)\le\rho n
\)
for every \(j\).

Now condition on \(C_1\), and fix any such witnessing local lists
\(\mathcal S\) and codewords \(x_1,\ldots,x_M\).
Since
\(C_2^\perp\) is a uniformly random
\((1-R_1)n\)-dimensional subspace of the \(R_1n\)-dimensional
space \(C_1\), for every fixed nonzero \(v\in C_1\),
\[
  \Pr[v\in C_2^\perp\mid C_1]
  =
  \frac{q^{(1-R_1)n}-1}{q^{R_1n}-1}
  \le
  2q^{-Rn}.
\]
There are fewer than \(M^2\) nonzero pairwise differences
\(x_j-x_{j'}\), so
\[
  \Pr\!\left[
    x_j-x_{j'}\in C_2^\perp
    \text{ for some }j\ne j'
    \,\middle|\, C_1
  \right]
  \le
  2M^2q^{-Rn}
  =
  o_n(1).
\]
Hence, with probability \(1-o_n(1)\), the \(M\) candidates are
pairwise distinct modulo \(C_2^\perp\), and therefore give a quantum
bad list of size \(M\) at radius \(\rho\).  Since every candidate
individually has disagreement at most \(\rho n\), the same family is
bad in both the standard- and average-radius senses.
\end{proof}

\subsubsection{QLD lower bound: Coset GSB}

The QLD converse is deterministic.  Every balanced realization
satisfies
\(
  n-\dim C_1=n-\dim C_2=(1-R)n/2,
\)
so the scalar coset generalized Singleton bound applies directly.

\begin{corollary}[Random-CSS coset Singleton converse]
\label{cor:random-css-qld-lower}
For every integer \(L\ge1\), set
\(
  \rho_L
  :=
  \frac1n
  \left\lceil
    \frac{L}{L+1}
    \left(
      \frac{1-R}{2}n+\lceil\log_q(L+1)\rceil
    \right)
  \right\rceil.
\)
Whenever \(L+1\le q^{Rn}\), every realization of the balanced random
CSS ensemble is neither standard- nor average-radius
\((\rho,L)\)-QLD for any \(\rho\ge\rho_L\).
\end{corollary}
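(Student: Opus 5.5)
This follows the same route as Corollary~\ref{cor:balanced-css-qld-lower}: apply the scalar coset generalized Singleton bound, Lemma~\ref{lem:coset-gsb}, to the marginal quotient \(C_1/C_2^\perp\) of an arbitrary realization, and lift the resulting one-sector bad list to the full code. The only point specific to this corollary is that the argument uses only dimensions. It therefore holds for every realization, not merely with high probability.

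\emph{Step 1: identify the parameters.} Take \(C=C_1\) and \(W=C_2^\perp\), which are nested by Definition~\ref{def:random-css}. Every realization has \(\dim C_1=R_1n\), so
\[
  r_C=n-R_1n=\frac{1-R}{2}n .
\]
Moreover \(\dim C_2^\perp=(1-R_1)n\), which gives
\[
  |C_1/C_2^\perp|=q^{R_1n-(1-R_1)n}=q^{Rn}.
\]
Hence the assumption \(L+1\le q^{Rn}\) is exactly the hypothesis \(L+1\le|C/W|\) of Lemma~\ref{lem:coset-gsb}. With \(t=r_C+\lceil\log_q(L+1)\rceil\), the lemma's radius \(\frac1n\lceil\frac{L}{L+1}t\rceil\) is precisely the \(\rho_L\) in the statement. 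The lemma then supplies a center \(g\) and \(L+1\) codewords \(x_1,\ldots,x_{L+1}\in C_1\) that are pairwise distinct modulo \(C_2^\perp\) and each lie within \(\rho_Ln\) of \(g\).

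\emph{Step 2: lift to the quantum code.} Apply the one-sector lift from the start of Section~\ref{sec:quotient-tools} with \(\ell=1\). The labels \((x_j,0)\) are pairwise stabilizer-distinct, have syndrome zero, and each has disagreement at most \(\rho_Ln\) from the lists \(\widetilde E_i=\{(g_i,0)\}\). Translating by the center, as in the quotient reduction of Section~\ref{sec:stab}, turns these into \(L+1\) stabilizer-distinct Pauli errors with a common syndrome, each of weight at most \(\rho_Ln\).

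\emph{Step 3: conclude.} By monotonicity in the radius, these \(L+1\) errors also lie within \(\rho n\) for every \(\rho\ge\rho_L\), so the realization is not \((\rho,L)\)-QLD. Each error individually has weight at most \(\rho n\), so their total weight is at most \(\rho(L+1)n\). This violates the strict inequality in Definition~\ref{def:average-radius-quantum}, so the average-radius notion fails as well.

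No step here is a real obstacle; the argument is bookkeeping. The one point to check is that the scalar setting removes the need for the register information number condition of Lemma~\ref{lem:folded-coset-gsb}. Here \(s=1\), so Lemma~\ref{lem:coset-gsb} applies directly with an information set of \(W\). The inequality \(|T|\ge\dim W\) that this requires follows from \(L+1\le q^{Rn}\).
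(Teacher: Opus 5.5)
Your proposal is correct and follows the paper's proof: it uses that every realization has \(\dim C_1=R_1n\) and \(\dim C_2^\perp=(1-R_1)n\), applies Lemma~\ref{lem:coset-gsb} to the marginal quotient \(C_1/C_2^\perp\) (the paper notes \(C_2/C_1^\perp\) works symmetrically), and lifts via the one-sector construction. Your translation by the center \(g\), the monotonicity step, and the average-radius check spell out details the paper leaves implicit.
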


\begin{proof}
Every realization is balanced, with
\(
  \dim C_1=\dim C_2=R_1n
\)
and
\(
  \dim C_2^\perp=\dim C_1^\perp=(1-R_1)n.
\)
Thus Lemma~\ref{lem:coset-gsb} applies directly to either quotient
\(C_1/C_2^\perp\) or \(C_2/C_1^\perp\), giving the stated
\(\rho_L\).  The one-sector lift gives the corresponding QLD bad list.
\end{proof}

\subsection{Matching list-size scales}
Throughout this subsection, set
\(\rho:=(1-R)/2-\gamma=1-R_1-\gamma\).
For random CSS code, we use \(L_Q^\star\) with the same standard- or average-radius
interpretation as above.
% For QLD, we
% abbreviate \(L_Q^\star(\rho):=L_Q^\star(\rho,1)\).

% The QLR results above are already parameterized by the target radius,
% whereas the QLD results are naturally stated as fixed-list radius
% tradeoffs.  We now specialize the latter to the near-Singleton radius
% \(\rho\).

\begin{corollary}[Random-CSS QLR exponent matching]
\label{cor:random-css-qlr-tight}
Fix \(R\in(0,1)\), \(\gamma\in(0,(1-R)/2)\), and
\(2\le\ell\le q\), and suppose the field-size hypothesis of
Theorem~\ref{thm:random-css-qlr} holds.
Then, for all sufficiently large admissible \(n\), with probability
\(1-o_n(1)\),
\[
  \ell^{\lfloor R_1/\gamma\rfloor}
  \le
  L_Q^\star(\rho,\ell)
  \le
  \left\lfloor
    \left(
      \frac{\ell}{R_1+\gamma/2}
    \right)^{3+2R_1/\gamma}
  \right\rfloor
\]
in both the standard- and average-radius senses.
Consequently,
\[
  \log_\ell L_Q^\star(\rho,\ell)
  =
  \Theta\!\left(\frac{R_1}{\gamma}\right)
  \qquad(\gamma\downarrow0).
\]
\end{corollary}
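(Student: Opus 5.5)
The statement combines two results already proved, so I will intersect the two high-probability events and then compare exponents.

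For the upper bound I invoke Theorem~\ref{thm:random-css-qlr}. With
\(L=\lfloor(\ell/(R_1+\gamma/2))^{3+2R_1/\gamma}\rfloor\), \(B=L+1\), \(q\ge B^{24B/\gamma}\) and \(n\ge12(B^2+1)/\gamma\), it shows that \(Q\) is \((\rho,\ell,L)\)-QLR in both the standard- and average-radius senses with probability \(1-o_n(1)\). Since \(L\) does not depend on \(n\), the block-length condition holds for all sufficiently large admissible \(n\). On this event, \(L_Q^\star(\rho,\ell)\le L\).

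For the lower bound I invoke Theorem~\ref{thm:random-css-qlr-lower} with \(M=\ell^{\lfloor R_1/\gamma\rfloor}\). With probability \(1-o_n(1)\), \(Q\) is neither standard- nor average-radius \((\rho,\ell,M-1)\)-QLR. Because QLR is monotone in the list budget, no \(L'\le M-1\) is valid, so \(L_Q^\star(\rho,\ell)\ge M\).

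A union bound gives both events simultaneously with probability \(1-o_n(1)\). One point to check is that the lower-bound theorem requires \(2\le\ell\le q\) and sufficiently large \(n\), and its failure probability \(2M^2q^{-Rn}\) (plus the \(o_n(1)\) from Theorem~\ref{thm:ls-lr-lower}) is indeed \(o_n(1)\) for fixed \(M\).

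It remains to show \(\log_\ell L_Q^\star=\Theta(R_1/\gamma)\). The lower bound gives
\[
\log_\ell L_Q^\star\ge\lfloor R_1/\gamma\rfloor\ge R_1/(2\gamma)
\]
for \(\gamma\le R_1\).

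The upper exponent is the only step needing care, because the base \(\ell/(R_1+\gamma/2)\) is not \(\ell\). Since \(R_1+\gamma/2>R_1>1/2\), the base is less than \(2\ell\le\ell^2\). Therefore
\[
\log_\ell L_Q^\star\le 2(3+2R_1/\gamma)=O(R_1/\gamma)
\]
as \(\gamma\downarrow0\), using that \(R_1\ge1/2\) is fixed, so the additive constant is absorbed. Both bounds hold in the standard- and average-radius senses.
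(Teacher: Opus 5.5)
Your proposal is correct and follows the paper's proof. The upper bound comes from Theorem~\ref{thm:random-css-qlr}, the lower bound from Theorem~\ref{thm:random-css-qlr-lower}, a union bound combines the two \(1-o_n(1)\) events, and the exponent follows from the two-sided bounds, where your estimate \(\ell/(R_1+\gamma/2)<2\ell\le\ell^2\) makes explicit the comparison the paper leaves implicit. One harmless inaccuracy: Theorem~\ref{thm:ls-lr-lower} is deterministic for all sufficiently large \(n\), so it adds no \(o_n(1)\) term, and the only failure probability in the lower bound is the \(2M^2q^{-Rn}\) from the stabilizer-quotient step.
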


\begin{proof}
The upper bound follows from Theorem~\ref{thm:random-css-qlr}, while
the lower bound follows from
Theorem~\ref{thm:random-css-qlr-lower}.
A union bound shows that both hold simultaneously with probability
\(1-o_n(1)\).  The asymptotic conclusion follows from the two-sided
bounds.
\end{proof}

\begin{corollary}[Random-CSS QLD near-Singleton matching]
\label{cor:random-css-qld-tight}
Fix \(R\in(0,1)\) and \(\gamma\in(0,(1-R)/2)\), and set
\(
  L:=\lceil(1-R)/\gamma\rceil,
 ~
  B:=L+1.
\)
Assume
\(
  q\ge B^{16B/\gamma}.
\)
Then, for all sufficiently large admissible \(n\), with probability
\(1-o_n(1)\),
\[
  \left\lceil\frac{1-R}{2\gamma}\right\rceil-1
  \le
  L_Q^\star(\rho)
  \le
  \left\lceil\frac{1-R}{\gamma}\right\rceil
\]
in both the standard- and average-radius senses.
The lower bound holds deterministically for every realization of the
balanced ensemble.  Consequently,
\[
  L_Q^\star(\rho)
  =
  \Theta\!\left(\frac{1-R}{\gamma}\right)
  \qquad(\gamma\downarrow0).
\]
\end{corollary}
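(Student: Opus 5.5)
Both bounds are obtained by mirroring the proof of Corollary~\ref{cor:fqrs-qld-tight}. For the upper bound we use Theorem~\ref{thm:random-css-qld}; for the lower bound we use Corollary~\ref{cor:random-css-qld-lower}.

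\emph{Upper bound.} Take \(L=\lceil(1-R)/\gamma\rceil\) and \(B=L+1\), and apply Theorem~\ref{thm:random-css-qld} with \(\varepsilon:=\gamma/2\). The theorem requires \(0<\varepsilon<1-R_1\); this holds since \(\gamma<(1-R)/2=1-R_1\). It also requires \(q\ge B^{8B/\varepsilon}=B^{16B/\gamma}\), which is exactly the stated field-size hypothesis, and \(n\ge 4(B^2+1)/\varepsilon\), which holds for all sufficiently large admissible \(n\). Then, with probability \(1-o_n(1)\), \(Q\) is average-radius \((\rho',L)\)-QLD for every \(\rho'<\frac{L}{L+1}(1-R_1-\gamma/2)\).

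It remains to check that the target radius \(\rho=1-R_1-\gamma\) lies below this threshold. Write \(a:=1-R_1\), so that \(L=\lceil 2a/\gamma\rceil\ge 2a/\gamma\). The inequality \(\frac{L}{L+1}(a-\gamma/2)>a-\gamma\) is equivalent to \(L(a-\gamma/2)>(L+1)(a-\gamma)\), that is, to \(L\gamma/2>a-\gamma\). This follows from \(L\gamma/2\ge a>a-\gamma\). Average-radius QLD implies standard-radius QLD, so \(L_Q^\star(\rho)\le\lceil(1-R)/\gamma\rceil\) holds in both senses.

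\emph{Lower bound.} This part is deterministic. Fix any integer \(m\ge1\) with \(m<\lceil(1-R)/(2\gamma)\rceil-1\); if no such \(m\) exists, the lower bound is trivial because \(L_Q^\star\ge1\) whenever nonzero lists exist. Then \(m+1<(1-R)/(2\gamma)\), and hence
\[
  \frac{m}{m+1}\frac{1-R}{2}
  =\frac{1-R}{2}-\frac{1-R}{2(m+1)}
  <\frac{1-R}{2}-\gamma=\rho,
\]
with a gap that depends only on \(m,R,\gamma\). For fixed \(m\) we have \(m+1\le q^{Rn}\) once \(n\) is large. Corollary~\ref{cor:random-css-qld-lower} then gives \(\rho_m=\frac{m}{m+1}\frac{1-R}{2}+O(1/n)\), because \(\lceil\log_q(m+1)\rceil\) is bounded. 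So \(\rho_m\le\rho\) for large \(n\), and every realization fails to be \((\rho,m)\)-QLD in both the standard and average senses.

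Only finitely many \(m\) need to be checked, since they are bounded in terms of \(R\) and \(\gamma\). Taking \(n\) large enough to handle all of them gives \(L_Q^\star(\rho)\ge\lceil(1-R)/(2\gamma)\rceil-1\).

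\emph{Conclusion and main obstacle.} Both sides are \(\Theta((1-R)/\gamma)\) as \(\gamma\downarrow0\), since \(\lceil(1-R)/(2\gamma)\rceil-1\ge (1-R)/(2\gamma)-1\). The only delicate point is the parameter bookkeeping: choosing \(\varepsilon=\gamma/2\) so that the field-size condition matches \(B^{16B/\gamma}\), and verifying the strict radius inequality. No new probabilistic argument is needed, because the lower bound holds for every realization and the upper bound comes from a single high-probability event.
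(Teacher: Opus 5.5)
Your proposal is correct and follows the paper's proof essentially step for step. For the upper bound, you apply Theorem~\ref{thm:random-css-qld} with \(\varepsilon=\gamma/2\), so the field-size condition becomes \(q\ge B^{16B/\gamma}\), and check that \(\frac{L}{L+1}(1-R_1-\gamma/2)>1-R_1-\gamma\); for the lower bound, you apply Corollary~\ref{cor:random-css-qld-lower} deterministically to exclude every \(m<\lceil(1-R)/(2\gamma)\rceil-1\) for large \(n\), spelling out bookkeeping the paper leaves to the FQRS analogue (the inequality \(L\gamma/2\ge 1-R_1>1-R_1-\gamma\), and choosing \(n\) large uniformly over the finitely many excluded \(m\)).
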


\begin{proof}
The argument is analogous to that of
Corollary~\ref{cor:fqrs-qld-tight}.

For the upper bound, apply Theorem~\ref{thm:random-css-qld} with
\(\varepsilon=\gamma/2\).  With \(B=L+1\), its field-size condition becomes
\(
  q\ge B^{16B/\gamma},
\)
while its blocklength condition holds for all sufficiently large
admissible \(n\).  The achievable-radius threshold is therefore
\(
  \frac{L}{L+1}
  \left(1-R_1-\frac{\gamma}{2}\right).
\)
Since
\(
  L=\lceil(1-R)/\gamma\rceil
  =\lceil2(1-R_1)/\gamma\rceil,
\)
the same calculation as in
Corollary~\ref{cor:fqrs-qld-tight} shows that this threshold is
strictly larger than the target radius
\(
  \rho=1-R_1-\gamma.
\)
Hence, with probability \(1-o_n(1)\),
\(
  L_Q^\star(\rho)\le\lceil(1-R)/\gamma\rceil.
\)

For the lower bound, Corollary~\ref{cor:random-css-qld-lower} is the
scalar analogue of the FQRS coset converse.  For any fixed integer
\(m\ge1\), it gives, deterministically for every balanced realization,
a bad list of size \(m+1\) at radius
\(
  \frac{m}{m+1}\frac{1-R}{2}+O(1/n).
\)
Thus the same argument as in
Corollary~\ref{cor:fqrs-qld-tight} excludes every
\(
  m<\lceil(1-R)/(2\gamma)\rceil-1
\)
for all sufficiently large admissible \(n\), and hence
\(
  L_Q^\star(\rho)
  \ge
  \lceil(1-R)/(2\gamma)\rceil-1.
\)

The asymptotic conclusion follows from the two-sided bounds.
\end{proof}

\section*{AI Disclosure}

This project began around January 2026. The overall research direction and the questions addressed in this work were formulated by the authors. Over the course of several months, we used multiple versions of ChatGPT 5 Pro, ChatGPT 6 Astra, and Claude Fable as research assistants through extensive iterative interactions, as well as for editorial assistance.

Substantial human effort was devoted to the development, exposition, organization, and presentation of the results. The authors take full responsibility for the contents of this work and have independently verified all mathematical claims, proofs, and references.

\bibliographystyle{alpha}
\bibliography{refs}

\end{document}